\begin{document}

\lefttitle{A. Charalambidis, B. Kostopoulos, Ch. Nomikos and P. Rondogiannis}

\jnlPage{1}{8}
\jnlDoiYr{2021}
\doival{10.1017/xxxxx}

\title[The Power of Negation in Higher-Order Datalog]{The Power of Negation in Higher-Order Datalog
  \thanks{This work was supported by a research project which is implemented in the framework of H.F.R.I call ``Basic research
          Financing (Horizontal support of all Sciences)'' under the National Recovery and Resilience
          Plan ``Greece 2.0'' funded by the European Union - NextGenerationEU (H.F.R.I. Project Number: 16116).}}

\begin{authgrp}
\author{%
      \gn{Angelos} \sn{Charalambidis},
      \gn{Babis} \sn{Kostopoulos}}
\affiliation{Harokopio University of Athens, Greece}
\author{%
      \gn{Christos} \sn{Nomikos}}
\affiliation{University of Ioannina, Greece}
\author{%
      \gn{Panos} \sn{Rondogiannis}}
\affiliation{National and Kapodistrian University of Athens, Greece}
\end{authgrp}

\history{\sub{xx xx xxxx;} \rev{xx xx xxxx;} \acc{xx xx xxxx}}

\maketitle
\begin{abstract}
We investigate the expressive power of Higher-Order Datalog$^\neg$
under both the well-founded and the stable model semantics, establishing tight
connections with complexity classes. We prove that under the well-founded
semantics, for all $k\geq 1$, $(k+1)$-Order Datalog$^\neg$ captures \EXPTIME[k], a result that
holds without explicit ordering of the input database. The proof of this
fact can be performed either by using the powerful existential predicate
variables of the language or by using partially applied relations
and relation enumeration. Furthermore, we demonstrate that this
expressive power is retained within a stratified fragment of the language. Under
the stable model semantics, we show that $(k+1)$-Order Datalog$^\neg$ captures
\coNEXPTIME[k] using cautious reasoning and \NEXPTIME[k] using brave reasoning,
again with analogous results for the stratified fragment augmented with choice
rules. Our results establish a hierarchy of expressive power, highlighting an
interesting trade-off between order and non-determinism in the context of
higher-order logic programming: increasing the order of programs under the
well-founded semantics can surpass the expressive power of
lower-order programs under the stable model semantics.
\end{abstract}

\begin{keywords}
Higher-Order Datalog,
Descriptive Complexity,
Well-Founded Semantics,
Stable Model Semantics.
\end{keywords}

\section{Introduction}
The superior expressive power of higher-order functional programming languages
with respect to their first-order counterparts has been thoroughly demonstrated
by~\cite{jones}. In logic programming, the first result of this type was
established by~\cite{iclp19}, where it was demonstrated that positive
Higher-Order Datalog programs can capture broader complexity classes as their
order increases. In particular, it was demonstrated that
for all $k\geq 1$, $(k+1)$-Order Datalog captures $\EXPTIME[k]$, under the assumption
that the input database is \emph{ordered}. The
aforementioned result generalized a classical expressibility theorem which
states that (first-order) Datalog captures
$\PTIME$~\citep{Pap85,Gra92,Var82,Imm86,Lei89}, again under the assumption that
the input database is ordered. Notice that the ordering assumption underlying the above results,
is actually a rather strong one because it allows (even weak) declarative query languages
to simulate the ordering of the tape of a Turing machine.

Recently,~\cite{iclp24} defined the well-founded and the stable model semantics of
Higher-Order Datalog with negation and illustrated its expressive power with
non-trivial examples. Remarkably, one such example was the Generalized Geography
two-player game, which is a well-known~\citep{LS80GOPolynomial-SpaceHard}
$\mathsf{PSPACE}$-complete problem. The examples given in \cite{iclp24} do not
seem to require any ordering of the input and solve the corresponding problems
declaratively.
As noted by~\cite{iclp24}, such examples
indicate ``\emph{that higher-order logic programming under the stable model semantics
is a powerful and versatile formalism, which can potentially form the basis of novel ASP
systems}''.
Such systems may be able to cope with demanding problems that arise in combinatorial optimization, game theory, machine learning
theory, and so on (see, for example, the discussion in~\cite{ART19BeyondNPQuantifyingoverAnswerSets,BJT16Stable-unstablesemanticsBeyondNPnormallogic}).
The results of~\cite{iclp24} trigger the natural question of the exact
characterization of the expressive power of Higher-Order Datalog with negation
and of whether we can obtain expressibility results that do not rely on the ordering
assumption.

In this paper we undertake the formal study of the expressive power of
Higher-Order Datalog with negation -- which in the rest of the paper is denoted
by Higher-Order Datalog$^\neg$. The results we obtain indeed demonstrate that
negation is a very powerful construct of the language, justifying the increased
expressiveness which was conjectured by~\cite{iclp24}. Our results, which will
be explained in detail in the rest of the paper, are presented in
Table~\ref{table1}. All the results in the table are new, with the exception of
those results concerning programs of order 1 (which are well-known, see for
example~\cite{dantsin,Niemela08}). In particular, the main contributions of the
paper can be summarized as follows:

\begin{table}
\centering
\caption{Expressive Power of Higher-Order Datalog$^\neg$ (with no ordering assumption).}
\label{table1}
{\tablefont\begin{tabular}{@{\extracolsep{\fill}}p{4.9cm}ccccccl@{}}
\topline
{\bf Fragment} & \multicolumn{1}{c}{} & {\bf Semantics}
                                          &  \multicolumn{4}{c}{\bf Order of the program} \\
\cmidrule(lr){4-7}
 & & & 1 & 2 & $\cdots$ & $k+1$
\midline
HO-Datalog$^{\neg}$ & \rdelim\}{3}{*} &  &  &  &  &  \\
HO-Datalog$^{\neg,\not\exists}$ & & Well-Founded &$\subsetneq$ \textsf{P} & \textsf{EXP} & $\cdots$ & $k$-\textsf{EXP} \\
Stratified HO-Datalog$^{\neg,\not\exists}$ & &  &  &  &  &
\midline
HO-Datalog$^{\neg}$ & \rdelim\}{3}{*} &  &  &  & &  \\
HO-Datalog$^{\neg,\not\exists}$ & & Stable (cautious) & \textsf{co}-\textsf{NP} & \textsf{co}-\textsf{NEXP} & $\cdots$ & \textsf{co}-($k$-\textsf{NEXP}) \\
Stratified+Choices HO-Datalog$^{\neg,\not\exists}$ & &  &  &  &  &
\midline
HO-Datalog$^{\neg}$ & \rdelim\}{3}{*} &  &  &  &  &  \\
HO-Datalog$^{\neg,\not\exists}$ & & Stable (brave) & \textsf{NP} & \textsf{NEXP} & $\cdots$ & $k$-\textsf{NEXP} \\
Stratified+Choices HO-Datalog$^{\neg,\not\exists}$ & &  & &  &  &
\botline
\end{tabular}}
\end{table}

\begin{itemize}
\item We establish that $(k+1)$-Order Datalog$^\neg$ captures \EXPTIME[k] under
      the well-founded semantics. Notably, this result holds without requiring a
      predefined ordering of the input database: the use of existential
      predicate variables, facilitates the
      construction of such an ordering. Furthermore, we show that even in the
      fragment of the language without existential predicate variables (denoted
      by HO-Datalog$^{\neg,\not\exists}$ in Table~\ref{table1}), the same result
      can be achieved through partial applications and an enumeration procedure.
      Perhaps even more strikingly, the \EXPTIME[k] expressibility result also
      holds for a stratified fragment of $(k+1)$-Order Datalog$^{\neg,\not\exists}$.
      This last result is especially unexpected, considering that the stratified
      fragment of classical (first-order) Datalog$^\neg$ exhibits strictly lower
      expressive power than Datalog$^\neg$ under the well-founded
      semantics~\citep{Kolaitis91}.

\item We demonstrate that $(k+1)$-Order Datalog$^\neg$ captures \coNEXPTIME[k]
      under the stable model semantics using cautious reasoning and \NEXPTIME[k]
      under brave reasoning. As before, these two results hold with or without the
      presence of existential predicate variables. Additionally,
      the two results hold within a fragment of $(k+1)$-Order Datalog$^\neg$
      that consists of programs that have a stratified part together with a
      simple unstratified part of a very specific form (\emph{choice rules}). In
      other words, we prove that the expressive power, under the stable model
      semantics, of $(k+1)$-Order Datalog$^\neg$ is equivalent to
      the power of the aforementioned restricted fragment.

\item Since it is well-known that $\EXPTIME[(k-1)] \subseteq \NEXPTIME[(k-1)] \subseteq \EXPTIME[k]$ and
      $\EXPTIME[(k-1)] \subseteq \coNEXPTIME[(k-1)] \subseteq \EXPTIME[k]$,
      $k$-Order Datalog$^\neg$ programs under the well-founded semantics are
      at most as powerful as $k$-Order Datalog$^\neg$ programs under the stable
      model semantics, which, in turn, are at most as powerful as $(k+1)$-Order
      Datalog$^\neg$ programs under the well-founded semantics (under both the brave and cautious reasoning schemes). This observation
      illustrates an interesting trade-off between order and non-determinism in
      the context of higher-order logic programming: by increasing the order of
      our programs while using well-founded semantics, we can
      surpass the expressive power provided by non-determinism in lower-order
      programs under the stable model semantics.
\end{itemize}
The rest of the paper is structured as follows: Section~\ref{preliminaries}
introduces the language we will be studying.
Section~\ref{wellfounded-simulation} derives the expressive power of
Higher-Order Datalog$^\neg$ under the well-founded semantics and
Section~\ref{stable-simulation} the power under the stable model semantics.
Section~\ref{transformation} presents a semantics-preserving transformation that
eliminates existential predicate variables from clause bodies; this
transformation implies that our results hold even without the presence of
existential predicate variables. Finally, Section~\ref{conclusions} concludes
the paper giving pointers for future work.

\section{Higher-Order Datalog with Negation: Preliminaries}\label{preliminaries}


In this section we define the syntax of Higher-Order Datalog$^\neg$.
The language uses two base types: $\bool$, the Boolean domain, and $\basedom$,
the domain of data objects. The composite types are partitioned into
\emph{predicate} ones (assigned to predicate symbols) and \emph{argument} ones
(assigned to parameters of predicates).
\begin{definition}
Types are either \emph{predicate} or \emph{argument}, denoted by $\pi$
and $\rho$ respectively, and defined as:
\begin{align*}
\pi  & := \bool \mid (\rho \to \pi) \\
\rho & := \basedom \mid \pi
\end{align*}
\end{definition}

As usual, the binary operator $\to$ is right-associative. It can be easily seen
that every predicate type $\pi$ can be written in the form
$\rho_1 \to \cdots \to \rho_n \rightarrow \bool$,
$n\geq 0$ (for $n=0$ we assume that $\pi=\bool$).
%
%

\begin{definition}
The \emph{alphabet} of Higher-Order Datalog$^\neg$ consists of: \emph{predicate variables}
of every predicate type $\pi$ (denoted by capital letters such as $\mathtt{P},\mathtt{Q},\ldots)$;
\emph{predicate constants} of every predicate type $\pi$ (denoted by lowercase letters such as $\mathtt{p},\mathtt{q},\ldots$);
\emph{individual variables} of type $\basedom$ (denoted by capital letters such as $\mathtt{X},\mathtt{Y},\ldots$);
\emph{individual constants} of type $\basedom$ (denoted by lowercase letters such as $\mathtt{a},\mathtt{b},\ldots$);
the \emph{equality}  constant $\approx$ of type $\basedom \to \basedom \to o$;
the \emph{conjunction} constant $\wedge$ of type $\bool \to \bool \to \bool$;
the \emph{inverse implication} constant $\lrule$ of type $\bool \to \bool \to \bool$; and
the \emph{negation} constant ${\tt not}$ of type $\bool \to \bool$.
\end{definition}

Arbitrary variables (either predicate or individual ones) will be denoted by $\mathsf{R}$. 

\begin{definition}
\label{def:expressions}

The \emph{expressions} and \emph{literals} of Higher-Order Datalog$^\neg$ are
defined as follows.
Every predicate variable/constant and every individual variable/constant is an
expression of the corresponding type; if $\mathsf{E}_1$ is an expression of type
$\rho \to \pi$ and $\mathsf{E}_2$ an expression of type $\rho$ then
$(\mathsf{E}_1\ \mathsf{E}_2)$ is an expression of type $\pi$.
Every expression of type $\bool$ is called an \emph{atom}.
If $\mathsf{E}$ is an atom, then $\mathsf{E}$ and $({\tt not}\, \mathsf{E})$
are literals of type $\bool$; if $\mathsf{E}_1$ and
$\mathsf{E}_2$ are expressions of type $\basedom$, then $(\mathsf{E}_1\approx
\mathsf{E}_2)$ and ${\tt not}\, (\mathsf{E}_1\approx \mathsf{E}_2)$
are literals of type $\bool$.
\end{definition}

We will omit parentheses when no confusion arises.

\begin{definition}
A \emph{rule} of Higher-Order Datalog$^\neg$ is a formula
$\mathsf{p}\ \mathsf{R}_1 \cdots \mathsf{R}_n \lrule \mathsf{L}_1 \land \ldots \land \mathsf{L}_m$,
where $\mathsf{p}$ is a predicate constant of type $\rho_1 \to \cdots \to \rho_n \to \bool$,
$\mathsf{R}_1,\ldots,\mathsf{R}_n$ are distinct variables of types $\rho_1,\ldots,\rho_n$ respectively and
the $\mathsf{L}_i$ are literals.
The literal $\mathsf{p}\ \mathsf{R}_1 \cdots \mathsf{R}_n$ is the \emph{head} of the rule and
$ \mathsf{L}_1 \land \ldots \land \mathsf{L}_m$ is the \emph{body} of the rule.
A \emph{program} $\Prog$ of Higher-Order Datalog$^\neg$ is a finite set of rules.
\end{definition}

We will follow the common logic programming practice and write
$\mathsf{L}_1,\ldots,\mathsf{L}_m$ instead of
$\mathsf{L}_1 \wedge \cdots \wedge \mathsf{L}_m$ for the body of a rule.
For brevity reasons, we will often denote
a rule as $\mathsf{p} \ \overline{\mathsf{R}} \lrule \mathsf{B}$, where
$\overline{\mathsf{R}}$ is a shorthand for a sequence of variables
$\mathsf{R}_1 \cdots \mathsf{R}_n$ and $\mathsf{B}$ represents the body of the rule.
By abuse of notation, in the programs that we will write, we will avoid using currying as much as possible and
will use tuples instead, a syntax that is more familiar to logic programmers.
The tuple syntax can be directly transformed to the curried one by a simple
preprocessing. So, for example, instead of {\tt succ Ord X Y} we will write {\tt succ(Ord,X,Y)},
instead of the partial application {\tt succ Ord} we will write {\tt succ(Ord)},
and so on. More generally, the partial application
$\mathsf{p}\ \mathsf{E}_1\ \cdots\ \mathsf{E}_n$ will be written as
$\mathsf{p}(\mathsf{E}_1,\ldots,\mathsf{E}_n)$.

The well-founded and the stable model semantics of Higher-Order Datalog$^\neg$,
were defined in~\cite{iclp24}. The main idea of the semantics is to interpret
higher-order user-defined predicate constants as three-valued relations over two-valued
objects, \ie as functions that take classical relations as arguments and return $\mtrue$,
$\mfalse$, or $\mundef$. This interpretation of predicate constants, apart from giving a simple
denotation of the various constructs of the language, also allows one to use Approximation Fixpoint
Theory (AFT)~\citep{DMT00ApproximationsStableOperatorsWell-FoundedFixpointsApplications,DMT04Ultimateapproximationapplicationnonmonotonicknowledgerepresentation}, in order to define a variety of semantics for the language (such as
well-founded, stable, Kripke-Kleene, and so on). For the reader of the main part of the present paper,
a deep understanding of this semantics is not necessary: the programs that we give can be understood
purely declaratively, without resorting to the help of the semantics (in the same way that a
logic programmer does not need to master its model-theoretic semantics in order to write or understand a program).
Programs of our language simply define extensional higher-order relations.
For example, in the well-founded semantics a unary second-order predicate simply denotes a relation that takes a classical set as an argument and returns $\mtrue$,
$\mfalse$, or $\mundef$ as the result; actually, most of our programs will be \emph{stratified}
(see the forthcoming Definition~\ref{stratified}), which means that they actually denote classical
higher-order relations (\ie they never return $\mundef$ as the result).
Moreover, in the stable model semantics all the predicates denote two-valued relations.
\ifincludeappendix
The only points where the
reader will have to delve deeper into the semantics, is in order to understand the proofs of certain theorems
given in the appendices.
For this reason (and also due to space restrictions)
the full presentation of the semantics is given in Appendix~\ref{semantics}.
\else
The only points where the
reader will have to delve deeper into the semantics, is in order to understand the proofs of certain theorems
provided in an appendix as supplementary material.
For this reason (and also due to space restrictions)
the full presentation of the semantics is also given as supplementary material.
\fi


The notion of \emph{order} of a predicate, is formally defined as follows:
\begin{definition}
The \emph{order} of a type is recursively defined as follows:
\[
\begin{array}{rcl}
\textit{order}(\iota) & = & 0 \\
\textit{order}(o)     & = & 1\\
\textit{order}(\rho \to \pi) & = & \max\{ \mathit{order}(\rho) + 1,
                                          \mathit{order}(\mathit{\pi}) \}\\
\end{array}
\]
The order of a predicate constant (or variable) is the order of its type.
\end{definition}
\begin{definition}
For all $k\geq 1$, \emph{$k$-Order Datalog$^\neg$} is the fragment of Higher-Order Datalog$^\neg$
in which all variables have order less than or equal to $k-1$ and all predicate constants
in the program have order less than or equal to $k$.
\end{definition}

The following example showcases many aspects of the language and additionally introduces some useful predicates that will be needed in our subsequent simulations.
\begin{example}\label{example1}
We define the relation
\lstinline`hamilton(X,Y)` which is true iff there exists a Hamilton path from vertex
\lstinline`X` to vertex \lstinline`Y` in a graph represented by a binary predicate
\lstinline`e` which specifies the edges of the graph.
The first rule in the definition of \lstinline`hamilton` is the following:
\begin{lstlisting}
hamilton(X,Y):-ordering(Ord),first(Ord,X),last(Ord,Y),subset(succ(Ord),e).
\end{lstlisting}
The above rule states that there exists a Hamilton path from vertex
\lstinline`X` to vertex \lstinline`Y` if there exists a relation \lstinline`Ord`
that is a strict total ordering with first element \lstinline`X` and last element
\lstinline`Y` and for every two consecutive elements in \lstinline`Ord` the
corresponding edge exists in \lstinline`e`. Notice the use of \lstinline`Ord` in
the above rule: it is a predicate variable that does not appear in the head of
the rule and therefore it is an existentially quantified variable of the body
(\ie the body can be read as ``there exists a relation \lstinline`Ord` such that
\ldots''). To be a strict total ordering, \lstinline`Ord` must be irreflexive,
transitive, and every two different elements must be related. This can be
expressed with the following rules:
\begin{lstlisting}
ordering(Ord):-connected(Ord),transitive(Ord),irreflexive(Ord).
connected(Ord):-not disconnected(Ord).
disconnected(Ord):-not Ord(X,Y),not Ord(Y,X),not(X=Y).
transitive(Ord):-not non_transitive(Ord).
non_transitive(Ord):-Ord(X,Y),Ord(Y,Z),not Ord(X,Z).
irreflexive(Ord):-not non_irreflexive(Ord).
non_irreflexive(Ord):-Ord(X,X).
\end{lstlisting}
It is interesting to note above how we can implement
universal quantification in the body of a rule: for example, to express the fact
that \lstinline`Ord` is connected, \ie that \emph{for all} \lstinline`X`,
\lstinline`Y` either \lstinline`X` is related to \lstinline`Y` or vice-versa, we
just require that \lstinline`Ord` is not disconnected, \ie it is not the case
that there exist \lstinline`X`, \lstinline`Y` that are not related. This is a
common trick that we will use throughout the paper in order to represent
universal quantification.

We now define the predicates \lstinline`first`, \lstinline`last` and \lstinline`succ`.
Predicate \lstinline`first(Ord,X)` is true for \lstinline`X` being the individual constant that
is the first element with respect to the ordering specified by \lstinline`Ord`.
Likewise, \lstinline`last(Ord,X)` is true if \lstinline`X` is the last element
in \lstinline`Ord`. The predicate \lstinline`succ(Ord,X,Y)` is true for \lstinline`X`
and \lstinline`Y` that are sequential in \lstinline`Ord`.
\begin{lstlisting}
first(Ord,X):-not nfirst(Ord,X).
nfirst(Ord,X):-Ord(Z,X).
last(Ord,X):-not nlast(Ord,X).
nlast(Ord,X):-Ord(X,Y).
succ(Ord,X,Y):-Ord(X,Y),not nsequential(Ord,X,Y).
nsequential(Ord,X,Y):-Ord(X,Z),Ord(Z,Y).
\end{lstlisting}
Finally, we have the rules for \lstinline`subset`:
\begin{lstlisting}
subset(P,Q):-not nonsubset(P,Q).
nonsubset(P,Q):-P(X),not Q(X).
nonsubset(P,Q):-not P(X), Q(X).
\end{lstlisting}
The above two rules use again the trick for implementing universal quantification.%
\hfill\hbox{\proofbox}
\end{example}

As in the case of first-order logic programs with negation, there exists a
simple notion of stratification for higher-order logic programs with
negation~\citep{iclp24}.
\begin{definition}\label{stratified}
A program $\Prog$ is called \emph{stratified} if there exists a function $S$ mapping
predicate constants to natural numbers, such that for each rule
$\mathsf{p} \ \overline{\mathsf{R}} \leftarrow \mathsf{L}_1,\ldots,\mathsf{L}_m$
and any $i\in\{1,\ldots, m\}$:
\begin{itemize}
  \item $S(\mathsf{q})\leq S(\mathsf{p})$ for every predicate constant
        $\mathsf{q}$ occurring in $\mathsf{L}_i$.
  \item If $\mathsf{L}_i$ is of the form $({\tt not}\,\,\mathsf{E})$, then
        $S(\mathsf{q})< S(\mathsf{p})$ for each predicate constant $\mathsf{q}$
        occurring~in~$\mathsf{E}$.
  \item For any subexpression of $\mathsf{L}_i$ of the form
        $(\mathsf{E}_1~\mathsf{E}_2)$, $S(\mathsf{q})< S(\mathsf{p})$ for every
        predicate constant $\mathsf{q}$ occurring in $\mathsf{E}_2$.
\end{itemize}
\end{definition}

A possibly unexpected aspect of the above definition, is the last item, which
says that the stratification function should not only increase because of
negation, but also because of higher-order predicate application. The intuitive
reason for this is that in Higher-Order Datalog$^\neg$ one can define a
higher-order predicate which is identical to negation, for example, by writing
\lstinline|neg P :- not P|. As a consequence, it is reasonable to assume that
predicates occurring inside an application of \lstinline|neg| should be treated
similarly to predicates appearing inside the negation symbol.

One can easily verify that the Hamiltonian Path program of
Example~\ref{example1}, is stratified.
As we are going to see, the expressive power of stratified programs is the same
as that of the non-stratified ones for orders $k \geq 2$ under the well-founded semantics.

Languages such as Higher-Order Datalog$^\neg$, are usually referred as
\emph{formal query languages}. A program in our language can be considered to
compute a query in the following sense: a first-order predicate, like
\lstinline`e` in Example~\ref{example1}, will be called an
\emph{input predicate} and its denotation (as a set of ground atoms) constitutes what is
called the \emph{input database}, usually denoted by $D_{in}$; a first-order
predicate like \lstinline`hamilton` in Example~\ref{example1}, will be an
\emph{output} one and its denotation constitutes the \emph{output database},
usually denoted by $D_{out}$.

More formally, a \emph{database schema} $\sigma$ is a finite set of first-order
predicate symbols with associated arities. A \emph{database} over a schema $\sigma$
is a finite set of ground atoms whose predicate symbols belong to $\sigma$.
A \emph{query} is a mapping from databases over a schema $\sigma_1$ to databases over a schema $\sigma_2$.
%
A program $\mathsf{P}$ can be seen as a query ${\cal Q}_{\mathsf{P}}$ such that
$D_{out} = {\cal Q}_{\mathsf{P}}(D_{in})$. We are interested in queries that are
\emph{generic}~\citep{Imm86}, \ie queries that do not depend on the names of the
individual constants in the input database.
Given a fragment of our language, we are interested in the \emph{expressive
power} of the fragment under a given semantics, namely the set of queries that can be defined by
programs of the fragment. In particular, we want to demonstrate that such a
fragment \emph{captures a complexity class $\mathcal{C}$}, \ie it can express
\emph{exactly} all the queries whose \emph{evaluation complexity} belongs to
$\mathcal{C}$. By evaluation complexity we mean the complexity of checking
whether a given atom belongs to the output database. Notice that different
semantics of the fragments we study may lead to different evaluation
complexities and therefore to capturing different complexity classes. Therefore,
our results will be of the form ``\emph{the fragment X of Higher-Order
Datalog$^\neg$, under the Y semantics, captures the complexity class Z}''.

In this work we consider the well-founded semantics, the stable model semantics with
cautions reasoning and the stable model semantics with brave reasoning. In particular,
under the well-founded semantics, a given atom belongs to the output database if and only if
it is true in the well-founded model. Moreover, under the stable model semantics with cautious (resp. brave)
reasoning, a given atom belongs to the output database if and only if it is true
in all stable models (resp. in at least one stable model).





\section{Expressive Power under the Well-founded Semantics}\label{wellfounded-simulation}
The purpose of this section is to demonstrate that for all $k\geq 1$,
$(k+1)$-Order Datalog$^\neg$, under the well-founded semantics, captures
\EXPTIME[k]. Proofs of such results usually consist of two parts; in our case
these two parts can be intuitively described as follows:
\begin{itemize}
\item We show that every Turing machine that takes as input an encoding of an
      input database and computes a query over this relation that belongs to
      \EXPTIME[k], can be simulated by a $(k+1)$-Order
      Datalog$^\neg$ program (whose meaning is understood under the well-founded
      semantics).

\item We show that computing the well-founded semantics of every $(k+1)$-Order
      Datalog$^\neg$ program over an input database, can be done in
      $k$-exponential time with respect to the number $n$ of individual constants in
      the input database.
\end{itemize}

In the main part of the paper, we focus on the proof of the first result.
\ifincludeappendix
The proof of the second result is given in Appendix~\ref{appendixB}.
\else
The proof of the second result is given in an appendix as supplementary material.
\fi
The proof of the first
result has two important points that have to receive special attention, on which
we briefly comment below.

\vspace{0.1cm}

\noindent
{\bf Ordering of the input database:}
A Turing machine encodes an input database as a string on its tape. Such an
encoding provides an implicit strict total ordering of the input. On the other hand,
Higher-Order Datalog$^\neg$ does not have a notion of tape, and therefore the
input database is, at first sight, unordered. In less powerful languages, such
as for example the language of~\cite{iclp19}, this mismatch is solved by imposing an
\emph{explicit} strict total ordering on the individual constants of the input
database.
However, it turns out that in Higher-Order Datalog$^\neg$
we do not need this ordering trick (usually referred in the
literature as the \emph{ordering assumption}): we can generate an ordering
\lstinline`Ord` of the constants of the input database and then use it to
perform the Turing machine simulation.

\vspace{0.1cm}

\noindent{\bf Representing numbers:}
Since we want to simulate the operation of a Turing machine, we need to have a
numbering scheme in our language in order to count the steps of the Turing
machine and the positions on its tape. At first, we need some base-numbers;
actually, it is sufficient to use the $n$ individual constants in the input
database. The trick here is to use these constants directly as numbers: we
generate an ordering \lstinline`Ord` on these constants and require that it is a
strict total order. We then show how to simulate bigger numbers, namely numbers
polynomially related to $n$. Finally, since we simulate the operation of a
Turing machine that runs in $k$-exponential time, we must be able to represent
bigger numbers. The trick here is to use higher-order relations: as the order of
our programs increases, the more numbers we can represent. Actually, as we
demonstrate, $(k+1)$-Order Datalog$^\neg$ is sufficient
to represent $k$-exponential numbers.

\vspace{0.1cm}

Regarding the ordering of the input database, we already have all the required
machinery ready from Example~\ref{example1}. We can use the predicate
\lstinline`ordering` to generate a \emph{strict total order} \lstinline`Ord`,
over the constant symbols of the input database, which we will then use in all
our simulations. In other words, the constants of our input database play, under
this ordering, the role of \emph{base-numbers} in our simulation (\ie ``small''
numbers that can get up to $n-1$).
Given an individual constant $\mathsf{c}$ of the input database, we will say it
represents the natural number $m$, or more formally $num(\mathsf{c})=m$, if and
only if in the strict total order \lstinline`Ord` the constant $\mathsf{c}$ is
the $(m+1)$-th element. In the following, we see how we can use these
base-numbers to represent even bigger ones.


\subsection{Representing Numbers}

\paragraph{\bf Representing Polynomially-Big Numbers:}
To represent natural numbers up to $n^{d+1}-1$, where $d$ is any arbitrary but fixed natural number,
we use tuples of individual constants with fixed length of size $d+1$.
The following predicates define the
``first'' and ``last'' of such numbers (denoted by \lstinline`first$_0$` and \lstinline`last$_0$`)
and the ``less-than'' and ``successor'' relations on them (denoted by \lstinline`lt$_0$`
and \lstinline`succ$_0$`). Notice that the following definitions use the
\lstinline`first` and \lstinline`last` predicates defined in Example~\ref{example1}.

\begin{lstlisting}
first$_0$(Ord,X$_0$,...,X$_d$):-first(Ord,X$_0$),...,first(Ord,X$_d$).
last$_0$(Ord,X$_0$,...,X$_d$):-last(Ord,X$_0$),...,last(Ord,X$_d$).
lt$_0$(Ord,X$_0$,...,X$_d$,Y$_0$,...,Y$_d$):-Ord(X$_d$,Y$_d$).
lt$_0$(Ord,X$_0$,...,X$_d$,Y$_0$,...,Y$_d$):-Ord(X$_{d-1}$,Y$_{d-1}$),(X$_d$=Y$_d$).
$\ldots$
lt$_0$(Ord,X$_0$,...,X$_d$,Y$_0$,...,Y$_d$):-Ord(X$_0$,Y$_0$),(X$_1$=Y$_1$),...,(X$_d$=Y$_d$).
succ$_0$(Ord,$\bar{\tt X}$,$\bar{\tt Y}$):-lt$_0$(Ord,$\bar{\tt X}$,$\bar{\tt Y}$),not nsequential$_0$(Ord,$\bar{\tt X}$,$\bar{\tt Y}$).
nsequential$_0$(Ord,$\bar{\tt X}$,$\bar{\tt Y}$):-lt$_0$(Ord,$\bar{\tt X}$,$\bar{\tt Z}$),lt$_0$(Ord,$\bar{\tt Z}$,$\bar{\tt Y}$).
\end{lstlisting}

In the tuple-representation of numbers, tuple $({\tt X}_0,\ldots,{\tt X}_d)$ of
base-elements represents the number $num{({\tt X}_0,\ldots,{\tt X}_d)} =
num({\tt X}_0) + num({\tt X}_1)\cdot n + \cdots + num({\tt X}_d)\cdot n^{d}$.










\paragraph{\bf Representing Exponentially-Big Numbers:}
We now demonstrate how we can represent ``exponentially-big'' numbers as
higher-order relations. In the foregoing discussion
we will need the following notation:
$\expk{0}(x) = x$ and $\expk{n+1}(x) = 2^{\expk{n}(x)}$.
Let $N_0 = n^{d+1} - 1$ be the largest number that can be represented
by $(d+1)$-tuples of individual constants and for $k \geq 1$,
let $N_k$ be the largest number that can be represented by using
$k$-order relations.
We can exponentially increase the numbers up to the
number $N_{k+1} = \exp_1(N_k + 1)-1$ by using $(k+1)$-order relations.
One can easily see that $N_{k} = \exp_{k}(n^{d+1})-1$.

If the $k$-order relations representing numbers up to $N_k$ are of type $\rho$,
then it suffices to use higher-order relations of type $\rho \rightarrow o$ in order
to represent numbers up to $N_{k+1}$. This is essentially a binary representation
where the lower order numbers denote bit positions. Formally, let  \lstinline`Z` be a
$(k+1)$-order element and ${\tt R}_0,\ldots,{\tt R}_{N_k}$ be the ordering of
the elements that represent numbers in the previous counting module. Let $f$ be the
function mapping $\mtrue$ to $1$ and $\mfalse$ to $0$. Then we have
$num({\tt Z}) = f({\tt Z(R}_{0})) + f({\tt Z(R}_{1})) \cdot 2 + \cdots + f({\tt Z(R}_{N_k}))\cdot 2^{N_k}$.
We begin with predicates testing for the first and the last number.
\begin{lstlisting}
first$_{k+1}$(Ord,N):-not nfirst$_{k+1}$(Ord,N).
nfirst$_{k+1}$(Ord,N):-N(X).
last$_{k+1}$(Ord,N):-not nlast$_{k+1}$(Ord,N).
nlast$_{k+1}$(Ord,N):-not N(X).
\end{lstlisting}

The following definitions describe the ``less than'' relation between two
elements that represent numbers. We examine if a number \lstinline`N` is less
than \lstinline`M` by comparing the two numbers bit by bit in their binary
representation. The successor of a number is defined with the use of less-than.
%








\begin{lstlisting}
lt$_{k+1}$(Ord,N,M):-last$_{k}$(Ord,X),bit$_{k+1}$(Ord,N,M,X).
bit$_{k+1}$(Ord,N,M,X):-not N(X),M(X).
bit$_{k+1}$(Ord,N,M,X):-N(X),M(X),succ$_{k}$(Ord,Y,X),bit$_{k+1}$(Ord,N,M,Y).
bit$_{k+1}$(Ord,N,M,X):-not N(X),not M(X),succ$_{k}$(Ord,Y,X),bit$_{k+1}$(Ord,N,M,Y).
succ$_{k+1}$(Ord,N,M):-lt$_{k+1}$(Ord,N,M), not nsequential$_{k+1}$(Ord,N,M).
nsequential$_{k+1}$(Ord,N,M):-lt$_{k+1}$(Ord,N,Z),lt$_{k+1}$(Ord,Z,M).
\end{lstlisting}
For $k=0$ the above code is slightly different: variables \lstinline`X` and \lstinline`Y`
should be replaced with \lstinline`$\bar{\tt X}$` and \lstinline`$\bar{\tt Y}$`.












\subsection{Turing machine simulation}
%


We now demonstrate how any query that belongs to \EXPTIME[k] ($k\geq 1$), can be
expressed by a $(k+1)$-Order Datalog$^\neg$ program under the well-founded
semantics. It suffices to assume that the output schema of the query consists of
a single output predicate, since every query can be decomposed into multiple
queries of this form. Since the query belongs to \EXPTIME[k], there exists a
Turing machine that given on its tape an input database under some sensible encoding,
decides whether a tuple belongs to the output relation of
the query in at most $\exp_{k}(n^d)$ steps, where $n$ is the number of constant
symbols in the input database and $d$ is some constant. We simulate this Turing
machine with a $(k+1)$-Order Datalog$^\neg$ program.

\paragraph{\bf Encoding the input:}
Before presenting the simulation of the Turing machine $M$, we mention certain
simplifying assumptions, which do not affect the generality of the subsequent
results.
\begin{itemize}
  \item The input database consists of a single binary relation \lstinline|in|
        and the output database is also a single binary relation
        \lstinline|out|. In the following, the number of constants in the input
        database is denoted by $n$.
  \item The alphabet of $M$ that will be simulated is $\Sigma =\{0,1,\Box\}$.
        $M$ expects the input relation \lstinline|in| as the standard binary
        encoding of a graph, which is based on the ordering of the individual
        constants, in the first $n^2$ cells of its tape. For example, if the
        pair $(x,y)$ belongs to \lstinline|in|, then the tape of $M$ contains a
        ``1'' at cell position $num(x)+num(y) \cdot n$, otherwise it contains
        ``0''.
  \item $M$ decides whether a tuple $(a,b)$ belongs to the output relation
        \lstinline|out|. The next $n^2$ cells of its tape are used to encode
        $(a,b)$. All these cells contain the symbol ``0'', except for the cell
        at position $num(a)+num(b) \cdot n+n^2$ which contains ``1''.
  \item $M$ reaches its accepting state $yes$ if and only if the tuple $(a,b)$ belongs to the output relation
        \lstinline|out|.
\end{itemize}

The following two predicates encode the binary input relation \lstinline|in| and the tuple $(a,b)$ as a binary string.
\begin{lstlisting}
input$_{1}$(A,B,Ord,X,Y,Z$_2$,...,Z$_d$):-first(Ord,Z$_2$),...,first(Ord,Z$_d$),in(X,Y).
input$_{0}$(A,B,Ord,X,Y,Z$_2$,...,Z$_d$):-first(Ord,Z$_2$),...,first(Ord,Z$_d$),not in(X,Y).
input$_{1}$(A,B,Ord,Z$_0$,Z$_1$,Z$_2$,...,Z$_d$):-(Z$_0$=A),(Z$_1$=B),first(Ord,Z),succ(Ord,Z,Z$_2$),
                               first(Ord,Z$_3$),...,first(Ord,Z$_d$).
input$_{0}$(A,B,Ord,Z$_0$,Z$_1$,Z$_2$,...,Z$_d$):-not(Z$_0$=A),first(Ord,Z),succ(Ord,Z,Z$_2$),
                               first(Ord,Z$_3$),...,first(Ord,Z$_d$).
input$_{0}$(A,B,Ord,Z$_0$,Z$_1$,Z$_2$,...,Z$_d$):-not(Z$_1$=B),first(Ord,Z),succ(Ord,Z,Z$_2$),
                               first(Ord,Z$_3$),...,first(Ord,Z$_d$).
\end{lstlisting}
The predicate \lstinline|input$_{1}$(A,B,Ord,Z$_0$,...,Z$_d$)| is true if the
symbol \lstinline|1| will be written during the initialization of $M$ in the
cell of the tape represented by the number \lstinline|(Z$_0$,...,Z$_d$)|.
Similarly, \lstinline|input$_{0}$(A,B,Ord,Z$_0$,...,Z$_d$)| is true if the
symbol \lstinline|0| will be written in that position.

\paragraph{\bf Initial configuration of the Turing Machine:}
In order to represent the configurations of the Turing machine we use a
higher-order predicate for each state and symbol, and a higher-order predicate
for the cursor position. The predicate \lstinline|state$_{s}$(A,B,Ord,T)| is
true if at time \lstinline|T| the Turing machine is in state $s$. The predicate
\lstinline|symbol$_{\sigma}$(A,B,Ord,T,P)|, where  $\sigma \in \{0,1,\Box\}$, is
true if at time \lstinline|T| the tape has symbol $\sigma$ written in position
\lstinline|P|. The predicate \lstinline|cursor(A,B,Ord,T,P)| is true if at time
\lstinline|T| the cursor is in position \lstinline|P|.

We also need a higher-order predicate to lift the tuple representation of numbers to
the $k$-order numbering notation. Predicate
\lstinline|lift$_{k}$(Ord,$\bar{\tt X}$,M)| transforms the number represented by
the tuple \lstinline|$\bar{\tt X}$|
in the zero-order notation to the same number \lstinline`M` in the $k$-order notation.
\begin{lstlisting}
lift$_{k}$(Ord,$\bar{\tt X}$,M):-first$_0$(Ord,$\bar{\tt X}$),first$_{k}$(Ord,M).
lift$_{k}$(Ord,$\bar{\tt X}$,M):-succ$_0$(Ord,$\bar{\tt Z}$,$\bar{\tt X}$),succ$_{k}$(Ord,M',M),lift$_{k}$(Ord,$\bar{\tt Z}$,M').
\end{lstlisting}

We proceed to describe the initial configuration of the Turing machine. At time
0, the machine is in its initial state $s_0$, the tape contains only the binary
string of the encoded input and all the other positions are filled with the
symbol $\Box$, and the cursor is in position 0.
This is described by the following rules:
\begin{lstlisting}
state$_{s_0}$(A,B,Ord,T):-first$_{k}$(Ord,T).
cursor(A,B,Ord,T,P):-first$_{k}$(Ord,T),first$_{k}$(Ord,P).
symbol$_{0}$(A,B,Ord,T,P):-first$_{k}$(Ord,T),input$_{0}$(A,B,Ord,$\bar{\tt X}$),lift$_{k}$(Ord,$\bar{\tt X}$,P).
symbol$_{1}$(A,B,Ord,T,P):-first$_{k}$(Ord,T),input$_{1}$(A,B,Ord,$\bar{\tt X}$),lift$_{k}$(Ord,$\bar{\tt X}$,P).
symbol$_{\Box}$(A,B,Ord,T,P):-first$_{k}$(Ord,T),not symbol$_{0}$(A,B,Ord,T,P),
                      not symbol$_{1}$(A,B,Ord,T,P).
\end{lstlisting}
We now specify how the execution of the Turing machine is simulated.

\paragraph{\bf Simulating transitions:}
To describe the transitions of the machine, we create rules which assert the
next configuration of the machine, based on the current one and the transition
function.
Let $(s,\sigma)\rightarrow(s',\sigma',\text{right})$ be a transition, indicating
that if the current state of the machine is $s$ and its cursor reads the symbol
$\sigma$, then the machine changes its state to $s'$, it writes $\sigma'$ on the
current cursor position and then moves the cursor to the right.
In our simulation we use an auxiliary predicate
\lstinline|current$_{(s,\sigma)}$(A,B,Ord,T,P)| which is true if at the moment
\lstinline`T` the state of the machine is $s$, the cursor position is
\lstinline`P` and the cursor reads the symbol $\sigma$.
\begin{lstlisting}
current$_{(s,\sigma)}$(A,B,Ord,T,P):-state$_{s}$(A,B,Ord,T),cursor(A,B,Ord,T,P),
                         symbol$_{\sigma}$(A,B,Ord,T,P).
\end{lstlisting}
Now, the transition can be simulated by the following rules:
\begin{lstlisting}
state$_{s'}$(A,B,Ord,T'):-current$_{(s,\sigma)}$(A,B,Ord,T,P),succ$_{k}$(Ord,T,T').
symbol$_{\sigma'}$(A,B,Ord,T',P):-current$_{(s,\sigma)}$(A,B,Ord,T,P),succ$_{k}$(Ord,T,T').
cursor(A,B,Ord,T',P'):-current$_{(s,\sigma)}$(A,B,Ord,T,P),
                     succ$_{k}$(Ord,P,P'),succ$_{k}$(Ord,T,T').
\end{lstlisting}
Other types of transitions can be expressed in a similar way.
We also add what is commonly called ``inertia rules''. They ensure that
every position of the tape except for the position of the cursor, retains its content.
For any symbol $\sigma$ we include the following:
\begin{lstlisting}
symbol$_{\sigma}$(A,B,Ord,T',P'):-succ$_{k}$(Ord,T,T'),symbol$_{\sigma}$(A,B,Ord,T,P'),
                      cursor(A,B,Ord,T,P),lt$_{k}$(Ord,P,P').
symbol$_{\sigma}$(A,B,Ord,T',P'):-succ$_{k}$(Ord,T,T'),symbol$_{\sigma}$(A,B,Ord,T,P'),
                      cursor(A,B,Ord,T,P),lt$_{k}$(Ord,P',P).
\end{lstlisting}

Finally, in order to produce the output relation, we use the following:
\begin{lstlisting}
out(A,B):-ordering(Ord),state$_{yes}$(A,B,Ord,T).
\end{lstlisting}
The \lstinline`out` predicate is the one that ``initiates the simulation'': it
produces, using an existential predicate variable, the ordering \lstinline`Ord`
that is used throughout the simulation and it verifies that there exists a value \lstinline`T`
that represents a time point (within the range of representable numbers) such that the machine reaches an accepting state. The above
simulation leads to the following theorem:
\begin{theorem}
Every query in \EXPTIME[k] ($k \geq 1$) can be expressed by a $(k+1)$-Order
Datalog$^\neg$ program.
%
\end{theorem}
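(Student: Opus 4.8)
The plan is to take the program $\Prog$ exactly as constructed above and show that, for a generic query ${\cal Q}$ in $\EXPTIME[k]$ computed by a Turing machine $M$ that runs in at most $\exp_k(n^d)$ steps, the well-founded model of $\Prog$ together with the input database $D_{in}$ makes $\mathtt{out}(\mathsf{a},\mathsf{b})$ true exactly when $(\mathsf{a},\mathsf{b})\in{\cal Q}(D_{in})$. I would split the argument into three independent verifications: correctness of the numbering modules, correctness of the step-by-step simulation, and soundness of the two global mechanisms the program relies on, namely the existentially guessed ordering and the well-founded reading of the non-stratified part.

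First I would establish correctness of the numbering modules by induction on the order $j$. For $j=0$ one checks that $\mathtt{first}_0,\mathtt{last}_0,\mathtt{lt}_0,\mathtt{succ}_0$ realise, under the reading $num(\mathtt{X}_0,\ldots,\mathtt{X}_d)=\sum_{i=0}^{d} num(\mathtt{X}_i)\,n^{i}$, the usual linear order on $\{0,\ldots,N_0\}$; this is a routine lexicographic calculation once $\mathtt{Ord}$ is a strict total order on the $n$ constants. For the inductive step I would show that if the order-$j$ module correctly enumerates $0,\ldots,N_j$ with its successor and last element, then the order-$(j{+}1)$ module correctly enumerates $0,\ldots,N_{j+1}$ with $N_{j+1}=\exp_1(N_j+1)-1$: a relation $\mathtt{N}$ of type $\rho\to o$ is read as the binary numeral whose bit at position $p$ is $f(\mathtt{N}(\mathtt{R}_p))$, and the key point is that $\mathtt{lt}_{j+1}$ compares two numerals by locating, from the most significant bit downwards, the highest position at which they differ. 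Concretely, one verifies that $\mathtt{bit}_{j+1}(\mathtt{Ord},\mathtt{N},\mathtt{M},\mathtt{X})$ holds iff $\mathtt{M}$ dominates $\mathtt{N}$ at the most significant differing bit among positions $\le num(\mathtt{X})$, so that instantiating $\mathtt{X}$ with the last order-$j$ number yields the correct global comparison; $\mathtt{first}_{j+1},\mathtt{last}_{j+1},\mathtt{succ}_{j+1}$ are then immediate. A short induction shows $\mathtt{lift}_k$ maps the order-$0$ numeral for $m$ to the order-$k$ numeral for the same $m$. Because the running time $\exp_k(n^d)$ is dominated by $N_k=\exp_k(n^{d+1})-1$ (for a suitably chosen tuple length), the order-$k$ module can index every time step and every tape cell that $M$ ever touches, which is precisely why $(k{+}1)$-order predicates — those taking an order-$k$ number as an argument — suffice.

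Next I would fix any strict total order $\mathtt{Ord}$ produced by the existential predicate variable in the rule for $\mathtt{out}$ and prove, by induction on the time index $T$ ranging over $0,\ldots,N_k$, that $\mathtt{state}_s$, $\mathtt{symbol}_\sigma$ and $\mathtt{cursor}$ hold at $T$ exactly for the state, tape contents and head position of $M$ at step $T$ on the $\mathtt{Ord}$-encoded input. The base case is the initialisation block, where $\mathtt{input}_0/\mathtt{input}_1$ together with $\mathtt{lift}_k$ lay down the encoded graph and the designated pair, and the $\mathtt{symbol}_\Box$ rule fills the remaining cells by negation as failure. The inductive step uses the transition rules through the auxiliary $\mathtt{current}_{(s,\sigma)}$ to update the state, the written symbol and the head, and the two inertia rules to copy every non-scanned cell; determinism of $M$ guarantees that at most one transition fires, so the described configuration is the unique successor. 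Consequently $\mathtt{state}_{yes}(\mathsf{a},\mathsf{b},\mathtt{Ord},T)$ is derivable for some $T$ iff $M$ accepts the $\mathtt{Ord}$-encoding of $(D_{in},(\mathsf{a},\mathsf{b}))$.

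The two global points are where the proof needs the most care. The simulation fragment is \emph{not} syntactically stratified, since $\mathtt{symbol}_\Box$ depends negatively on $\mathtt{symbol}_0,\mathtt{symbol}_1$ while a transition that reads $\Box$ and writes $0$ or $1$ makes the latter depend positively on $\mathtt{symbol}_\Box$; I would discharge this by observing that the \emph{ground} program is locally stratified — rank ground atoms primarily by the strictly increasing time index and secondarily by a fixed ranking placing $\mathtt{symbol}_\Box$ above $\mathtt{symbol}_0,\mathtt{symbol}_1$ — since the only same-time negative dependence strictly decreases this rank, whereas every dependence introduced by a transition lowers the time index. Hence the well-founded model is total and coincides with the perfect model, which validates the inductive reasoning above. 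The remaining point is the soundness of existentially guessing $\mathtt{Ord}$: because ${\cal Q}$ is generic, membership $(\mathsf{a},\mathsf{b})\in{\cal Q}(D_{in})$ is invariant under the choice of ordering, so $M$ accepts the $\mathtt{Ord}$-encoding for one strict total order iff it does so for all of them; thus $\mathtt{out}(\mathsf{a},\mathsf{b})$, which fires as soon as \emph{some} $\mathtt{Ord}$ leads to acceptance, is true exactly when $(\mathsf{a},\mathsf{b})\in{\cal Q}(D_{in})$. I expect the inductive verification of the exponential numbering module, together with this genericity argument, to be the main obstacles; the per-step simulation, once the numbering is in place, is essentially bookkeeping.
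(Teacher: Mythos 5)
Your proposal is, in substance, the paper's own argument: the paper establishes this theorem simply by exhibiting the construction (the generated ordering, the numbering modules, the configuration predicates with transition and inertia rules, and the \texttt{out} rule), and your three verification steps --- correctness of the numbering modules by induction on the order, correctness of the configuration predicates by induction on time, and the genericity argument justifying the existentially quantified \texttt{Ord} --- are precisely the reasoning the paper leaves implicit. Those parts are correct, and your reading of the $\mathtt{bit}_{k+1}$ comparison and of the role of genericity matches the intended one.

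The one point where you genuinely diverge is stratification, and it cuts both ways. You are right that the program as literally written is not syntactically stratified: for any transition $(s,\Box)\to(s',\sigma',\cdot)$ with $\sigma'\in\{0,1\}$ (unavoidable, since a machine that can never write $0$ or $1$ on a blank cell is confined to the input region of its tape), the definition of stratification forces $S(\mathtt{symbol}_0) < S(\mathtt{symbol}_\Box) \le S(\mathtt{current}_{(s,\Box)}) \le S(\mathtt{symbol}_0)$, a contradiction --- so the paper's subsequent claim that ``one easily sees that it is stratified'' is a real slip that you caught. However, your repair is the weakest link of the proposal. The step ``the ground program is locally stratified, hence the well-founded model is total and coincides with the perfect model'' imports first-order technology: the paper's well-founded semantics for higher-order programs is defined via approximation fixpoint theory, and no grounding or perfect-model theorem is available in that framework, so this step would have to be re-proved there (for instance by a direct induction on the time index along the well-founded fixpoint iteration). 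Moreover, your ranking does not literally satisfy local stratification: ground instances of the transition rules exist for \emph{every} pair $(T,T')$, not only consecutive ones, so ``every dependence introduced by a transition lowers the time index'' holds only after one first evaluates the lower-stratum arithmetic predicates ($\mathtt{succ}_k$, $\mathtt{first}_k$, $\mathtt{lt}_k$, \ldots) and prunes instances whose guards are false. The cleaner fix --- and the one that also rescues the paper's corollary about the stratified fragment --- is to repair the offending rule itself: introduce a predicate $\mathtt{covered}(\mathtt{A},\mathtt{B},\mathtt{Ord},\mathtt{P})$ defined by positive rules from $\mathtt{input}_0$, $\mathtt{input}_1$ and $\mathtt{lift}_k$, and let the time-$0$ rule for $\mathtt{symbol}_\Box$ negate $\mathtt{covered}$ instead of $\mathtt{symbol}_0$ and $\mathtt{symbol}_1$. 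This has the same meaning at time $0$, makes the program genuinely stratified, yields a total well-founded model, and your time induction then goes through verbatim.
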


The opposite direction of the above theorem also holds, as stated by the
following theorem\ifincludeappendix (see Appendix~\ref{appendixB} for the proof)\fi.
\begin{theorem}\label{turing-machine-well-founded}
%
Let $\Prog$ be a $(k+1)$-Order Datalog$^\neg$ program that defines a query
$\mathcal{Q}_\mathsf{P}$ under the well-founded semantics.
Then, there exists a deterministic Turing
machine that takes as input an encoding of a database $D$
that uses $n$ individual constant symbols and
a ground atom $p(\bar{a})$, where $p$ is a predicate constant of $\Prog$ and
$\bar{a}$ is a tuple of those individual constants, and
decides whether $p(\bar{a}) \in \mathcal{Q}_\mathsf{P}(D)$, in at most
$exp_{k}(n^d)$ steps for some constant $d$.
\end{theorem}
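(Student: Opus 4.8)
The plan is to have the deterministic Turing machine explicitly materialize the well-founded model of $\Prog$ on the input database $D$ and then simply read off the truth value of $\mathsf{p}(\bar a)$. The whole argument rests on two quantitative facts: (i) the three-valued interpretation of every predicate constant of $\Prog$ can be stored in a table of $k$-exponential size, and (ii) this family of tables can be computed by a monotone fixpoint iteration that both converges and performs each step within a $k$-exponential time budget. Everything below is deterministic, so the resulting procedure is a legitimate deterministic Turing machine.

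First I would fix the semantic domain. Since $\Prog$ uses $n$ individual constants, the base type $\basedom$ has exactly $n$ elements, and under the well-founded semantics every argument that a predicate receives is a classical (two-valued) object, with three-valuedness appearing only in the output of the interpretation. For a type $\tau$ let $\#\tau$ denote the number of such two-valued objects, so that $\#\basedom = n$, $\#\bool = 2$, and $\#(\rho\to\pi)=(\#\pi)^{\#\rho}$. A routine induction on the type structure then shows that if $\mathit{order}(\tau)\le j$ then $\#\tau \le \exp_j(n^{c})$ for some constant $c$ depending only on $\tau$; the induction step uses that a product of constantly many $\exp_{j-1}$-bounded quantities is again $\exp_{j-1}$-bounded and that $2^{\exp_{j-1}(\cdot)}=\exp_j(\cdot)$. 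Because $\Prog$ is $(k+1)$-order, every predicate constant has a type $\rho_1\to\cdots\to\rho_m\to\bool$ whose argument types $\rho_i$ have order at most $k$; hence its three-valued interpretation is a table with at most $\prod_i \#\rho_i \le \exp_k(n^{d_0})$ entries, each drawn from $\{\mtrue,\mfalse,\mundef\}$. As $\Prog$ has only constantly many predicate constants, the entire interpretation fits in $\exp_k(n^{d_0})$ space.

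Next I would invoke the definition of the well-founded model from Approximation Fixpoint Theory (as set up in the supplementary material) and implement it as an iteration on the finite lattice of three-valued interpretations just described. The well-founded model is obtained as the limit of a sequence that increases monotonically in the precision ordering, realized concretely by the alternating-fixpoint (doubly nested least-fixpoint) procedure: each inner step applies the three-valued immediate-consequence operator $T_\Prog$, and each outer step refines the set of founded atoms. Because every refinement strictly increases the collection of entries that have received a definite value, and there are at most $\exp_k(n^{d_0})$ such entries, the whole process terminates after at most $\exp_k(n^{d_1})$ steps. A single application of $T_\Prog$ scans every rule of $\Prog$ and every instantiation of its (order $\le k$) variables; the number of such instantiations is bounded by $(\exp_k(n^{c}))^{O(1)}=\exp_k(n^{d_2})$, and evaluating one rule body reduces to a constant number of table look-ups, equality tests, and negations. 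The only delicate point is evaluating a higher-order subexpression such as a partial application $(\mathsf{E}_1\,\mathsf{E}_2)$: here the machine first computes the two-valued object denoted by the argument (an object of order $\le k$, representable within the same budget) and then uses it to index the relevant table, so each such evaluation also stays within $\exp_k(n^{d_2})$ time.

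Multiplying the number of iterations by the per-iteration cost gives a total running time of $\exp_k(n^{d})$ for a suitable constant $d$, after which the machine looks up the entry for $\mathsf{p}(\bar a)$ and accepts iff it equals $\mtrue$. I expect the main obstacle to be the rigorous bookkeeping of the third paragraph: one must pin down the precise AFT formulation of the well-founded model, argue that the alternating-fixpoint iteration count is genuinely bounded by the number of table entries rather than by something larger, and verify that evaluating arbitrarily nested higher-order applications (which may force the explicit construction of large intermediate objects) never exceeds the $k$-exponential budget. The type-counting lemma of the second paragraph, by contrast, is routine once the recursion for $\#\tau$ is set up.
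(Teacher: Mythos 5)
Your overall strategy coincides with the paper's own proof: materialize the well-founded model of $\Prog \cup D$ by the alternating-fixpoint (AFT) iteration, bound the cardinality of every order-$j$ argument domain by $\exp_j(poly(n))$ via induction on types, bound the number of operator applications by the number of table entries, and bound the cost of a single application. These are exactly the paper's Claims 1--4, and your type-counting recursion, your space bound for interpretations, and your ``iterations $\times$ per-step cost'' conclusion all match. Your self-flagged worry about the iteration count is resolved in the paper by splitting the bound in two: each \emph{inner} least-fixpoint iteration adds at least one definite entry to some predicate table (hence at most $\exp_k(poly(n))$ inner steps), and each \emph{outer} step strictly increases some lower bound $I_m$ or decreases some upper bound $J_m$ (hence at most $\exp_k(poly(n))$ outer steps); the product still lies within the budget.

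There is, however, one genuine flaw, located precisely at what you call ``the only delicate point''. Your machine evaluates an application $(\mathsf{E}_1\ \mathsf{E}_2)$ by ``first computing the two-valued object denoted by the argument'' and then indexing a table with it. But in the pair semantics that drives the well-founded computation, the argument $\mathsf{E}_2$ in general does \emph{not} denote a two-valued object: if $\mathsf{E}_2$ is a predicate constant, or a partial application of one (such as \texttt{succ(Ord)} in the paper's simulation programs), then at an intermediate iterate $(I,J)$ with $I \neq J$ its value is a genuine pair of bounds $(l,u)$ with $l < u$, and there is no single object with which to index. The correct evaluation, carried out in the paper's Claim 3, is to enumerate all two-valued $d$ with $l \leq d \leq u$ and return $\bigl(\bigwedge\{f(d) \mid l \leq d \leq u\},\ \bigvee\{g(d) \mid l \leq d \leq u\}\bigr)$, where $(f,g)$ is the value of $\mathsf{E}_1$. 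Your complexity budget survives this correction, because the argument type has order at most $k$, so the interval contains at most $\exp_k(poly(n))$ elements and each membership/comparison test costs at most $\exp_{k-1}(poly(n))$; but as written your evaluation procedure is not well defined on three-valued arguments and therefore does not compute the well-founded model. The interval-enumeration step is the missing ingredient; once it is added, your argument becomes the paper's proof.
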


By inspecting the simulation program, one easily sees that it is stratified. Therefore,
we get the following result:
\begin{corollary}
$(k+1)$-Order Datalog$^\neg$ and Stratified $(k+1)$-Order Datalog$^\neg$ capture
\EXPTIME[k] under the well-founded semantics.
\end{corollary}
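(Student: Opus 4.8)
The plan is to prove the corollary as a squeeze between the two theorems just established, after fixing the precise meaning of ``captures'': the fragment should define \emph{exactly} the class of generic queries whose evaluation complexity lies in \EXPTIME[k]. Writing $\mathcal{L}_{\mathrm{full}}$ and $\mathcal{L}_{\mathrm{strat}}$ for the classes of queries expressible under the well-founded semantics by $(k+1)$-Order Datalog$^\neg$ and by its stratified fragment respectively, the goal is the chain of inclusions
\[
\EXPTIME[k] \;\subseteq\; \mathcal{L}_{\mathrm{strat}} \;\subseteq\; \mathcal{L}_{\mathrm{full}} \;\subseteq\; \EXPTIME[k],
\]
from which equality of all three classes, and hence the capturing statement for \emph{both} fragments simultaneously, follows at once.

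For the leftmost inclusion I would invoke the preceding expressibility theorem: every query in \EXPTIME[k] is defined by some $(k+1)$-Order Datalog$^\neg$ program. The extra ingredient is the observation recorded just before the corollary, namely that the program produced by the Turing-machine simulation is stratified; this upgrades membership in $\mathcal{L}_{\mathrm{full}}$ to membership in $\mathcal{L}_{\mathrm{strat}}$ for every such query. The middle inclusion is immediate, since stratified programs form a syntactic subclass and the well-founded semantics is used throughout (for a stratified program it is two-valued, as noted in the preliminaries, so the query it defines is unambiguous and automatically belongs to $\mathcal{L}_{\mathrm{full}}$ as well). For the rightmost inclusion I would appeal to Theorem~\ref{turing-machine-well-founded}: any query defined by a $(k+1)$-Order Datalog$^\neg$ program is decided by a deterministic Turing machine within $\exp_k(n^d)$ steps. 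Since a database over a fixed schema on $n$ constants has an encoding of length polynomial in $n$, and $\exp_k$ of a polynomial is $k$-fold exponential in that encoding length, the evaluation complexity lies in \EXPTIME[k]; genericity is automatic because the query is defined by a program. The same bound applies verbatim to the stratified fragment, confirming $\mathcal{L}_{\mathrm{strat}}\subseteq\EXPTIME[k]$ as well.

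The genuinely delicate step is the stratification claim that drives the leftmost inclusion, and this is where I would concentrate the effort. One must exhibit a level mapping $S$ witnessing Definition~\ref{stratified} for the entire simulation, and the dependency graph is not obviously acyclic through negation: the initial-configuration rule for $\mathtt{symbol}_{\Box}$ negates $\mathtt{symbol}_{0}$ and $\mathtt{symbol}_{1}$, whereas the transition rules let $\mathtt{symbol}_{0}$ and $\mathtt{symbol}_{1}$ depend \emph{positively} on $\mathtt{symbol}_{\Box}$ (through $\mathtt{current}_{(s,\Box)}$) whenever the machine reads a blank and writes a bit, which would place all three symbol predicates in one strongly connected component crossed by a negative edge. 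I would therefore verify that no negative cycle actually arises, separating if necessary the predicates describing the \emph{initial} tape contents from those describing the \emph{running} configuration, so that negation is only ever applied across a strictly lower stratum, and then confirm that all remaining recursion (through the counters $\mathtt{bit}$ and $\mathtt{lift}$, and the time-indexed $\mathtt{state}$, $\mathtt{symbol}$ and $\mathtt{cursor}$ predicates) is purely positive. Everything else is bookkeeping: the substantive mathematics already resides in the two theorems, and the corollary merely assembles them into the squeeze above.
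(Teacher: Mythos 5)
Your proposal is correct and follows the paper's own route: the corollary is obtained as a squeeze between the two theorems, with the expressibility direction upgraded to the stratified fragment by the observation that the Turing-machine simulation is stratified, and the converse direction supplied by Theorem~\ref{turing-machine-well-founded}; your chain $\EXPTIME[k] \subseteq \mathcal{L}_{\mathrm{strat}} \subseteq \mathcal{L}_{\mathrm{full}} \subseteq \EXPTIME[k]$ is exactly this argument made explicit. Where you add genuine value is in refusing to take the stratification claim on faith: the paper disposes of it with the single sentence that ``one easily sees'' the simulation is stratified, but for the program as literally written no level mapping satisfying Definition~\ref{stratified} exists for any machine that both reads blanks and writes bits. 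The initial-configuration rule for $\mathtt{symbol}_{\Box}$ forces $S(\mathtt{symbol}_{0}) < S(\mathtt{symbol}_{\Box})$, while the positive dependency chain from $\mathtt{symbol}_{\Box}$ through $\mathtt{current}_{(s,\Box)}$, then through $\mathtt{cursor}$ or $\mathtt{state}$, and back into the head $\mathtt{symbol}_{0}$ of a transition rule forces $S(\mathtt{symbol}_{\Box}) \leq S(\mathtt{symbol}_{0})$ --- precisely the negative cycle you identify (and it arises even when no single transition reads $\Box$ and writes a bit, since $\mathtt{cursor}$ links all $\mathtt{current}$ predicates). Your repair is the standard and correct one: define the initial tape contents by separate predicates on which negation is applied only over input-level predicates, so that the time-indexed configuration predicates recurse only positively among themselves; this preserves every complexity bound and is what the paper's one-line justification implicitly presupposes. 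With that patch in place, your three inclusions go through exactly as in the paper.
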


%

As a consequence, Higher-Order Datalog$^\neg$ and Stratified Higher-Order
Datalog$^\neg$ both capture \ELEMENTARY{} (\ie the union of \EXPTIME[k] for all $k$).

\section{Expressive Power under the Stable model Semantics}\label{stable-simulation}
In this section, we study the expressiveness of Higher-Order Datalog$^\neg$ under the
stable model semantics.
%
We demonstrate how any query that belongs to \coNEXPTIME[k] ($k\geq 1$), can be
expressed by a $(k+1)$-Order Datalog$^\neg$ program under the stable model
semantics with cautious reasoning. Since the query belongs to \coNEXPTIME[k],
there exists a non-deterministic Turing machine $M$ that decides whether a tuple
belongs to the complement of the output relation of the query in at most
$\exp_{k}(n^d)$ steps, where $n$ is the number of constants in the input
database and $d$ is a constant. Without loss of generality, we assume that
each computational path of $M$ terminates after at most $\exp_{k}(n^d)$ steps at
a state in \{yes, no\}. We simulate $M$ with a
$(k+1)$-Order Datalog$^\neg$ program.

For the most part, the simulation is the same as that of
Section~\ref{wellfounded-simulation}. It is intuitively helpful to consider each
stable model of the following simulation as a possible computation path the
machine could have taken. We will add some additional ``choice'' rules to
simulate those non-deterministic transitions at any possible time step.

Let $(\sigma,s)$ be a pair of a symbol and a state of the Turing machine and
assume there exist $m$ possible transitions from this pair. Since the machine is
non-deterministic, $m$ can be greater than $1$. We add the following predicates
and rules to our program.

\begin{lstlisting}
b$_{\sigma,s,m}$(T):-not b$_{\sigma,s,1}$(T),...,not b$_{\sigma,s,m-1}$(T).
b$_{\sigma,s,m-1}$(T):-not b$_{\sigma,s,1}$(T),...,not b$_{\sigma,s,m-2}$(T),not b$_{\sigma,s,m}$(T).
$\ldots$
b$_{\sigma,s,1}$(T):-not b$_{\sigma,s,2}$(T),...,not b$_{\sigma,s,m}$(T).
\end{lstlisting}

This ensures that in every stable model and for every time point \lstinline|T|,
exactly one of \lstinline|b$_{\sigma,s,i}$(T)|, $i \in \{ 1,\ldots,m \}$, is true.

Let also the transition table be
$(\sigma,s) \rightarrow (\sigma'_i,s'_i,\text{move}_i)$ for $i = 1,\ldots,m$.
Like in the deterministic case we create rules for each one such transition but
we also add the previous branching predicates so that only exactly one rule can
be ``active'' in a stable model.

\begin{lstlisting}
state$_{s'_i}$(A,B,Ord,T'):-b$_{\sigma,s,i}$(T),current$_{(s,\sigma)}$(A,B,Ord,T,P),succ$_{k}$(Ord,T,T').
symbol$_{\sigma'_i}$(A,B,Ord,T',P):-b$_{\sigma,s,i}$(T),current$_{(s,\sigma)}$(A,B,Ord,T,P),succ$_{k}$(Ord,T,T').
\end{lstlisting}
Assuming, for example, that $\text{move}_i$ is ``right'' (and likewise for
``left'' and ``stay''):
\begin{lstlisting}
cursor(A,B,T',P'):-b$_{\sigma,s,i}$(T),current$_{(s,\sigma)}$(A,B,Ord,T,P),
                   succ$_{k}$(Ord,T,T'),succ$_{k}$(Ord,P,P').
\end{lstlisting}

If every computational path of the Turing machine reaches state ``no'', then
for every stable model there exists a time point \lstinline|T| such that \lstinline|state$_{no}$(A,B,Ord,T)| is true.
Therefore, the following rule defines the output relation under cautious reasoning:
\begin{lstlisting}
out(A,B):-ordering(Ord),state$_{no}$(A,B,Ord,T).
\end{lstlisting}

The above discussion leads to the following theorem:
\begin{theorem}
Every query in \coNEXPTIME[k] can be expressed by a $(k+1)$-Order Datalog$^\neg$
program under the stable model semantics and cautious reasoning.
\end{theorem}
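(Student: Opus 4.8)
The plan is to prove that the constructed program $\Prog$ expresses the target query $\mathcal{Q}$ by exhibiting a faithful correspondence between the stable models of $\Prog$ on a fixed input database $D$ and the computation paths of the non-deterministic machine $M$, and then reading the cautious condition ``$\mathtt{out}(a,b)$ is true in every stable model'' as the universal statement ``every path of $M$ rejects $(a,b)$,'' which by the choice of $M$ is exactly $(a,b)\in\mathcal{Q}(D)$. The bulk of the simulation, together with its correctness, is inherited verbatim from Section~\ref{wellfounded-simulation}; the only genuinely new ingredient is the branching sub-program, so the argument should concentrate on its stable-model behaviour and on how it couples with the deterministic core.

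First I would split $\Prog$ into the stratified \emph{core} $\Prog_{\mathrm{det}}$ (all numbering, input-encoding, initialisation, transition and inertia rules, now guarded by the branching atoms) and the \emph{choice part} $\Prog_{\mathrm{choice}}$ consisting solely of the cyclic rules defining the atoms $\mathtt{b}_{\sigma,s,i}$. Since the $\mathtt{b}$-atoms never depend, positively or negatively, on any core predicate, a splitting argument should show that the truth values of the branching atoms in any stable model of $\Prog$ form a stable model of $\Prog_{\mathrm{choice}}$ in isolation. For each triple $(\sigma,s,T)$ the $m$ mutually-negating rules are the standard ``exactly-one'' gadget, whose stable models are precisely the $m$ assignments that make a single $\mathtt{b}_{\sigma,s,i}(T)$ true; hence the stable models of $\Prog_{\mathrm{choice}}$ are in bijection with the functions $c$ selecting one available transition for every $(\sigma,s,T)$.

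Next I would fix the branching atoms to such a $c$ and observe that the residual program is stratified in the sense of Definition~\ref{stratified}: all remaining negation passes through the already-fixed $\mathtt{b}$-atoms or through the stratified numbering and encoding predicates. Consequently the reduct has a unique two-valued stable model, coinciding with its perfect model, and this model encodes, for each ordering $\mathtt{Ord}$ satisfying $\mathtt{ordering}$ and each pair $(a,b)$, exactly the deterministic run of $M$ obtained by resolving every branch according to $c$; correctness of this encoding is precisely the statement proved in the deterministic case. In the resulting stable model $\mathcal{M}_c$ the atom $\mathtt{state}_{no}(a,b,\mathtt{Ord},T)$ thus holds for some $T$ iff that run halts in state $no$, and so $\mathtt{out}(a,b)$ holds in $\mathcal{M}_c$ iff some admissible ordering yields a rejecting run.

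The main obstacle, and the step demanding the most care, is turning this per-model analysis into the correct cautious verdict. Cautious reasoning puts $\mathtt{out}(a,b)$ in the output iff it holds in every $\mathcal{M}_c$, i.e.\ iff for every strategy $c$ some ordering produces a rejecting run; one must show this is equivalent to $M$ rejecting $(a,b)$ on all paths for all orderings, i.e.\ to $(a,b)\in\mathcal{Q}(D)$. The forward direction is immediate, but the converse requires showing that whenever $M$ \emph{accepts} $(a,b)$ there is a single strategy $c$ yielding an accepting run for \emph{every} ordering simultaneously. This is exactly where genericity of the query must be exploited, since the branching atoms $\mathtt{b}_{\sigma,s,i}(T)$ are shared across all orderings and all queried pairs and therefore do not, a priori, furnish independent non-determinism for each encoding. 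I expect the cleanest route to be a uniformisation argument based on the permutation-invariance granted by genericity; equivalently, one may argue that the branching atoms can be taken to carry $\mathtt{Ord}$ and the queried pair as additional arguments (which keeps them of order $k+1$, since $\mathtt{Ord}$ has order $1$), after which the non-determinism becomes independent per encoding and the desired equivalence, and hence the theorem, follows directly.
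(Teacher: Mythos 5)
Your construction coincides with the paper's own proof -- the stratified deterministic simulation inherited from the well-founded section, plus the cyclic ``exactly one'' gadget for the branching atoms, with cautious reasoning read off through the rule defining $\mathtt{out}$ -- but your analysis is considerably more careful than the paper's, which consists of the construction together with a single observation covering only the forward direction (if every computation path reaches $no$, then $\mathtt{state}_{no}$ holds in every stable model; the converse is never discussed). The subtlety you isolate is real: in the paper the choice atoms are $\mathtt{b}_{\sigma,s,i}(\mathtt{T})$, indexed by time only, so one selection function is shared by the simulated runs of \emph{all} orderings and all queried pairs, and the converse direction requires a single selection under which the run for \emph{every} ordering accepts. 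This does not follow from the mere existence of an accepting path for each encoding: a machine that blindly guesses a bit of its input early and verifies it only later is a perfectly legitimate decider of the complement, yet two encodings of the same database arising from two orderings can force contradictory choices at the same (state, symbol, time) triple, so under the shared atoms $\mathtt{out}(a,b)$ would be cautiously derived even when $(a,b)\notin\mathcal{Q}(D)$. Your repair -- letting the branching atoms carry $\mathtt{A}$, $\mathtt{B}$ and $\mathtt{Ord}$ in addition to $\mathtt{T}$ -- is sound: the atoms remain of order $k+1$ (their argument variables have orders $0$, $0$, $1$ and $k$), the program stays inside the Stratified+Choices fragment, the stable models become families of independent per-encoding strategies, and both directions of the equivalence with membership in the query are then immediate. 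So your proposal is correct and in fact fills a step that the paper's own argument leaves open.

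One caveat concerns the alternative route you float first: plain permutation-invariance (genericity) of the query does \emph{not} by itself uniformise accepting paths across encodings, since the paths of $M$ on isomorphic inputs need not be related by any symmetry. To make that route work one has to normalise $M$ so that it first deterministically rewrites its input into a canonical isomorphic copy (affordable within the $k$-exponential time budget, e.g.\ by minimising over all orderings), after which the runs on all encodings coincide from that point on and a single time-indexed strategy suffices. Since you ultimately rest the proof on the parameter-augmented choice atoms rather than on that sketch, this does not affect the correctness of your proposal, but the word ``equivalently'' overstates the relationship between the two routes.
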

\ifincludeappendix
The following theorem is the converse of the previous one and its proof can be
found in Appendix~\ref{appendixB}.
\else
The following theorem is the converse of the previous one and its proof can be
found in the supplementary material.
\fi
\begin{theorem}\label{Turing-machine-stable-model}
Let $\Prog$ be a $(k+1)$-Order Datalog$^\neg$ program that defines a query
$\mathcal{Q}_\mathsf{P}$ under the stable model semantics and cautious reasoning.
Then, there exists a non-deterministic Turing
machine that takes as input an encoding of a database $D$
that uses $n$ individual constant symbols and
a ground atom $p(\bar{a})$, where $p$ is a predicate constant of $\Prog$ and
$\bar{a}$ is a tuple of those individual constants, and
decides whether $p(\bar{a}) \not\in \mathcal{Q}_\mathsf{P}(D)$, in at most
$exp_{k}(n^d)$ steps for some constant $d$.
\end{theorem}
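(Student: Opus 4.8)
The plan is to exploit the fact that, under cautious reasoning, $p(\bar a)\in\mathcal{Q}_\mathsf{P}(D)$ holds exactly when $p(\bar a)$ is true in \emph{every} stable model, so that its complement, $p(\bar a)\notin\mathcal{Q}_\mathsf{P}(D)$, is the purely existential statement ``there is a stable model of $\Prog$ over $D$ in which $p(\bar a)$ is not true.'' A nondeterministic machine is the natural device: it will \emph{guess} a candidate two-valued interpretation, \emph{verify} deterministically that the guess is a stable model, and \emph{accept} iff the guess additionally makes $p(\bar a)$ false. The only nondeterminism is the initial guess; everything after it must run within the $\exp_k(n^d)$ budget. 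The degenerate case is handled automatically: if $\Prog$ has no stable model over $D$ then cautiously every atom is entailed, so $p(\bar a)\in\mathcal{Q}_\mathsf{P}(D)$, and indeed no guess survives verification, so the machine rejects.

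First I would pin down the size of the interpretation that has to be specified. The individuals of $D$ number $n$, giving $n$ order-$0$ values; a value of order $j$ is a relation over values of order $<j$, and a straightforward induction shows that the number of distinct order-$j$ values is bounded by $\exp_j(\mathrm{poly}(n))$. Since $\Prog$ is $(k+1)$-order, every variable has order at most $k$ and every predicate constant order at most $k+1$, so the set of relevant \emph{ground atoms} -- predicate constants applied to tuples of domain values of the appropriate orders -- has cardinality at most $\exp_k(n^{d'})$ for a constant $d'$ depending only on $\Prog$. A two-valued interpretation is thus describable in $\exp_k(\mathrm{poly}(n))$ bits, which is exactly what the machine guesses. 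This is the same counting that underlies the deterministic bound of Theorem~\ref{turing-machine-well-founded}, and I would reuse that grounding analysis.

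The verification step is the core. Given the guessed interpretation $M$, I would (i) form the Gelfond--Lifschitz-style reduct of $\Prog$ relative to $M$, that is, in the AFT formulation, the negation-free program obtained by fixing the value of every negative subexpression according to $M$; (ii) compute the least fixpoint of the immediate-consequence operator $T_{\Prog^M}$ of this positive program by iterating from the empty interpretation; and (iii) accept iff this least fixpoint equals $M$ and $M\not\models p(\bar a)$. Each iteration of $T_{\Prog^M}$ processes every ground rule once; evaluating a body amounts to membership tests of higher-order values in $M$ together with a search over the $\exp_k(\mathrm{poly}(n))$ possible values for the existentially quantified body variables, all feasible in $\exp_k(\mathrm{poly}(n))$ deterministic time. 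Since each non-final iteration adds at least one ground atom, the fixpoint is reached after at most $\exp_k(\mathrm{poly}(n))$ iterations, so the whole computation stays $k$-exponential, as products and polynomially many iterations of $k$-exponential quantities remain $k$-exponential.

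The hard part will be making the higher-order reduct and the body-evaluation step both precise and provably correct within the stated bound. Two points need care: formalizing the reduct for rules whose bodies contain \emph{nested} applications of predicate variables and negations occurring inside applications (for which the definition of stability in the AFT semantics of~\cite{iclp24} must be unwound so that ``reduct'' and ``least model of the reduct'' literally characterize the guessed stable model), and arguing that the existential quantification over order-$\le k$ body variables can be evaluated by exhaustive search without breaking the time budget. Once the reduct is correctly defined and its least model shown to be computable in $\exp_k(\mathrm{poly}(n))$ time, correctness of the overall reduction is immediate from the existential characterization of cautious non-entailment noted at the outset.
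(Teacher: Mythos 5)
Your overall architecture is exactly the paper's: nondeterministically guess a two-valued interpretation (the same order-by-order counting gives the $\exp_k(\mathrm{poly}(n))$ bound on its representation), deterministically verify that the guess is a stable model, accept iff it is stable and makes $p(\bar a)$ false, and derive correctness from the observation that cautious non-entailment is the existential statement ``some stable model falsifies $p(\bar a)$'' (with the no-stable-model case handled for free). The complexity bookkeeping you sketch for the verification loop (at most $\exp_k(\mathrm{poly}(n))$ iterations, each costing $\exp_k(\mathrm{poly}(n))$ because body variables range over at most $k$-order domains) is also the paper's.

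The genuine gap is in how you verify stability. You route the check through a Gelfond--Lifschitz-style reduct, and you yourself flag as ``the hard part'' that this reduct must be defined for bodies with nested applications and negations inside applications, and then proved to characterize the stable models of the AFT semantics of this language. That equivalence is not in the paper, is not trivial for this higher-order setting (application is interpreted via meets and joins over intervals of arguments, so a purely syntactic ``fix the negative subexpressions'' operation does not obviously commute with the semantics), and your proposal leaves it entirely open --- so as written the verification step is not justified. The paper needs no reduct at all: by Definition~\ref{def:AFTsemantics}, $I$ is a stable model iff $(I,I)$ is a stable fixpoint of $A_{\mathsf{P}\cup D}$, i.e.\ iff $I=\lfp A_{\mathsf{P}\cup D}(\cdot,I)_1$ and $I=\lfp A_{\mathsf{P}\cup D}(I,\cdot)_2$, and these least fixpoints are computed by exactly ``one internal loop'' of the iteration already analyzed in the proof of Theorem~\ref{turing-machine-well-founded} (Claims~3 and~4 there bound the cost of one application of $A_{\mathsf{P}\cup D}$ and the number of iterations). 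In other words, the definition of stability in this framework \emph{is} the verification algorithm; replace your step~(i)--(ii) by ``iterate $A_{\mathsf{P}\cup D}(\cdot,I)_1$ from $\bot$ (and $A_{\mathsf{P}\cup D}(I,\cdot)_2$ from $I$) and compare the limits with $I$'', and your proof closes without any new semantic lemma.
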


Notice that our simulation consists of a stratified program together with the
rules that define the \lstinline`b$_{\sigma,s,m}$`. We call this fragment of
Higher-Order Datalog$^\neg$ ``\emph{Stratified+Choices Higher-Order
Datalog$^\neg$}''. We therefore have the following result.
\begin{corollary}
$(k+1)$-Order Datalog$^\neg$ and Stratified+Choices $(k+1)$-Order Datalog$^\neg$
capture \coNEXPTIME[k] under the stable model semantics and cautious reasoning.
\end{corollary}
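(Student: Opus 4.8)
The plan is to combine the two directions already in place into the descriptive-complexity notion of \emph{capturing}, and then to transfer the statement from the full fragment to the restricted one by a purely syntactic containment argument. Recall that a fragment captures \coNEXPTIME[k] precisely when the set of generic queries it can express coincides with the set of generic queries whose evaluation complexity lies in \coNEXPTIME[k]; this unfolds into an \emph{upper bound} (every query expressible in the fragment has evaluation complexity in \coNEXPTIME[k]) and a \emph{lower bound} (every query in \coNEXPTIME[k] is expressible in the fragment). I would establish each bound for the appropriate fragment and then tie them together.

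For the upper bound I would invoke Theorem~\ref{Turing-machine-stable-model}: for any $(k+1)$-Order Datalog$^\neg$ program $\Prog$ read under cautious reasoning, there is a nondeterministic machine deciding the non-membership $p(\bar{a}) \notin \mathcal{Q}_\mathsf{P}(D)$ in $\expk{k}(n^d)$ steps. This places non-membership in \NEXPTIME[k], hence membership in \coNEXPTIME[k], so every query definable in the full fragment has evaluation complexity in \coNEXPTIME[k]. Since Stratified+Choices $(k+1)$-Order Datalog$^\neg$ is syntactically a subset of the full fragment, the very same bound applies to it a fortiori, requiring no separate argument.

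For the lower bound I would invoke the expressibility theorem just established, which shows that every query in \coNEXPTIME[k] is expressed by some $(k+1)$-Order Datalog$^\neg$ program under cautious reasoning. The crucial observation — already flagged in the remark preceding the corollary — is that the program produced by that simulation has exactly the restricted shape: a stratified core together with the branching rules defining the $b_{\sigma,s,m}$ predicates, i.e.\ it lies in Stratified+Choices. Consequently the lower bound is obtained \emph{directly} for the restricted fragment, which is what makes the corollary hold for both fragments simultaneously.

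Finally I would close the argument by chaining the inclusions into a sandwich. Writing $\mathcal{Q}(F)$ for the class of queries expressible in fragment $F$, the syntactic containment gives $\mathcal{Q}(\text{Strat+Choices}) \subseteq \mathcal{Q}(\text{full})$, the upper bound gives $\mathcal{Q}(\text{full}) \subseteq \coNEXPTIME[k]$, and the lower bound gives $\coNEXPTIME[k] \subseteq \mathcal{Q}(\text{Strat+Choices})$; these three inclusions collapse to equalities, so both fragments capture \coNEXPTIME[k]. The proof itself is short, so the main point deserving attention — rather than a genuine obstacle — is the bookkeeping around complementation and reasoning mode: one must check that the machine of Theorem~\ref{Turing-machine-stable-model} deciding \emph{non}-membership, combined with cautious reasoning, yields exactly \coNEXPTIME[k] and not \NEXPTIME[k], and that the queries on both sides of the sandwich are generic, so that the descriptive-complexity notion of capturing applies uniformly to them.
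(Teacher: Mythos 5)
Your proof is correct and takes essentially the same approach as the paper: the paper likewise obtains the lower bound from the simulation (observing, in the remark preceding the corollary, that the simulation program is a stratified program together with the choice rules, hence lies in the restricted fragment), obtains the upper bound for the full fragment from Theorem~\ref{Turing-machine-stable-model}, and closes the argument via the syntactic inclusion of Stratified+Choices $(k+1)$-Order Datalog$^\neg$ in the full language. Your extra bookkeeping about complementation under cautious reasoning and about genericity only makes explicit what the paper leaves implicit.
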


By a similar kind of analysis we can derive a result for the stable model
semantics under brave reasoning. The arguments are very similar and omitted.
\begin{corollary}
$(k+1)$-Order Datalog$^\neg$ and Stratified+Choices $(k+1)$-Order Datalog$^\neg$
capture \NEXPTIME[k] under the stable model semantics and brave reasoning.
\end{corollary}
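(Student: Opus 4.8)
The plan is to prove the two inclusions required for \emph{capture}, dualizing the cautious-reasoning development of Section~\ref{stable-simulation}: under brave reasoning a ground atom belongs to the output iff it is true in \emph{some} stable model, so the universal quantification over stable models (over computation paths) used for cautious reasoning is replaced by an existential one. The pleasant consequence is that we may work with a machine deciding the query directly rather than passing to its complement.

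For the expressibility direction ($\NEXPTIME[k]$ lies within the power of the fragment), I would start from a query $\mathcal{Q}$ in $\NEXPTIME[k]$ and fix a non-deterministic Turing machine $M$ deciding membership $(a,b)\in\mathcal{Q}(D)$ directly in at most $\exp_k(n^d)$ steps. I would reuse verbatim the number-representation predicates, the input encoding, the initial-configuration rules, the inertia rules, and — crucially — the choice rules $b_{\sigma,s,i}$ together with the branching transition rules: these force each stable model to select exactly one applicable transition at every time point, so that the stable models of the program are in bijection with the terminating computation paths of $M$. The only change from the cautious construction is the output rule, which switches the $\mathit{no}$ state for the accepting state $\mathit{yes}$:
\begin{lstlisting}
out(A,B):-ordering(Ord),state$_{yes}$(A,B,Ord,T).
\end{lstlisting}
Under brave reasoning, $(a,b)$ is in the output iff the corresponding atom is true in some stable model, i.e.\ some computation path of $M$ reaches $\mathit{yes}$, which is exactly $(a,b)\in\mathcal{Q}(D)$. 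Since the resulting program is a stratified core plus the choice rules $b_{\sigma,s,i}$, this same construction witnesses expressibility for the Stratified+Choices fragment as well.

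For the soundness direction (every query defined by the full fragment under brave reasoning lies in $\NEXPTIME[k]$) I would establish the brave-reasoning analogue of Theorem~\ref{Turing-machine-stable-model}; here the dualization simplifies matters, since $p(\bar a)\in\mathcal{Q}_\mathsf{P}(D)$ is now an \emph{existential} condition over stable models and hence directly amenable to non-deterministic simulation without complementation. The deciding machine grounds $\Prog$ over the $n$ input constants, producing a ground instantiation of size $\exp_k(n^d)$ for a $(k+1)$-order program; it then nondeterministically guesses a two-valued interpretation of that grounding, deterministically verifies stability by forming the reduct and computing its least model (a check polynomial in the grounding size, hence $k$-exponential in $n$), and accepts iff $p(\bar a)$ is true in it. This places the query in $\NEXPTIME[k]$, and since the full language subsumes Stratified+Choices, the bound holds for both fragments; combining the two directions yields the stated capture result for the full language and the restricted fragment alike.

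The step I expect to require the most care is verifying that the stable-model/computation-path correspondence is preserved \emph{unchanged} when the output rule moves from $\mathit{no}$ to $\mathit{yes}$, and in particular that replacing cautious with brave reasoning correctly turns ``all paths reject'' into the intended ``some path accepts'', matching the use of $M$ directly rather than a machine for the complement. I would also check that stable models always exist here — guaranteed because each terminating path induces one via the choice rules — so that brave reasoning is non-vacuous. With these observations in place, the remaining arguments are routine adaptations of the cautious case together with the grounding and stability-checking bounds already underlying Theorem~\ref{Turing-machine-stable-model}.
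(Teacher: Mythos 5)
Your dualization and the expressibility half are exactly what the paper intends (its entire argument for this corollary is the remark that the cautious analysis carries over): keep the simulation and the choice rules $\mathtt{b}_{\sigma,s,i}$ unchanged, take a nondeterministic machine for the query itself rather than for its complement, and switch $\mathtt{state}_{no}$ to $\mathtt{state}_{yes}$ in the rule defining $\mathtt{out}$, so that brave reasoning turns ``true in some stable model'' into ``some computation path accepts.'' Since the construction is again a stratified core plus choice rules, both the full language and the Stratified+Choices fragment are covered, and your observations about the path/model correspondence and non-vacuity of brave reasoning are the right things to check.

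The gap is in the upper-bound half. You propose to verify stability of a guessed interpretation ``by forming the reduct and computing its least model'' over a ``grounding'' of $\Prog$. Neither notion is available for this language: stable models of Higher-Order Datalog$^\neg$ are defined via Approximation Fixpoint Theory, i.e.\ a two-valued $I$ is stable iff $I = \lfp A_{\mathsf{P}\cup D}(\cdot,I)_1$ and $I = \lfp A_{\mathsf{P}\cup D}(I,\cdot)_2$, and no Gelfond--Lifschitz-style reduct is defined, let alone proved equivalent to this. That equivalence is routine for first-order programs but is a genuine claim here: the pair semantics interprets an application $(\mathsf{E}_1\ \mathsf{E}_2)$ as a meet (resp.\ join) of $\mathsf{E}_1$'s denotation over \emph{all} values lying between the lower and upper bounds of $\mathsf{E}_2$'s denotation, and moreover grounding cannot eliminate nested applications such as $\mathtt{subset}(\mathtt{succ}(\mathtt{Ord}),\mathtt{e})$, whose argument is not a value but an expression depending on the interpretation being computed; so the ``ground'' program is not propositional and the classical reduct machinery does not apply as stated. (Your appeal to the bounds ``underlying Theorem~\ref{Turing-machine-stable-model}'' also misreads that proof: it never grounds; the machine guesses the higher-order interpretation $I$ directly, as sets of entries, and checks stability by iterating the approximating operator --- the internal loop of the proof of Theorem~\ref{turing-machine-well-founded} --- within $\exp_k(poly(n))$ steps.) The repair is immediate and local: replace your reduct check by that operator-iteration check and accept iff $p(\bar{a})$ is \emph{true} in $I$; with that substitution the rest of your argument goes through and matches the paper's intended proof.
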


Since $\EXPTIME[(k-1)] \subseteq \NEXPTIME[(k-1)] \subseteq \EXPTIME[k]$ and
$\EXPTIME[(k-1)] \subseteq \coNEXPTIME[(k-1)] \subseteq \EXPTIME[k]$ it follows
that Higher-order Datalog$^\neg$ has the same expressive power
(namely \ELEMENTARY) under both the well-founded and the stable model semantics, in
both reasoning schemes.

\section{Removing Higher-Order Existential Predicate Variables\label{transformation}}

The Turing machine simulations discussed in the previous sections rely
extensively on existential predicate variables. Actually, the simulations do not
need the full fragment of Higher-Order Datalog$^\neg$. In this section, we
explore whether the same expressive power can be achieved without the use of
such variables but utilizing other powerful constructs of the language, namely
partially applied predicates. To this end, we introduce a semantics-preserving
transformation that converts every $(k+1)$-Order Datalog$^\neg$ program
containing existential predicate variables of order $k\geq 1$ into an equivalent
program of the same order that does not contain existential predicate variables.
This result implies that $(k+1)$-Order Datalog$^\neg$ without existential
predicate variables has the same expressive power as the full language.
Furthermore, the transformation we propose preserves stratification: if the
original program is stratified, then the transformed program is stratified as
well. In other words, stratified programs without existential predicate
variables can be as expressive as unstratified programs with existential
predicate variables. We first illustrate the proposed transformation with our
Hamilton example.
\begin{example}
Consider our initial rule for the Hamilton query:
\begin{lstlisting}
hamilton(X,Y):-ordering(Ord),first(Ord,X),last(Ord,Y),subset(succ(Ord),e).
\end{lstlisting}
The key idea of the transformation is that instead of using an existential
predicate variable for finding an appropriate relation \lstinline`Ord` that is a
strict total order, we can use an iterative procedure that starts from the empty
relation and successively adds to it pairs of individual constants until we get
a relation that is indeed a strict total order. In other words, we construct
ourselves, in a bottom-up way,  the strict total order. The corresponding transformed
program is the following:
\begin{lstlisting}
hamilton(X,Y):-test(X,Y,empty).
test(X,Y,Ord):-ordering(Ord),first(Ord,X),last(Ord,Y),subset(succ(Ord),e).
test(X,Y,Ord):-test(X,Y,add(Ord,Z$_1$,Z$_2$)).
\end{lstlisting}
%
%
Notice that in the last rule above, we add a new pair \lstinline`(Z$_1$,Z$_2$)` to the
\lstinline`Ord` relation; the definition of \lstinline`add` is quite simple and
will be given later in the section. The variables \lstinline`Z$_1$` and
\lstinline`Z$_2$` are existential but of lower order than the relation
\lstinline`Ord`. In other words, our transformation decreases by one the order
of the existential predicate variables that the program contains. Thus, if we
repeat this process successively, at the end we get a program that only contains
existential variables of type $\iota$. In the Hamilton program, one step
suffices to complete the transformation (the variables \lstinline`Z$_1$` and
\lstinline`Z$_2$` are of type $\iota$).%
\hfill\hbox{\proofbox}
\end{example}

We can now provide a more general description of the aforementioned transformation.
Let $\Prog$ be a program and assume it contains the following rule:
\begin{lstlisting}
p($\bar{\tt X}$):-$\mathsf{B}[\bar{\tt X},{\tt R}]$.
\end{lstlisting}
where $\mathsf{B}[\bar{\tt X},{\tt R}]$ is an expression containing
the variables $\bar{\tt X} = {\tt X}_1,\ldots,{\tt X}_m$
that occur in the head of the rule and also a predicate variable ${\tt R}$
of order $k \geq 1$ that does not occur in the head and thus is existentially quantified.
Let $\rho_{\tt R}$ be the type of ${\tt R}$.
%
%
We replace the aforementioned rule with the following set of rules:
\begin{lstlisting}
p($\bar{\tt X}$):-test$_{\mathsf{B}}$($\bar{\tt X}$,empty$_{\rho_{\tt R}}$).
test$_{\mathsf{B}}$($\bar{\tt X}$,R):-$\mathsf{B}[\bar{\tt X},{\tt R}]$.
test$_{\mathsf{B}}$($\bar{\tt X}$,R):-test$_{\mathsf{B}}$($\bar{\tt X}$,add$_{\rho_{\tt R}}$(R,$\bar{\tt Z}$)).
\end{lstlisting}
The predicate ${\tt empty}_{\rho_{\tt R}}$ defines the empty relation of objects
of type $\rho_{\tt R}$ and the predicate ${\tt add}_{\rho_{\tt R}}$ adds an
element to a relation of type $\rho_{\tt R}$.
The predicate ${\tt eq}_{\rho_{\bar{\tt Z}}}$ is a higher-order
equality predicate\ifincludeappendix (see Appendix~\ref{appendixC})\fi.
\begin{lstlisting}
empty$_{\rho_{\tt R}}$($\bar{\tt Y}$):-false.
add$_{\rho_{\tt R}}$(R,$\bar{\tt Z}$,$\bar{\tt Y}$):-R($\bar{\tt Y}$).
add$_{\rho_{\tt R}}$(R,$\bar{\tt Z}$,$\bar{\tt Y}$):-eq$_{\rho_{\bar{\tt Z}}}$($\bar{\tt Z}$,$\bar{\tt Y}$).
\end{lstlisting}
%
Note that \lstinline`add$_{\rho_{\tt R}}$(R,$\bar{\tt Z}$)` denotes a relation
that contains every element $\bar{\tt Y}$ of ${\tt R}$ and also $\bar{\tt Z}$.
The process introduces only the existential variables \lstinline`$\bar{\tt Z}$`,
which are of order at most $k-1$.

%
%
\ifincludeappendix
We have the following theorem, whose proof is given in Appendix~\ref{appendixC}.
\else
We have the following theorem, whose proof is given in the supplementary material.
\fi
\begin{theorem}\label{transformation-correctness-query}
Let $\Prog$ be a Higher-Order Datalog$^\neg$ program that defines a query
$\mathcal{Q}$ under
the well-founded semantics (resp. stable model semantics with cautious reasoning, stable model semantics with brave reasoning).
Let $\prog{P}'$ be the program that results by applying the aforementioned
transformation to some rule of $\Prog$. Then, $\prog{P}'$ defines the same query $\mathcal{Q}$
under the well-founded semantics (resp. stable model semantics with cautious reasoning, stable model semantics with brave reasoning).
\end{theorem}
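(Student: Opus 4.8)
The plan is to show that $\Prog$ and $\prog{P}'$ induce the same meaning on every predicate constant they share, so that in particular the single output predicate defining $\mathcal{Q}$ receives identical interpretations. The transformation only replaces one rule for $\mathtt{p}$ and adds definitions for the fresh predicates $\mathtt{test}_{\mathsf{B}}$, $\mathtt{empty}_{\rho_{\tt R}}$, $\mathtt{add}_{\rho_{\tt R}}$ and $\mathtt{eq}_{\rho_{\bar{\tt Z}}}$; every other rule is untouched. I would therefore set up a correspondence between interpretations of $\Prog$ and interpretations of $\prog{P}'$: given an interpretation $I$ of the common predicates, let $\hat I$ extend it by assigning the auxiliary predicates their intended values, namely $\mathtt{empty}_{\rho_{\tt R}}$ the empty relation, $\mathtt{add}_{\rho_{\tt R}}(r,\bar z)$ the relation $r\cup\{\bar z\}$, $\mathtt{eq}_{\rho_{\bar{\tt Z}}}$ higher-order equality (whose correctness is established separately), and $\mathtt{test}_{\mathsf{B}}(\bar a,r)$ the value characterized below. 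The goal is to prove that $I\mapsto\hat I$ maps the well-founded model of $\Prog$ to that of $\prog{P}'$ and is a value-preserving bijection between the stable models of the two programs; the cautious and brave query equivalences then follow immediately.

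The heart of the argument is a characterization of $\mathtt{test}_{\mathsf{B}}$. Because the recursive call $\mathtt{test}_{\mathsf{B}}(\bar{\tt X},\mathtt{add}_{\rho_{\tt R}}(\mathtt{R},\bar{\tt Z}))$ occurs positively, the two defining rules for $\mathtt{test}_{\mathsf{B}}$ constitute a monotone recursion layered on top of the body $\mathsf{B}$. Treating $\mathsf{B}[\bar a,\cdot]$ as a fixed map from relations of type $\rho_{\tt R}$ to truth values, I would prove the key lemma that the value assigned to $\mathtt{test}_{\mathsf{B}}(\bar a,r)$ equals the truth-supremum $\bigsqcup_{r'\supseteq r}\mathsf{B}[\bar a,r']$, i.e.\ it is $\mtrue$ if some superset $r'$ of $r$ makes $\mathsf{B}$ true, $\mfalse$ if every superset makes it false, and $\mundef$ otherwise. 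The combinatorial ingredient is that, since the universe is finite, every relation is a finite set and every superset $r'\supseteq r$ is reachable from $r$ by finitely many single-element additions through $\mathtt{add}_{\rho_{\tt R}}$, while conversely the least fixpoint derives $\mathtt{test}_{\mathsf{B}}(\bar a,r)$ only through a finite ascending chain terminating in a relation satisfying $\mathsf{B}$. Instantiating the lemma at $r=\mathtt{empty}_{\rho_{\tt R}}=\emptyset$ gives $\mathtt{test}_{\mathsf{B}}(\bar a,\emptyset)=\bigsqcup_{r'}\mathsf{B}[\bar a,r']$, which is precisely the value that the original existential rule $\mathtt{p}(\bar{\tt X})\lrule\mathsf{B}[\bar{\tt X},\mathtt{R}]$ contributes to $\mathtt{p}(\bar a)$.

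With this lemma in hand, I would compare the two programs at the level of the Approximation Fixpoint Theory operators from which all three semantics are derived. The untouched rules contribute identically in $\Prog$ and $\prog{P}'$; the new $\mathtt{p}$-rule contributes, by the lemma, exactly the value that the replaced rule contributed; and the auxiliary predicates are completely determined by the common predicates and feed back only into $\mathtt{p}$ through $\mathtt{test}_{\mathsf{B}}$. Hence the approximating operator of $\prog{P}'$, after projecting onto the common predicates and fixing the auxiliary predicates to their canonical values, coincides with that of $\Prog$. Since the well-founded model is the least fixpoint of this operator and the stable models are the exact fixpoints of the associated stable operator, the correspondence $I\mapsto\hat I$ restricts to the required value-preserving bijection on both, establishing the three equivalences simultaneously.

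I expect the main obstacle to be making the test-correctness lemma precise in the three-valued, AFT setting rather than in the easy two-valued case. One must verify that the least three-valued fixpoint of the positive recursion really equals the truth-supremum over reachable supersets, in particular that undefined contributions of $\mathsf{B}$ propagate to $\mundef$ exactly when no superset is definitely true and some superset is not definitely false, and that this correspondence survives the reduct/stabilization step used to define stable models. The potential mutual dependency between $\mathtt{p}$ and the predicates occurring in $\mathsf{B}$ forces the comparison to be carried out as a simultaneous fixpoint argument at the operator level, rather than by a naive stratified substitution.
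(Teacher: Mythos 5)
Your proposal is correct and follows essentially the same route as the paper's proof: the paper defines exactly your extension map (called $\theta$ there), proves your key lemma that in any (three-valued) stable fixpoint $\mathtt{test}_{\mathsf{B}_i}(\bar{x},r)$ equals the truth-supremum of the body over all supersets $r'\geq r$ (via the same finite single-element-addition chain argument, carried out at the level of the inner least fixpoints $\lfp A_{\prog{P}'}(\cdot,J)_1$ and $\lfp A_{\prog{P}'}(I,\cdot)_2$, which is precisely the ``simultaneous fixpoint argument'' you anticipate), and then shows $\theta$ is an order-preserving bijection between (three-valued) stable models, from which the cautious, brave, and well-founded equivalences follow just as you describe.
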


By applying the transformation described above
to each rule and every existential variable of order $k$, we obtain a program
without such variables. Repeating this process iteratively for
variables of order $k-1$, $k-2$, and so on, we can eventually eliminate all
existential predicate variables from the initial program.


If we denote with Higher-Order Datalog$^{\neg,\not\exists}$ the fragment of
Higher-Order Datalog$^{\neg}$ that does not contain existential predicate
variables, then the following corollary is immediate:
\begin{corollary}
$(k+1)$-Order Datalog$^{\neg,\not\exists}$ and Stratified $(k+1)$-Order
Datalog$^{\neg,\not\exists}$ capture \EXPTIME[k] under the well-founded
semantics. $(k+1)$-Order Datalog$^{\neg,\not\exists}$ and Stratified+Choices
$(k+1)$-Order Datalog$^{\neg,\not\exists}$ under the stable model semantics
capture \coNEXPTIME[k] with cautious reasoning and \NEXPTIME[k] with brave
reasoning.
\end{corollary}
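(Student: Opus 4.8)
The plan is to reduce the claim to a single semantic identity describing what the auxiliary predicate $\mathtt{test}_{\mathsf{B}}$ computes, and then to lift that identity to each of the three semantics. Recall that in the original rule $\mathtt{p}(\bar{\mathtt{X}})\lrule\mathsf{B}[\bar{\mathtt{X}},\mathtt{R}]$ the variable $\mathtt{R}$ is existentially quantified, so under the (three-valued) semantics the truth value of $\mathtt{p}(\bar a)$ is the supremum, in the truth ordering $\mfalse \leq \mundef \leq \mtrue$, of the truth values of $\mathsf{B}[\bar a, r^\ast]$ taken over all relations $r^\ast$ of type $\rho_{\mathtt{R}}$. The key claim is therefore that, in the transformed program, $\mathtt{test}_{\mathsf{B}}(\bar a, \mathtt{empty}_{\rho_{\mathtt{R}}})$ carries exactly this same supremum, so that $\mathtt{p}(\bar a)$ receives the same value in $\Prog$ and in $\prog{P}'$.

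First I would pin down the denotations of the auxiliary predicates. The predicate $\mathtt{empty}_{\rho_{\mathtt{R}}}$ denotes the empty relation (its only rule has body $\mathtt{false}$), $\mathtt{add}_{\rho_{\mathtt{R}}}(r,\bar z)$ denotes $r \cup \{\bar z\}$, and $\mathtt{eq}$ denotes higher-order equality; each of these is defined by negation-free (or suitably stratified) rules and hence has its intended value in every model under consideration. Crucially, since the input database contains only finitely many individual constants, the set of relations of type $\rho_{\mathtt{R}}$ is finite, so the recursion generated by $\mathtt{test}_{\mathsf{B}}(\bar{\mathtt{X}},\mathtt{R})\lrule\mathtt{test}_{\mathsf{B}}(\bar{\mathtt{X}},\mathtt{add}_{\rho_{\mathtt{R}}}(\mathtt{R},\bar{\mathtt{Z}}))$ explores, starting from any relation $r$, exactly the relations $r^\ast \supseteq r$ reachable by repeatedly inserting single tuples. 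Because this recursion is positive in $\mathtt{test}_{\mathsf{B}}$ (the self-reference does not pass through negation), its defining operator is monotone and its value is determined as a least fixpoint; a straightforward induction on $|r^\ast| - |r|$ then yields
\[
\llbracket \mathtt{test}_{\mathsf{B}}(\bar a, r)\rrbracket \;=\; \bigsqcup_{r^\ast \supseteq r}\, \llbracket \mathsf{B}[\bar a, r^\ast]\rrbracket ,
\]
and in particular $\llbracket \mathtt{test}_{\mathsf{B}}(\bar a, \mathtt{empty}_{\rho_{\mathtt{R}}})\rrbracket = \bigsqcup_{r^\ast} \llbracket \mathsf{B}[\bar a, r^\ast]\rrbracket$, matching the existential reading of $\mathtt{R}$ above. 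Here I would use the fact that the predicate constants occurring inside $\mathsf{B}$ are untouched by the transformation, so $\mathsf{B}$ denotes the same relation-valued map of its argument in both programs, together with the fact that stratification is preserved, which places $\mathtt{test}_{\mathsf{B}}$ at a stratum where $\mathsf{B}$ and the auxiliaries are already fixed.

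The last step is to transfer this identity through the three semantics uniformly, which I regard as the main obstacle, since it must be argued at the level of the AFT-based semantics rather than for one fixed model. For the well-founded semantics I would show that the well-founded model of $\prog{P}'$, restricted to the predicate constants of $\Prog$, coincides with the well-founded model of $\Prog$: the new predicates are deterministically defined from the old ones, so extending the well-founded model of $\Prog$ by their intended (three-valued) values yields the well-founded model of $\prog{P}'$, and the displayed identity guarantees agreement on $\mathtt{p}$. For the stable model semantics the same determinacy gives a bijection between the stable models of $\Prog$ and those of $\prog{P}'$: every stable model of $\Prog$ extends uniquely to one of $\prog{P}'$ (the auxiliaries introduce no further nondeterminism, as their recursion is positive), and every stable model of $\prog{P}'$ restricts to one of $\Prog$, the two agreeing on $\mathtt{p}$ by the identity above. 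Consequently $\mathtt{p}(\bar a)$ is true in all (resp. some) stable models of $\Prog$ exactly when it is true in all (resp. some) stable models of $\prog{P}'$, so both cautious and brave reasoning are preserved. The delicate points throughout are the correct three-valued aggregation performed by the $\mathtt{test}_{\mathsf{B}}$ recursion and the verification that the added rules genuinely preserve stability; both hinge on the positivity of the recursion and on the preservation of stratification.
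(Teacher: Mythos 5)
Your overall route is the same as the paper's: fix the denotations of $\mathtt{empty}_{\rho_{\mathtt{R}}}$, $\mathtt{add}_{\rho_{\mathtt{R}}}$ and $\mathtt{eq}$; prove the identity $\mathtt{test}_{\mathsf{B}}(\bar{x},r)=\bigvee\{\mathsf{B}[\bar{x},r'] \mid r\subseteq r'\}$; and lift it to the three semantics via a correspondence between (three-valued) stable models of $\Prog$ and $\prog{P}'$, which is exactly how the paper's appendix proceeds. However, your justification of the key identity has a genuine gap: you derive it from the fact that stratification ``places $\mathtt{test}_{\mathsf{B}}$ at a stratum where $\mathsf{B}$ and the auxiliaries are already fixed.'' This stratum-by-stratum reasoning is unavailable precisely where the corollary needs the transformation. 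The Stratified+Choices programs used for the stable-model half are \emph{not} stratified: the choice rules involve unstratified negation, and the body of the transformed rule for $\mathtt{out}$ depends on them through the state predicates; likewise, the first parts of both sentences of the corollary concern arbitrary (unstratified) Datalog$^{\neg,\not\exists}$ programs. Moreover, your premise that ``$\mathsf{B}$ denotes the same relation-valued map of its argument in both programs'' is circular in general, since $\mathsf{B}$ may mention $\mathtt{p}$ itself, and in $\prog{P}'$ the predicate $\mathtt{p}$ is defined via $\mathtt{test}_{\mathsf{B}}$, so the two share a stratum and neither is fixed before the other. The paper avoids both problems by proving the identity for \emph{arbitrary} programs at the level of the approximating operator $A_{\prog{P}'}$: a one-step characterization of how $A_{\prog{P}'}$ acts on $\mathtt{test}_{\mathsf{B}}$, an unfolding performed $|r'|-|r|$ times inside each inner least-fixpoint computation, and prefixpoint/least-fixpoint comparisons --- none of which appeals to stratification. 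Your appeal to ``positivity of the recursion'' points in this direction but does not substitute for that argument, because the approximator mixes $\mathtt{test}_{\mathsf{B}}$ with the possibly negative, possibly unstratified body $\mathsf{B}$.

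Separately, even granting the transformation theorem, your argument stops short of the capture statement. The corollary also needs: (a) iteration of the transformation --- one application removes a single existential predicate variable of order $k$ but introduces existential variables of order at most $k-1$, so one must sweep over all rules and variables and then descend through orders $k-1, k-2,\ldots$; (b) the observation that this iteration preserves stratification and leaves the choice rules untouched, so the simulation programs of Sections~\ref{wellfounded-simulation} and~\ref{stable-simulation} land in the Stratified (resp.\ Stratified+Choices) fragments of Datalog$^{\neg,\not\exists}$; and (c) the upper-bound half of ``capture,'' which holds because every Datalog$^{\neg,\not\exists}$ program is in particular a Datalog$^{\neg}$ program, so Theorems~\ref{turing-machine-well-founded} and~\ref{Turing-machine-stable-model} bound its evaluation complexity. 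These steps are short, and the paper treats them as immediate, but without them what you have sketched is only the transformation-correctness theorem, not the corollary itself.
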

Obviously, Higher-Order Datalog$^{\neg,\not\exists}$ under any of the
aforementioned semantics captures \ELEMENTARY.

\section{Conclusions and Future Work\label{conclusions}}
We have presented an exploration of the expressive power of Higher-Order
Datalog$^\neg$ under the well-founded and the stable model semantics. Our
results identify fragments of the language that, despite being syntactically
restricted, possess the same expressive power as the full language. Moreover,
our results indicate that by increasing the order of programs under the
well-founded semantics we can surpass the expressive power of lower-order
programs under the stable model semantics.

There are several challenging directions for future work. First, although our
results indicate that by increasing the order of the programs the well-founded
semantics can match the power of stable model semantics, it is still unclear to
us if there exists a formal transformation from a $k$-order program $\mathsf{P}$ to
a $(k+1)$-order program $\mathsf{P}'$ such that the well-founded semantics of
$\mathsf{P}'$ captures, in some sense, the stable model semantics of
$\mathsf{P}$. Another direction that would be very interesting and certainly
quite challenging, would be the implementation of Higher-Order Datalog$^\neg$.
In general, implementing efficiently non-monotonic extensions of Datalog is
already non-trivial even at the first-order case. Probably, a promising
direction would be to identify interesting subclasses of Higher-Order
Datalog$^\neg$ that lend themselves to efficient implementation while at the
same time retaining some strong expressibility features of the language. For example,
as observed by one of the reviewers, it would be interesting to investigate
the notion of \emph{safety} in Higher-Order Datalog$^\neg$ and whether this notion
affects the expressiveness and the potential for efficient implementation of the language.

\bibliography{expressivity}

\ifincludeappendix
\clearpage
\appendix
\section{\hspace{-0.18cm}: The Semantics of Higher-Order Datalog$^\neg$}\label{semantics}

The semantics of the base type $\bool$ is the classical Boolean domain
$\{\mathit{true}, \mathit{false}\}$ and that of the base type $\basedom$ is
$U_{\mathsf{P}}$, namely the set of individual constant symbols in \Prog.
We start by defining the meanings of types of Higher-Order Datalog${^\neg}$.

\begin{definition}\label{def:orders_two-valued}
Let $\mathsf{P}$ be a Higher-Order Datalog${^\neg}$ program. We define the two-valued meaning $\mo{\tau}_{U_{\mathsf{P}}}$ and the
three-valued meaning $\mos{\tau}_{U_{\mathsf{P}}} \supseteq \mo{\tau}_{U_{\mathsf{P}}}$ of a
type $\tau$ with respect to $U_{\mathsf{P}}$, as follows:
\begin{itemize}
  \item $\mo{\bool}_{U_{\mathsf{P}}} = \{\mathit{true}, \mathit{false}\}$ and $\mos{\bool}_{U_{\mathsf{P}}} = \{\mfalse,\mundef,\mtrue\}$.
  The partial order $\leq_\bool$ is the one induced by the ordering $\mfalse <_\bool \mundef <_\bool \mtrue$;
  the partial order $\preceq_\bool$ is the one induced by the ordering $\mundef \prec_\bool \mfalse$ and $\mundef \prec_\bool \mtrue$.

  \item $\mo{\basedom}_{U_{\mathsf{P}}} = \mos{\basedom}_{U_{\mathsf{P}}} = U_{\mathsf{P}}$.
        The partial order $\leq_\basedom$ is the
        trivial one defined as $d \leq_\basedom d$ for all $d \in U_{\mathsf{P}}$.
  \item $\mo{\rho \to \pi}_{U_{\mathsf{P}}} = \mo{\rho}_{U_{\mathsf{P}}} \to \mo{\pi}_{U_{\mathsf{P}}}$,
        namely the set of all functions from  $\mo{\rho}_{U_{\mathsf{P}}}$ to $\mo{\pi}_{U_{\mathsf{P}}}$ and
        $\mos{\rho \to \pi}_{U_{\mathsf{P}}} = \mo{\rho}_{U_{\mathsf{P}}} \to \mos{\pi}_{U_{\mathsf{P}}}$, namely
        the set of all functions from $\mo{\rho}_{U_{\mathsf{P}}}$ to $\mos{\pi}_{U_{\mathsf{P}}}$.
        The partial order $\leq_{\rho \to \pi}$ is defined as follows:
        for all $f,g \in \mos{\rho \to \pi}_{U_{\mathsf{P}}}$,
        $f \leq_{\rho \to \pi} g$ iff $f(d) \leq_{\pi} g(d)$ for all $d \in \mo{\rho}_{U_{\mathsf{P}}}$.
        The partial order $\preceq_{\rho \to \pi}$ is defined as follows:
        for all $f,g \in \mos{\rho \to \pi}_{U_{\mathsf{P}}}$,
        $f \preceq_{\rho \to \pi} g$ iff $f(d) \preceq_{\pi} g(d)$ for all $d \in \mo{\rho}_{U_{\mathsf{P}}}$.
\end{itemize}
\end{definition}



%
The subscripts from the above partial orders will be omitted when they are
obvious from context. Moreover, we will omit the subscript $U_{\mathsf{P}}$
assuming that our semantics is defined with respect to a specific program
$\mathsf{P}$. 
For every predicate type $\pi$, $(\mo{\pi}, \leq_\pi)$ is a complete lattice
and $(\mo{\pi}, \preceq_\pi)$ is a complete meet-semilattice. 
We denote by $\bigvee_{\leq_{\pi}}$ and
$\bigwedge_{\leq_{\pi}}$ the corresponding lub and glb operations of the above
lattice and by $\bigwedge_{\preceq_{\pi}}$ the corresponding glb operation of the above 
semilattice.

To denote that an expression $\mathsf{E}$ has type $\rho$ we will often write $\mathsf{E}:\rho$.

\begin{definition}\label{def:interpretation_Herbrand}
A three-valued Herbrand interpretation $I$
of a program $\mathsf{P}$ assigns to each
individual constant $\mathsf{c}$ of $\mathsf{P}$, the element
$I(\mathsf{c}) = \mathsf{c}$, and to each predicate constant
$\mathsf{p} : \pi$ of $\mathsf{P}$, an element $I(\mathsf{p}) \in \mos{\pi}$.
An interpretation is called two-valued if for every $\mathsf{p} : \pi$ of $\mathsf{P}$, $I(\mathsf{p}) \in \mo{\pi}$.
\end{definition}

We will denote the set of three-valued Herbrand interpretations of a program $\mathsf{P}$
with $\mathcal{H}_\mathsf{P}$ and the set of two-valued Herbrand interpretations with
$H_\mathsf{P}$. Since $\mo{\pi} \subseteq \mos{\pi}$ for all types $\pi$ it follows also that
$H_\mathsf{P} \subseteq \mathcal{H}_\mathsf{P}$.
We define a partial order on $\mathcal{H}_\mathsf{P}$ as
follows: for all $I, J \in \mathcal{H}_\mathsf{P}$, $I \leq J$
iff for every predicate constant $\mathsf{p} : \pi$ that appears in
$\mathsf{P}$, $I(\mathsf{p}) \aleq[\pi] J(\mathsf{p})$.

\begin{definition}\label{def:state_Herbrand}
A \emph{Herbrand state} $s$ of a program $\mathsf{P}$ is a function that
assigns to each argument variable $\mathsf{R}$ of type $\rho$, an element
$s(\mathsf{R}) \in \mo{\rho}$.  We denote the set of Herbrand states with
$S_\mathsf{P}$.
\end{definition}
In the following, $s[\mathsf{R}_1/d_1,\ldots,\mathsf{R}_n/d_n]$ is used to
denote a state that is identical to $s$ the only difference being that the new
state assigns to each $\mathsf{R}_i$ the corresponding value $d_i$; for brevity,
we will also denote it by $s[\bar{\mathsf{R}}/\bar{d}]$.
\enlargethispage{1\baselineskip}

\begin{definition}\label{tuple-semantics}
Let $\mathsf{P}$ be a Higher-Order Datalog${^\neg}$ program, $\mathcal{I}$ a three-valued Herbrand
interpretation of $\mathsf{P}$, and $s$ a Herbrand state.
The \emph{three-valued semantics} of expressions and bodies is defined as follows:
\begin{enumerate}
  \item $\mwrst{\mathsf{R}}{\mathcal{I}}{s} = s(\mathsf{R})$
  \item $\mwrst{\mathsf{c}}{\mathcal{I}}{s} = \mathcal{I}(\mathsf{c}) = \mathsf{c}$
  \item $\mwrst{\mathsf{p}}{\mathcal{I}}{s} = \mathcal{I}(\mathsf{p})$
  \item \label{item:threeval-apply}$\mwrst{(\mathsf{E}_1\ \mathsf{E}_2)}{\mathcal{I}}{s} = \bigwedge_{\preceq_{\pi}}\{\lsem \mathsf{E}_1 \rsem^{*}_s(\mathcal{I})(d) \mid d \in \lsem \rho\rsem, \lsem \mathsf{E}_2 \rsem^{*}_s(\mathcal{I}) \preceq_{\rho} d\}$,
      for $\mathsf{E}_1\! :\! \rho \to \pi$ and $\mathsf{E}_2\! :\! \rho$
  \item $\mwrst{(\mathsf{E}_1\approx \mathsf{E}_2)}{\mathcal{I}}{s} = \begin{cases}
    \mathit{true},  & \text{if } \mwrst{\mathsf{E}_1}{\mathcal{I}}{s} = \mwrst{\mathsf{E}_2}{\mathcal{I}}{s} \\
    \mathit{false}, & \text{otherwise}
    \end{cases}$
  \item $\mwrst{(\sim \mathsf{E})}{\mathcal{I}}{s} =  (\mwrst{\mathsf{E}}{\mathcal{I}}{s})^{-1}$, with $\mathit{true}^{-1}\!=\!\mathit{false}$, $\mathit{false}^{-1}\!=\!\mathit{true}$
  and $\mathit{undef}^{-1}\!=\!\mathit{undef}$
  \item $\mwrst{(\mathsf{E}_1 \wedge \cdots \wedge \mathsf{E}_m)}{\mathcal{I}}{s} =
    \bigwedge_{\leq_\bool}\{\mwrst{\mathsf{E}_1}{\mathcal{I}}{s},\ldots,\mwrst{\mathsf{E}_m}{\mathcal{I}}{s}\}$
\end{enumerate}
\end{definition}

\begin{definition}\label{def:three-valued_model}
Let $\mathsf{P}$ be a program and $\mathcal{M}$ be a three-valued Herbrand
interpretation of $\mathsf{P}$. Then, $\mathcal{M}$ is a
\emph{three-valued Herbrand model} of $\mathsf{P}$ iff for every rule
$\mathsf{p}\ \overline{\mathsf{R}} \lrule \mathsf{B}$ in $\mathsf{P}$
and for every Herbrand state $s$,
$\mwrst{\mathsf{B}}{\mathcal{M}}{s} \leq_\bool \mwrst{\mathsf{p}\ \bar{\mathsf{R}}}{\mathcal{M}}{s}$.
\end{definition}

\begin{definition}\label{def:three-valuedTP}
  Let $\mathsf{P}$ be a Higher-Order Datalog${^\neg}$ program. The \emph{three-valued immediate consequence operator}
  $\mathcal{T}_\mathsf{P} : \mathcal{H}_\mathsf{P} \to \mathcal{H}_\mathsf{P}$ is defined
  for every predicate constant $\mathsf{p} : \rho_1 \to \cdots \to \rho_n \to \bool$ in $\mathsf{P}$ and
  all $d_1 \in \mo{\rho_1},\ldots, d_n \in \mo{\rho_n}$, as:
  $\mathcal{T}_{\mathsf{P}}(\mathcal{I})(\mathsf{p})(\bar{d}) =
        \bigvee_{\leq_\bool}\{
          \mwrst{\mathsf{B}}{\mathcal{I}}{s[\bar{\mathsf{R}}/\bar{d}]} \mid
                  \mbox{$s\in S_{\mathsf{P}}$ and
                        $(\mathsf{p}\ \bar{\mathsf{R}} \lrule \mathsf{B})$ in $\mathsf{P}$}\}$.
\end{definition}


\begin{definition}\label{def:consistent_lattice}\label{orderings_on_pairs}
Let $(L,\leq)$ be a complete lattice. We define $L^{c} =\{(x,y) \in L \times L \mid x \leq y\}$.
Moreover, we define the relations $\leq$ and  $\preceq$, so that
for all $(x,y),(x',y') \in L^c$:
%
$(x, y) \leq (x', y')$ iff $x \leq x'$ and $y \leq y'$, and
$(x, y) \preceq (x', y')$ iff $x \leq x'$ and $y' \leq y$.
%
\end{definition}

We will denote the \emph{first} and \emph{second} selection
functions on pairs with the more compact notation $[\cdot]_1$ and $[\cdot]_2$:
given any pair $(x,y)$, it is $[(x,y)]_1 = x$ and $[(x,y)]_2 = y$.
It is easy to see that $L^{c}$ is a complete lattice with respect to $\leq$ where $\bigvee_{\leq}$
and $\bigwedge_{\leq}$ are defined in a pointwise way and $L^{c}$ is a meet-semilattice with respect to $\preceq$
where $\bigwedge_{\preceq} S = (\bigwedge\{ [x]_1 \mid x \in S \}, \bigvee \{ [x]_2 \mid x \in S \})$.

\begin{definition}\label{def:pair-semantics}
Let $\mathsf{P}$ be a program, $(I, J) \in H^c_\mathsf{P}$, and $s \in S_{\mathsf{P}}$.
The pair semantics of expressions and bodies is defined as follows:
\begin{enumerate}
  \item $\mwrc{\mathsf{R}}{I,J}{s} = (s(\mathsf{R}),s(\mathsf{R}))$
  \item $\mwrc{\mathsf{c}}{I,J}{s} = (I(\mathsf{c}), J(\mathsf{c}))$
  \item $\mwrc{\mathsf{p}}{I,J}{s} = (I(\mathsf{p}), J(\mathsf{p}))$
  \item $\mwrc{(\mathsf{E}_1\ \mathsf{E}_2)}{I,J}{s} = (\bigwedge_{\leq_\pi} \{f(d) \mid d \in \lsem \rho\rsem, l \leq d \leq u \}, \bigvee_{\leq_\pi} \{g(d) \mid d \in \lsem \rho\rsem, l \leq d \leq u\})$,
      where $(f,g) = \mwrc{\mathsf{E}_1}{I,J}{s}$, $(l,u) = \mwrc{\mathsf{E}_2}{I,J}{s}$ for $\mathsf{E}_1\! :\! \rho \to \pi$ and $\mathsf{E}_2\! :\! \rho$.
  \item $\mwrc{(\mathsf{E}_1\approx \mathsf{E}_2)}{I,J}{s} = \begin{cases}
    (\mathit{true}, \mathit{true}),  & \text{if } \mwrc{\mathsf{E}_1}{I,J}{s} = \mwrc{\mathsf{E}_2}{I,J}{s} \\
    (\mathit{false}, \mathit{false}), & \text{otherwise}
    \end{cases}$
  \item $\mwrc{(\sim \mathsf{E})}{I,J}{s} =  \mwrc{\mathsf{E}}{I,J}{s}^{-1}$, with $(\mathit{true},\mathit{true})^{-1}\!=\!(\mathit{false},\mathit{false})$, $(\mathit{false},\mathit{false})^{-1}\!=\!(\mathit{true},\mathit{true})$
  and $(\mathit{false},\mathit{true})^{-1}\!=\!(\mathit{false},\mathit{true})$
  \item $\mwrc{(\mathsf{E}_1 \wedge \cdots \wedge \mathsf{E}_m)}{I,J}{s} =
  \bigwedge_{\leq_\bool}\{\mwrc{\mathsf{E}_1}{I,J}{s},\ldots,\mwrc{\mathsf{E}_m}{I,J}{s}\}$
\end{enumerate}
\end{definition}

\begin{definition}\label{def:pair-ap}
Let $\mathsf{P}$ be a program. The \emph{approximating operator}
$\ATP : H^c_\mathsf{P} \to H^c_\mathsf{P}$ is defined
for every predicate constant $\mathsf{p} : \rho_1 \to \cdots \to \rho_n \to \bool$ in $\mathsf{P}$ and
all $d_1 \in \mo{\rho_1},\ldots, d_n \in \mo{\rho_n}$, as $\ATP(I,J) = (\ATP(I,J)_1, \ATP(I,J)_2)$
where, for $i \in \{ 1, 2 \}$:
\[ \ATP(I,J)_i(\mathsf{p}) (\bar{d})= \bigvee\nolimits_{\leq_\bool}\{
  [\mwrc{\mathsf{B}}{I,J}{s[\bar{\mathsf{R}}/\bar{d}]}]_i \mid
          \mbox{$s\in S_{\mathsf{P}}$ and
                $(\mathsf{p}\ \bar{\mathsf{R}} \lrule \mathsf{B})$ in $\mathsf{P}$}\}
\]
\end{definition}

We proceed with a definition of an isomorphism,
\ie an order-preserving bijection,
for every predicate type $\pi$, between the set $\mos{\pi}$ of three-valued meanings and the set $\cp{\mo{\pi}}$ of  pairs of two-valued ones.

\begin{definition}\label{tau-definition}
For every predicate type $\pi$, we define the functions
$\tau_\pi: \mos{\pi} \to \cp{\mo{\pi}}$ and
$\tau^{-1}_\pi: \cp{\mo{\pi}} \to \mos{\pi}$,
as follows:
\begin{itemize}
  \item $\tau_o(\mfalse) = (\mfalse, \mfalse)$, $\tau_o(\mtrue) = (\mtrue, \mtrue)$,
                        $\tau_o(\mundef) = (\mfalse, \mtrue)$
  \item $\tau_{\rho \to \pi}(f) = (\lambda d. [\tau_\pi(f(d))]_1, \lambda d. [\tau_\pi(f(d))]_2)$

\end{itemize}
and
\begin{itemize}
\item $\tau^{-1}_o(\mfalse,\mfalse) = \mfalse$, $\tau^{-1}_o(\mtrue,\mtrue) = \mtrue$,
                              $\tau^{-1}_o(\mfalse, \mtrue) = \mundef$
\item $\tau^{-1}_{\rho \to \pi}(f_1, f_2) = \lambda d. \tau^{-1}_{\pi}(f_1(d), f_2(d))$
\end{itemize}
\end{definition}

The functions $\tau_\pi$ defined above, can easily be extended to functions
between ${\cal H}_{\mathsf{P}}$ and $H^{c}_\mathsf{P}$: given ${\cal I} \in
{\cal H}_{\mathsf{P}}$, we define $\tau({\cal I}) = (I,J)$, where for every
predicate constant $\mathsf{p}:\pi$ it holds $I(\mathsf{p}) = [\tau_{\pi}({\cal
I}(\mathsf{p}))]_1$ and $J(\mathsf{p}) = [\tau_{\pi}({\cal I}(\mathsf{p}))]_2$.
Conversely, given a pair $(I,J) \in H^c_\mathsf{P}$, we define the
three-valued Herbrand interpretation $\tau^{-1}(I,J)$, for every predicate
constant $\mathsf{p}:\pi$, as follows: $\tau^{-1}(I,J)(\mathsf{p}) =
\tau^{-1}_{\pi}(I(\mathsf{p}),J(\mathsf{p}))$.

\enlargethispage{1\baselineskip}
\begin{proposition}
Let $\mathsf{P}$ be a Higher-Order Datalog${^\neg}$ program. Then,
$\ATP(I,J) = \tau(\mathcal{T}_\mathsf{P}(\tau^{-1}(I,J)))$.
\end{proposition}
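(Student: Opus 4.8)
The plan is to prove the stated identity $\ATP(I,J) = \tau(\mathcal{T}_\mathsf{P}(\tau^{-1}(I,J)))$ by reducing it to a pointwise statement about the semantics of bodies, and then establishing that pointwise statement by structural induction on expressions. The key observation is that both sides of the equation are built, component-wise, by taking $\bigvee_{\leq_\bool}$ over the same index set of rules and states (compare Definition~\ref{def:three-valuedTP} with Definition~\ref{def:pair-ap}). Since $\tau$ acts componentwise on predicate constants and $\tau_o$ commutes with the lattice join $\bigvee_{\leq_\bool}$ on the relevant elements, it suffices to show that for every body $\mathsf{B}$, every state $s$, and every $(I,J) \in H^c_\mathsf{P}$,
\[
\mwrc{\mathsf{B}}{I,J}{s} = \tau_\bool\!\left(\mwrst{\mathsf{B}}{\tau^{-1}(I,J)}{s}\right).
\]
Once this body-level identity is in place, applying it inside the defining joins of $\ATP$ and $\mathcal{T}_\mathsf{P}$, and using the compatibility of $\tau_o$ with $\bigvee_{\leq_\bool}$, yields the operator identity after unwinding the definition of $\tau$ on interpretations.

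The core of the argument is therefore a single structural induction establishing the body-level identity above, but stated more generally for arbitrary expressions $\mathsf{E} : \rho$ (not only atoms), namely $\mwrc{\mathsf{E}}{I,J}{s} = \tau_\rho(\mwrst{\mathsf{E}}{\tau^{-1}(I,J)}{s})$. First I would handle the base cases: variables, individual constants, and predicate constants. For a predicate constant $\mathsf{p}$, the claim is exactly the definition of $\tau^{-1}$ together with the fact that $\tau_\pi(\tau^{-1}_\pi(x,y)) = (x,y)$, which follows because $\tau_\pi$ and $\tau^{-1}_\pi$ are mutually inverse (Definition~\ref{tau-definition}); for variables and individual constants both semantics return the trivially-paired value. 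Then I would treat the inductive cases corresponding to the clauses of Definitions~\ref{tuple-semantics} and~\ref{def:pair-semantics}: negation, where I check that $(\cdot)^{-1}$ on pairs matches $\tau_o$ applied to the three-valued $(\cdot)^{-1}$ by a three-case inspection on $\{\mtrue,\mfalse,\mundef\}$; equality, which is immediate since both semantics dispatch on equality of the interpreted subexpressions; and conjunction, where I use that $\tau_o$ is a lattice isomorphism onto its image and hence commutes with $\bigwedge_{\leq_\bool}$.

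The main obstacle, and the step deserving the most care, is the application case $(\mathsf{E}_1\ \mathsf{E}_2)$ where $\mathsf{E}_1 : \rho \to \pi$ and $\mathsf{E}_2 : \rho$. On the three-valued side (Definition~\ref{tuple-semantics}, clause~\ref{item:threeval-apply}) the meaning is a $\bigwedge_{\preceq_\pi}$ over all two-valued arguments $d$ above the argument meaning in the $\preceq_\rho$ ordering, whereas on the pair side (Definition~\ref{def:pair-semantics}) it is a \emph{pair} whose components are a $\bigwedge_{\leq_\pi}$ and a $\bigvee_{\leq_\pi}$ over the interval $[l,u]$ determined by the bounds $(l,u) = \mwrc{\mathsf{E}_2}{I,J}{s}$. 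The crux is to verify that, under the isomorphism $\tau$, the single three-valued infimum in the knowledge order $\preceq$ splits exactly into the lower-infimum and upper-supremum of the pair representation, and that the condition $\mwrst{\mathsf{E}_2}{\cdot}{s} \preceq_\rho d$ corresponds precisely to $l \leq d \leq u$ under $(l,u) = \tau_\rho(\mwrst{\mathsf{E}_2}{\cdot}{s})$. This requires invoking the induction hypotheses for $\mathsf{E}_1$ and $\mathsf{E}_2$, the definition of $\tau_o$ on $\mundef$ as the pair $(\mfalse,\mtrue)$, and the characterization of $\bigwedge_{\preceq}$ on $L^c$ recorded after Definition~\ref{orderings_on_pairs} as $(\bigwedge\{[x]_1\}, \bigvee\{[x]_2\})$. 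The compatibility of the interval $\{d \mid l \leq d \leq u\}$ with the set $\{d \mid \mwrst{\mathsf{E}_2}{\cdot}{s} \preceq_\rho d\}$ is precisely where the isomorphism theorem relating $\mos{\rho}$ and $\cp{\mo{\rho}}$ does the decisive work, and I would isolate this correspondence as the key lemma that makes the induction go through.
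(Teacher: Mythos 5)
Your proposal is correct and takes essentially the same route as the paper: the paper likewise reduces the operator identity to the expression-level correspondence $\mwrc{\mathsf{E}}{\tau(\mathcal{I})}{s}=\tau(\mwrst{\mathsf{E}}{\mathcal{I}}{s})$ (equivalent to your formulation, since $\tau$ and $\tau^{-1}$ are mutually inverse), proved by structural induction on $\mathsf{E}$, together with the fact that $\tau$ preserves $\leq$ and $\preceq$ and hence commutes with the relevant lubs and glbs. Your case analysis, in particular the treatment of application via the characterization of $\bigwedge_{\preceq}$ on pairs, simply spells out the details the paper's terse proof leaves implicit.
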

\begin{proof}
Since $\tau$ preserves the orderings $\leq,\preceq$ \citep{iclp24}, it is easy to show it commutes with greatest lower bounds/least upper bounds
 taken with respect to $\leq$
and greatest lower bounds with respect to $\preceq$.
The result follows easily by showing that 
for any $\mathcal{I}$ and $s$, $\mwrc{\mathsf{E}}{\tau(\mathcal{I})}{s}=\tau(\mwrst{\mathsf{E}}{\mathcal{I}}{s})$
using induction on the structure of $\mathsf{E}$.
\end{proof}

In the following proofs, we will be using Definition~\ref{def:pair-semantics} 
for the semantics for expressions and $\AP$ of Definition~\ref{def:pair-ap}.
\begin{definition}\label{def:AFTsemantics}
Let $\mathsf{P}$ be a Higher-Order Datalog${^\neg}$ program.
We call:
\begin{itemize}
  \item $\tau^{-1}(I,J)$ a \emph{three-valued stable model of $\mathsf{P}$} if $(I,J)$ is a stable fixpoint of $\ATP$;
        that is, if $I=\lfp \ATP(\cdot,J)_1$ and $J=\lfp \ATP(I,\cdot)_2$;
  \item $\tau^{-1}(I,J)$ a \emph{stable model of $\mathsf{P}$} if $(I,J)$ is a stable fixpoint of $\ATP$ and $I=J$;
  \item $\tau^{-1}(I,J)$ is the \emph{well-founded model of $\mathsf{P}$} if 
        it is the $\preceq$-least three-valued stable model of $P$.
\end{itemize}
\end{definition}

\section{\hspace{-0.18cm}: Complexity of Computing the Semantics}\label{appendixB}

In this appendix we show an upper bound in the expressive power of the well-founded
semantics and the stable model semantics of $(k+1)$-Order Datalog$^\neg$.

We will use the expression $\exp_{k}(poly(n))$ and denote with $poly(n)$
a fixed polynomial of $n$ to hide when possible the different constants from
the analysis.



\begin{retheorem}{turing-machine-well-founded}
Let $\Prog$ be a $(k+1)$-Order Datalog$^\neg$ program that defines a query
$\mathcal{Q}_\mathsf{P}$ under the well-founded semantics.
Then, there exists a deterministic Turing
machine that takes as input an encoding of a database $D$
that uses $n$ individual constant symbols and
a ground atom $p(\bar{a})$, where $p$ is a predicate constant of $\Prog$ and
$\bar{a}$ is a tuple of those individual constants, and
decides whether $p(\bar{a}) \in \mathcal{Q}_\mathsf{P}(D)$, in at most
$exp_{k}(n^d)$ steps for some constant $d$.
\end{retheorem}

\begin{proof}
In the following, we assume that the representation of the input database $D$ is sensible, \ie
each of the $n$ different \emph{individual} constants is represented
with $O(\log n)$ bits and the input does not contain multiple entries or irrelevant information. Since we consider generic queries,
a compact representation of the constants is desired in order to argue about the true complexity.
Furthermore, we expect the input database to belong to the input schema. Therefore, we consider the relation symbols as fixed.
We will describe a Turing Machine that computes the well-founded model of $\Prog \cup D$
and accepts if and only if $p(\bar{a})$ is true in the well-founded model. We will argue that the machine does so in
$exp_{k}(n^d)$ steps for some fixed constant $d$.
We will demonstrate the following claims
that bound the complexity of the computation:
\begin{enumerate}
\item For every expression in the program that is of type
      $\rho$ that is at most $k$-order, the total number of elements in
      $\mo{\rho}_{U_{\mathsf{P}\cup D}}$ is at most $\exp_{k}(poly(n))$ and the
      total space we require to represent each in the machine's tape is
      $\exp_{k-1}(poly(n))$ bits or cells, where $\exp_{-1}(poly(n))$ denotes by convention $O(\log n)$.
      The complexity of comparing two such elements
      is $\exp_{k-1}(poly(n))$.
\item The size of a two-valued interpretation is at most $\exp_k(poly(n))$.
\item The application of the operator $A_{\mathsf{P}\cup D}$ can be performed in $\exp_{k}(poly(n))$ steps.
\item The total number of applications of $A_{\mathsf{P}\cup D}$ required to
      compute the well-founded fixpoint is at most $\exp_{k}(poly(n))$.
\end{enumerate}

We now argue about each claim specifically.

\paragraph{Claim 1.}

Let $\rho$ be a type with $order(\rho)=j$.
We show using induction on the structure of $\rho$ that
$|\mo{\rho}_{U_{\mathsf{P}\cup D}}|\leq\exp_{j}(poly(n))$ and that 
for all $x \in \mo{\rho}_{U_{\mathsf{P}\cup D}}$ we can represent $x$ in memory with at most
$\exp_{j-1}(poly(n))$ cells. The base case for types $\iota$ and $o$ follows immediately from our
assumptions about the representation of the input. Assume without loss of generality that $\rho=
\rho_1 \rightarrow \cdots \rightarrow \rho_m \rightarrow o$ and we have that for
each $\rho_i$ it is $order(\rho_i)\leq j-1$. We
represent any $x \in \mo{\rho}_{U_{\mathsf{P}\cup D}}$ in memory as a set of objects.
That is, we represent it as a series of entries of the form
$(x_1,x_2,\ldots,x_m)$ where $x_i \in \mo{\rho_i}_{U_{\mathsf{P}\cup D}}$. By induction hypothesis,
each $d_i$ has a size of at most $\exp_{j-2}(poly(n))$ bits
and there are at most $\exp_{j-1}(poly(n))$ elements for that subtype. Therefore,
the size of each entry is at most $m\cdot\exp_{j-2}(poly(n))$ bits and
there are at most $(\exp_{j-1}(poly(n))^m$ such entries. The total
size of the representation is thus bounded by
$m\cdot\exp_{j-2}(poly(n))\cdot (\exp_{j-1}(poly(n))^m$ so bounded by
$\exp_{j-1}(poly(n))$ bits. In order to argue about the cardinality for that
type it suffices to notice that we can produce each element of
$\mo{\rho}_{U_{\mathsf{P}\cup D}}$ by choosing a different possible subset of entries.
Therefore, there are at most $2^{(\exp_{j-1}(poly(n))^m}$ elements or, alternatively,
at most $\exp_{j}(poly(n))$.
Finally, for any $x,x' \in \mo{\rho}_{U_{\mathsf{P}\cup D}}$ the time needed to
check whether $x \leq x'$ is at most $\exp_{j-1}(poly(n))$. This is
easily shown again using induction on the structure of the types. To compare two
elements we have to check at most $\exp_{j-1}(poly(n))$ entries and verify
that every entry in the representation of $x$ also belongs to the
representation of $x'$. The cost of comparing two entries is
$m\cdot\exp_{j-2}(poly(n))$ by induction hypothesis and therefore it follows that
the total running time is controlled by the amount of entries and can be performed in
at most $\exp_{j-1}(poly(n))$ steps.

\paragraph{Claim 2.}
Since each constant predicate of $P$ is at most of $k+1$ order and there is a fixed
number of those, the total size of a two-valued interpretation $I$ is at most
$\exp_{k}(poly(n))$ bits. It is important to notice that we do not have to
produce all the elements of a type of order $k+1$ at any point in the computation.

\paragraph{Claim 3.}
$A_{\mathsf{P}\cup D}$ works on pairs of two-valued interpretations. We show that a
new pair of interpretations $(I',J')=A_{\mathsf{P}\cup D}(I,J)$ can be computed in $\exp_{k}(poly(n))$ steps.

To do that we bound the computation time for computing a (fixed) expression
$\mathsf{E}$ of the $(k+1)$-Order Datalog$^\neg$ program. Specifically, we show that computing
$\mwrc{\mathsf{E}}{I,J}{s}$ for some $I,J$, $I \leq J$ takes at most $\exp_{k}(poly(n))$ steps by
using induction on the structure of the expression $\mathsf{E}$. For the base case, where
$\mathsf{E}$ is either constant or variable, it follows  by the size of the representation of the program types. Most other cases
also follow easily. The only case that needs further justification, is the case of expression application.

Let $\mathsf{E} := (\mathsf{E}_1\ \mathsf{E}_2)$. By induction hypothesis, we have that both
$(f_1,f_2)=\mwrc{\mathsf{E}_1}{I,J}{s}$ and $(g_1,g_2)=\mwrc{\mathsf{E}_2}{I,J}{s}$
can be computed within $\exp_{k}(poly(n))$ steps. By following the semantic
definition of application, we have that:
$\mwrc{(\mathsf{E}_1\ \mathsf{E}_2)}{I,J}{s} = (\bigwedge\{f_1(x)\mid g_1 \leq x \leq g_2\},\bigvee
\{f_2(x)\mid g_1 \leq x \leq g_2\})$. Let $\rho$ be the type of $\mathsf{E}_2$ and $x\in \lsem \rho \rsem_{U_{\mathsf{P}\cup D}}$,
we need to compute $f_1(x)$. Remember that the result of the application is another set (or a boolean value).
To perform this we need to collect all entries of set $f_1$ whose first element is $x$
and create a new set with only them, after we erase $x$ from the beginning of each such entry thus resulting in the new set.
Recalling the arguments analyzed previously, this can be done in at most
$\exp_{k}(poly(n))$ steps since this is a bound on the number of entries in
$f$ and each entry manipulation can be done in $\exp_{k-1}(poly(n))$ steps.
The whole process including storing the result can be done in
$\exp_{k}(poly(n))$ steps. To compute $\bigwedge\{f_1(x)\mid g_1 \leq x \leq g_2\}$ we need to perform this step at most
$|\mo{\rho}_{U_{\mathsf{P}\cup D}}|$ times since $\{x \mid g_1 \leq x \leq g_2\} \subseteq
\mo{\rho}_{U_{\mathsf{P}\cup D}}$. But since $\mathsf{E}_2$ has a type that appears as
argument, it is of at most $k$-order thus we have to perform the application step
at most $\exp_{k}(poly(n))$ times. Finally, it is also easy to see that taking the glb of
the former results can also be done in $\exp_{k}(poly(n))$ steps and it follows that the whole
computation of the result can be performed in at most $\exp_{k}(poly(n))$ steps.
Likewise, we argue about the second element of the pair.

Next, for some predicate ${\tt p}$ to compute $A_{\mathsf{P}\cup D}(I,J)_i({\tt p})$ we need to
compute for every argument $\bar{d}$ the value of $A_{\mathsf{P}\cup D}(I,J)_i({\tt p})(\bar{d})$.
Recall that  \[ A_{\mathsf{P}\cup D}(I,J)_i(\mathsf{p}) (\bar{d})= \bigvee\nolimits_{\leq_\bool}\{
  [\mwrc{\mathsf{B}}{I,J}{s[\bar{\mathsf{R}}/\bar{d}]}]_i \mid
          \mbox{$s\in S_{\mathsf{P}\cup D}$ and
                $(\mathsf{p}\ \bar{\mathsf{R}} \lrule \mathsf{B})$ in $\mathsf{P} \cup D$}\}
\]
For each $\bar{d}$ a fixed number of bodies $\mwrc{\mathsf{B}}{I,J}{s[\bar{\mathsf{R}}/\bar{d}]}$
need to be computed for every $s\in S_{\mathsf{P}\cup D}$ that assigns $\bar{d}$ to the formal variables.
Therefore, we require $\exp_{k}(poly(n))$ steps to compute each body for such $s$ and we require to do it at most
$\exp_{k}(poly(n))$ times since every existential free variable in the bodies is of order at most $k$
and the domain of the types for such variables ranges over $\exp_{k}(poly(n))$ elements. Finally, for the whole representation of
${\tt p}$ we need to perform this for every possible $\bar{d}$ and there are
$\exp_{k}(poly(n))$ many of them since each $d_i$ in tuple $\bar{d}$ belongs to a type of order
at most $k$. Clearly the pair interpretation with respect to ${\tt p}$ of $A_{\mathsf{P}\cup D}(I,J)(\mathsf{p})$ can be
computed in at most  $\exp_{k}(poly(n))$ steps and that also holds for the
total $A_{\mathsf{P}\cup D}(I,J)$ since the number of different predicates is fixed for fixed programs and schemas of input databases.

\paragraph{Claim 4.}
The well-founded fixpoint can be computed by the following sequence of
interpretations $\{(I_{m},J_{m})\}$:
\begin{align*}
(I_0,J_0) & =(\bot,\top) \\
(I_{m+1},J_{m+1}) & =(\lfp A_{\mathsf{P}\cup D}(\cdot,J_m)_1,\lfp A_{\mathsf{P}\cup D}(I_m,\cdot)_2)
\end{align*}
To compute $\lfp A_{\mathsf{P}\cup D}(\cdot,J_m)_1$ we start from $\bot$ and iterate the
operator $A_{\mathsf{P}\cup D}(\cdot,J_m)_1$. Each predicate in any interpretation $I$ viewed as a
set contains at most $\exp_{k}(poly(n))$ elements. Each iteration adds at
least one new object/entry to at least one predicate/set unless convergence has been achieved. Therefore,
we require at most $\exp_{k}(poly(n))$ iterations. Similarly,
for $\lfp A_{\mathsf{P}\cup D}(I_m,\cdot)_2$ which we compute starting from $I_m$ and then iterating
the operator $A_{\mathsf{P}\cup D}(I_m,\cdot)_2$.

Since $I_{m+1} \geq I_m$ and  $J_{m+1} \leq J_m$ we have that for at least one
predicate  either its $I$ interpretation increased or its $J$ interpretation
decreased, or we converged. Therefore, the convergence for the fixpoint of the
sequence $\{(I_{m},J_{m})\}$ also happens in at most $\exp_{k}(poly(n))$
iterations. We conclude that the total number of times we will have to apply
$A_{\mathsf{P}\cup D}$ is at most $\exp_{k}(poly(n)) \cdot 2 \cdot \exp_{k}(poly(n))$.
Considering that each application of the operators takes $\exp_{k}(poly(n))$
computational steps as shown above, we retrieve the desired total complexity.
\end{proof}




\begin{retheorem}{Turing-machine-stable-model}
Let $\Prog$ be a $(k+1)$-Order Datalog$^\neg$ program that defines a query
$\mathcal{Q}_\mathsf{P}$ under the stable model semantics and cautious reasoning.
Then, there exists a non-deterministic Turing
machine that takes as input an encoding of a database $D$
that uses $n$ individual constant symbols and
a ground atom $p(\bar{a})$, where $p$ is a predicate constant of $\Prog$ and
$\bar{a}$ is a tuple of those individual constants, and
decides whether $p(\bar{a}) \not\in \mathcal{Q}_\mathsf{P}(D)$, in at most
$exp_{k}(n^d)$ steps for some constant $d$.
\end{retheorem}

\begin{proof}
We can create such a non-deterministic Turing machine $M$ that operates as follows: $M$
guesses an interpretation $I$ by instantiating all elements of each type of order at most $k$
and then choosing in a non-deterministic way whether an entry of elements of appropriate type belongs to a predicate set or not.
In essence, the machine creates a possible two-valued interpretation of  $\mathsf{P}\cup D$.
Given the size of the representation of each predicate, it is clear that the guessing
can be done in $exp_{k}(poly(n))$ steps for some fixed polynomial of $n$. Then the machine
checks whether $I$ is a stable model by performing one internal loop similarly to that of the proof of Theorem~\ref{turing-machine-well-founded}.
As we have already argued, this step also takes
$exp_{k}(poly(n))$ steps at most. Finally, the machine accepts if $I$ is a stable model and
$\bar{a}$ does not belong to the set that denotes predicate $p$ in $I$.
Otherwise, the machine rejects for this computation path. It is clear that
$p(\bar{a}) \not\in \mathcal{Q}_\mathsf{P}(D)$ under stable model semantics with cautious
reasoning if and only if there exists a stable model where $p(\bar{a})$ is not in the output
relation if and only if the non-deterministic machine described here accepts.
\end{proof} 
\section{\hspace{-0.18cm}: Proofs of Section~\ref{transformation}}\label{appendixC}

In this section we establish the correctness of the transformation described in Section~\ref{transformation}.

Let $\Prog$ be a $(k+1)$-Order Datalog$^\neg$ program, \lstinline`p` be a predicate constant in $\Prog$
and \lstinline`p($\bar{\tt X}$):-$\mathsf{B}_i$` for $i = 1, \ldots, m$ be the rules of \lstinline`p` in $\Prog$.
We will write $\mathsf{B}_i[{\tt X}_1, \ldots {\tt X}_n]$ to denote, if necessary,
that variables ${\tt X}_1, \ldots {\tt X}_n$ occur in the body $\mathsf{B}_i$.
Let \lstinline`p($\bar{\tt X}$):-$\mathsf{B}_i[{\tt R}]$` be the $i$-th rule of \lstinline`p`,
\lstinline`R` be a $k$-order predicate variable of type $\rho_{\tt R} = \rho_{{\tt Z}_1} \to \cdots \to \rho_{{\tt Z}_n} \to o$
and ${\tt R} \not\in \bar{\tt X}$.\footnote{%
Existential variable ${\tt R}$ of type $o$ can be easily treated in a unified manner by replacing every occurrence of 
${\tt R}$ in $\mathsf{B}_i[{\tt R}]$ with ${\tt R(W)}$ where ${\tt R}$ now is an existential 
variable of type $\iota \to o$ and ${\tt W}$ is a fresh existential variable of type $\basedom$.
}
The transformation produces a program $\prog{P}'$ that contains
all the rules of $\Prog$ except \lstinline`p($\bar{\tt X}$):-$\mathsf{B}_i[{\tt R}]$` and
in addition it contains the following rules:
\begin{lstlisting}
p($\bar{\tt X}$):-test$_{\mathsf{B}_i}$($\bar{\tt X}$,empty$_{\rho_{\tt R}}$).
test$_{\mathsf{B}_i}$($\bar{\tt X}$,R):-$\mathsf{B}_i[{\tt R}]$.
test$_{\mathsf{B}_i}$($\bar{\tt X}$,R):-test$_{\mathsf{B}_i}$($\bar{\tt X}$,add$_{\rho_{\tt R}}$(R,$\bar{\tt Z}$)).
empty$_{\rho_{\tt R}}$($\bar{\tt Y}$):-false.
add$_{\rho_{\tt R}}$(R,${\tt Z}_1$,...,${\tt Z}_n$,${\tt Y}_1$,...,${\tt Y}_n$):-R(${\tt Y}_1$,...,${\tt Y}_n$).
add$_{\rho_{\tt R}}$(R,${\tt Z}_1$,...,${\tt Z}_n$,${\tt Y}_1$,...,${\tt Y}_n$):-eq$_{\rho_{{\tt Z}_1}}$(${\tt Z}_1$,${\tt Y}_1$),...,eq$_{\rho_{{\tt Z}_n}}$(${\tt Z}_n$,${\tt Y}_n$).
\end{lstlisting}
and for every type $\rho_{{\tt Z}_j}$, $j \in \{ 1, \ldots, n \}$ we also include the appropriate predicate {\tt eq}$_{\rho_{{\tt Z}_j}}$. If
$\rho_{{\tt Z}_j} = \iota$ then {\tt eq}$_{\rho_{{\tt Z}_j}}$ is defined as:
\begin{lstlisting}
eq$_{\rho_{{\tt Z}_j}}$(Z,Y):-Z=Y.
\end{lstlisting}
and if ${\rho_{{\tt Z}_j}} = \rho_1\to\cdots\to\rho_m \to o$, it is defined as:
\begin{lstlisting}
eq$_{\rho_{{\tt Z}_j}}$(Z,Y):-not neq$_{\rho_{{\tt Z}_j}}$(Z,Y).
neq$_{\rho_{{\tt Z}_j}}$(Z,Y):-Z(X$_1$,...,X$_m$),not Y(X$_1$,...,X$_m$).
neq$_{\rho_{{\tt Z}_j}}$(Z,Y):-not Z(X$_1$,...,X$_m$),Y(X$_1$,...,X$_m$).
\end{lstlisting}
We assume that the predicates
\lstinline`test$_{\mathsf{B}_i}$`,
\lstinline`add$_{\rho_{\tt R}}$`,
\lstinline`empty$_{\rho_{\tt R}}$`,
\lstinline`eq$_{\rho_{{\tt Z}_j}}$` and
\lstinline`neq$_{\rho_{{\tt Z}_j}}$` do not occur in $\Prog$.
Note that except for {\tt test$_{\mathsf{B}_i}$} the corresponding rules of these new predicates  (\ie rules for which these
predicates occur in the head) do not involve any other predicate and
consequently the meanings of these predicates do not depend on any other
predicate in $\prog{P}'$.
Therefore, every stable model of $\prog{P}'$ assigns the same
meaning to these predicates. This observation is
formalized in the following proposition.

\begin{proposition}
If $\mathcal{I}$ is a three-valued stable model of $\prog{P}'$ then
\begin{enumerate}
\item $\mathcal{I}({\tt empty}_{\rho_{\tt R}})(\bar{d}) = \mfalse$
\item $\mathcal{I}({\tt add}_{\rho_{\tt R}})(d_1,\bar{d_2},\bar{d_3}) = (d_1(\bar{d_3})) \bigvee (\bar{d_2} = \bar{d_3})$
\item $\mathcal{I}({\tt eq}_{\rho_{{\tt Z}_j}})(d_1,d_2) =
            \left\{\begin{array}{ll}
                \mtrue &  d_1=d_2 \\
                \mfalse & \mathit{otherwise}
            \end{array}\right.$
\item $\mathcal{I}({\tt neq}_{\rho_{{\tt Z}_j}})(d_1,d_2) =
            \left\{\begin{array}{ll}
                \mfalse &  d_1=d_2 \\
                \mtrue & \mathit{otherwise}
            \end{array}\right.$
\end{enumerate}
\end{proposition}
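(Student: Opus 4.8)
The plan is to exploit the remark preceding the proposition: the rules whose heads involve ${\tt empty}_{\rho_{\tt R}}$, ${\tt add}_{\rho_{\tt R}}$, ${\tt eq}_{\rho_{{\tt Z}_j}}$ and ${\tt neq}_{\rho_{{\tt Z}_j}}$ invoke no other predicate of $\prog{P}'$, and among themselves they contain no recursion through negation. Hence in any three-valued stable model $\mathcal{I}=\tau^{-1}(I,J)$ their meanings are forced, layer by layer, by the defining conditions $I=\lfp\,\ATP(\cdot,J)_1$ and $J=\lfp\,\ATP(I,\cdot)_2$. I would organise the argument along the dependency layering: ${\tt empty}$, ${\tt eq}_{\iota}$ and ${\tt neq}_{\rho}$ occupy the bottom layer (they mention only the built-in $\approx$ or applications of their own argument variables, which a Herbrand state binds to \emph{two-valued} objects by Definition~\ref{def:state_Herbrand}); ${\tt eq}_{\rho}$ sits directly above ${\tt neq}_{\rho}$; and ${\tt add}$ sits above ${\tt eq}$.

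For the bottom layer the computations are direct. The body of ${\tt empty}_{\rho_{\tt R}}(\bar{\tt Y})\lrule{\tt false}$ is identically $\mfalse$, so $\ATP$ returns $\mfalse$ on every argument, giving part~(1). For ${\tt eq}_{\iota}(Z,Y)\lrule Z\approx Y$ the equality literal is two-valued by Definition~\ref{def:pair-semantics}, so the value is $\mtrue$ exactly when $d_1=d_2$, which is part~(3) for base type. For ${\tt neq}_{\rho}$ with $\rho$ a predicate type, both rule bodies are built only from applications ${\tt Z}(\bar{\tt X})$, ${\tt Y}(\bar{\tt X})$ and their negations; since the state binds $Z,Y$ to two-valued relations $d_1,d_2$, each such application collapses to ordinary function application, and the least-upper-bound over rules and over states extending the head binding realises the existential $\exists\bar{\tt X}$. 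Thus ${\tt neq}_{\rho}(d_1,d_2)$ is $\mtrue$ precisely when $d_1$ and $d_2$ disagree on some tuple, i.e.\ when $d_1\neq d_2$, establishing part~(4).

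With ${\tt neq}_{\rho}$ pinned to this classical value, the single rule ${\tt eq}_{\rho}(Z,Y)\lrule{\tt not}\ {\tt neq}_{\rho}(Z,Y)$ forces ${\tt eq}_{\rho}(d_1,d_2)$ to be the Boolean complement of ${\tt neq}_{\rho}(d_1,d_2)$; stratified negation over an already two-valued predicate introduces no $\mundef$, completing part~(3) for predicate types. Finally, for ${\tt add}$ the two rules contribute the bodies $R(\bar{\tt Y})$ and the conjunction of component-wise ${\tt eq}$ literals; with $R$ bound to $d_1$ and ${\tt eq}$ already denoting genuine equality, the least-upper-bound over the two rules is $d_1(\bar{d_3})\vee(\bar{d_2}=\bar{d_3})$, which is part~(2).

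The step I expect to demand the most care is the passage from ``these predicates do not depend negatively on one another'' to ``their $I$- and $J$-components coincide in every three-valued stable model''. Concretely, one must check that when forming $\lfp\,\ATP(\cdot,J)_1$ and $\lfp\,\ATP(I,\cdot)_2$, the already-fixed values of the strictly lower layers make the two least fixpoints agree on each predicate of the current layer, so that no $\mundef$ leaks in through the $\preceq$-greatest-lower-bound of the application semantics of Definition~\ref{def:three-valuedTP}. I would make this precise by induction on the layering above, verifying at each layer that the relevant body expressions are insensitive to the distinction between $I$ and $J$ once every predicate of strictly lower layers has been shown to be two-valued and identical in $I$ and $J$.
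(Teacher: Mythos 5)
Your proof is correct and takes essentially the same approach as the paper, which in fact states this proposition with no formal proof at all --- only the preceding remark that the rules defining ${\tt empty}_{\rho_{\tt R}}$, ${\tt add}_{\rho_{\tt R}}$, ${\tt eq}_{\rho_{{\tt Z}_j}}$ and ${\tt neq}_{\rho_{{\tt Z}_j}}$ involve no other predicates of $\prog{P}'$, so every stable model assigns them the same meaning; your layered fixpoint argument (bodies collapse to two-valued values because head/state-bound variables range over two-valued objects and singleton intervals make application classical, so each layer's value is forced once the lower layers are pinned down) is a faithful, rigorous elaboration of exactly that observation. The only blemish is a harmless citation slip: the $\preceq$-greatest-lower-bound application clause you invoke is in Definition~\ref{tuple-semantics} (and the relevant computations are carried out with the pair semantics of Definition~\ref{def:pair-semantics} and the operator of Definition~\ref{def:pair-ap}), not in Definition~\ref{def:three-valuedTP}.
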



We will use the following slightly modified definition of \lstinline`test$_{\mathsf{B}_i}$`,
that contains a seemingly redundant rule which, however, helps 
simplify the proofs.
\begin{lstlisting}
test$_{\mathsf{B}_i}$($\bar{\tt X}$,R):-$\mathsf{B}_i[{\tt R}]$.
test$_{\mathsf{B}_i}$($\bar{\tt X}$,R):-test$_{\mathsf{B}_i}$($\bar{\tt X}$,R).
test$_{\mathsf{B}_i}$($\bar{\tt X}$,R):-test$_{\mathsf{B}_i}$($\bar{\tt X}$,add$_{\rho_{\tt R}}$(R,$\bar{\tt Z}$)).
\end{lstlisting}
It is easy to verify that in any stable model of $\prog{P}'$
the above definition has the same meaning  with the definition used in Section~\ref{transformation}.
%
\begin{lemma}\label{apspecial}
Let $(I,J) \in H^c_\mathsf{P'}$ such that $\tau^{-1}(I,J)$ assigns
the aforementioned meanings to the predicates ${\tt empty}_{\rho_{\tt R}}$ and ${\tt add}_{\rho_{\tt R}}$.
Then, the following hold:
\begin{itemize}
\item $A_{\prog{P}'}(I,J)_1({\tt test}_{\mathsf{B}_i})(\bar{x},r)
  = \bigvee\nolimits_{\leq_\bool} \{ I({\tt test}_{\mathsf{B}_i})(\bar{x},r') \mid r\leq r', {|r'|\leq |r|+1} \} \vee b_1$
\item $A_{\prog{P}'}(I,J)_2({\tt test}_{\mathsf{B}_i})(\bar{x},r)
  = \bigvee\nolimits_{\leq_\bool} \{ J({\tt test}_{\mathsf{B}_i})(\bar{x},r') \mid r\leq r', {|r'|\leq |r|+1} \} \vee b_2$
\end{itemize}
where $(b_1, b_2) =\bigvee\nolimits_{\leq_\bool}\{\mwrc{\mathsf{B}_i}{I,J}{s[\bar{\tt X}/\bar{x}, {\tt R}/r]} \mid  s \in S_{\prog{P}'} \}$
and $|r| = |\{ \bar{d} \in \lsem \rho_{\bar{\tt X}} \rsem \mid r(\bar{d}) = \mtrue \}|$.
\end{lemma}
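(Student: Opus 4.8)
The plan is to compute one application of $A_{\prog{P}'}$ on the predicate ${\tt test}_{\mathsf{B}_i}$ directly from Definition~\ref{def:pair-ap}, using the modified three-rule definition of ${\tt test}_{\mathsf{B}_i}$. By that definition, for $i \in \{1,2\}$ the value $A_{\prog{P}'}(I,J)_i({\tt test}_{\mathsf{B}_i})(\bar{x},r)$ is the $\leq_\bool$-lub, taken over all three rules and all states $s$ that map the head variables $\bar{\tt X},{\tt R}$ to $\bar{x},r$, of the $i$-th component of the pair semantics of the corresponding body. I would split this lub into the three contributions coming from the three rules and evaluate each separately.

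The body of the first rule is $\mathsf{B}_i[{\tt R}]$; taking the lub of its $i$-th component over all states extending $[\bar{\tt X}/\bar{x},{\tt R}/r]$ yields exactly $b_i$, by the definition of $(b_1,b_2)$. For the second and third rules the bodies are applications of ${\tt test}_{\mathsf{B}_i}$, so I would invoke the pair semantics of application (item~4 of Definition~\ref{def:pair-semantics}). The key observation is that every state assigns to a variable a two-valued (hence exact) element, so $\mwrc{{\tt R}}{I,J}{s}$ and each $\mwrc{{\tt X}_j}{I,J}{s}$ are diagonal pairs $(v,v)$. Moreover, by hypothesis $\tau^{-1}(I,J)$ interprets ${\tt add}_{\rho_{\tt R}}$ two-valuedly as $\lambda(d_1,\bar{d_2}).\,\lambda\bar{d}.\,d_1(\bar{d}) \vee (\bar{d_2}=\bar{d})$, so applying it to the exact arguments $r$ and $\bar{z}$ again produces an exact value, namely $r[\bar{z}] := \lambda\bar{d}.\,r(\bar{d})\vee(\bar{z}=\bar{d})$. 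Consequently, in both rules the argument fed to ${\tt test}_{\mathsf{B}_i}$ is an exact pair $(r',r')$, so the interval $\{d \mid l\leq d\leq u\}$ appearing in the application semantics collapses to the singleton $\{r'\}$, and the application reduces to the point evaluation $I({\tt test}_{\mathsf{B}_i})(\bar{x},r')$ for $i=1$ and $J({\tt test}_{\mathsf{B}_i})(\bar{x},r')$ for $i=2$. For the second rule this gives $r'=r$; for the third rule, ranging the existential variables $\bar{\tt Z}$ over all $\bar{z}\in\lsem\rho_{\bar{\tt Z}}\rsem$ gives $r'=r[\bar{z}]$.

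It then remains to identify the set of relations $\{r\}\cup\{r[\bar{z}] \mid \bar{z}\in\lsem\rho_{\bar{\tt Z}}\rsem\}$ produced by the second and third rules with the index set $\{r' \mid r\leq r',\ |r'|\leq|r|+1\}$ of the claimed formula. Each $r[\bar{z}]$ satisfies $r\leq r[\bar{z}]$ and has cardinality $|r|$ (if $\bar{z}$ is already in $r$) or $|r|+1$ (otherwise), and the same holds trivially for $r$ itself; conversely, any $r'$ with $r\leq r'$ and $|r'|\leq|r|+1$ either equals $r$ or is obtained from $r$ by adding exactly one new tuple $\bar{z}$, in which case $r'=r[\bar{z}]$. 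Hence the two sets coincide, and collecting the three contributions under $\bigvee_{\leq_\bool}$ gives the stated equalities for both components.

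The main point requiring care is the treatment of the application semantics: a priori the result of applying ${\tt test}_{\mathsf{B}_i}$ to its last argument involves a greatest-lower-bound (resp.\ least-upper-bound) over a whole $\leq$-interval of relations, not a single relation. The argument above shows that exactness of the state assignments, together with the two-valuedness of ${\tt add}_{\rho_{\tt R}}$ guaranteed by the hypothesis (and established in the preceding proposition), forces this interval to be a singleton, which is precisely what lets the application collapse to point evaluation. Finally, the seemingly redundant rule ${\tt test}_{\mathsf{B}_i}(\bar{\tt X},{\tt R})\lrule{\tt test}_{\mathsf{B}_i}(\bar{\tt X},{\tt R})$ in the modified definition is exactly what contributes the reflexive case $r'=r$, making the index set of the formula closed under $r\leq r'$ rather than $r<r'$.
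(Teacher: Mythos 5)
Your proof is correct and is, as far as can be judged, exactly the argument the paper intends: the paper omits the proof of this lemma, describing it only as ``relatively quite straightforward but lengthy'' and as relying on the modified three-rule definition of ${\tt test}_{\mathsf{B}_i}$, which is precisely the route you take. Your key steps --- splitting the operator's least upper bound over the three rules, using exactness of state values together with the two-valued meaning of ${\tt add}_{\rho_{\tt R}}$ to collapse the application intervals of Definition~4 to singletons, and identifying the resulting set $\{r\} \cup \{r[\bar{z}] \mid \bar{z}\}$ with $\{r' \mid r \leq r',\ |r'| \leq |r|+1\}$ (where the seemingly redundant rule supplies the case $r' = r$, and finiteness of the Herbrand domains justifies the cardinality argument) --- are all sound.
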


The proof of Lemma~\ref{apspecial}, which is relatively quite straightforward 
but lengthy, is omitted. The proof uses the modified definition of \lstinline`test$_{\mathsf{B}_i}$` 
described above.

\begin{lemma}\label{IJSEQ}
Let $(I,J) \in H^c_\mathsf{P'}$ be a fixpoint of $A_{\prog{P}'}$ such that
$\tau^{-1}(I,J)$ assigns the aforementioned meanings to the predicates
${\tt empty}_{\rho_{\tt R}}$ and ${\tt add}_{\rho_{\tt R}}$.
If $I^* = \lfp A_{\prog{P}'}(\cdot,J)_1$ and $J^* =\lfp A_{\prog{P}'}(I,\cdot)_2$ then
for every $\bar{x} \in \lsem \rho_{\bar{\tt X}}\rsem$ and
$r \in \lsem \rho_{\tt R} \rsem$,
\begin{enumerate}
\item
\(
I^*({\tt test}_{\mathsf{B}_i})({\bar{x}},r) =
  \bigvee\nolimits_{\leq_\bool} \{ [\mwrc{\mathsf{B}_i}{I^*,J}{s[\bar{\tt X}/\bar{x}, {\tt R}/r']} ]_1 \mid {r \leq r' }, s \in S_{\prog{P}'} \}
\)

\item
\(
J^*({\tt test}_{\mathsf{B}_i})({\bar{x}},r) =
\bigvee\nolimits_{\leq_\bool} \{ [\mwrc{\mathsf{B}_i}{I,J^*}{s[\bar{\tt X}/\bar{x}, {\tt R}/r']} ]_2
        \vee I({\tt test}_{\mathsf{B}_i})(\bar{x},r') \mid {r \leq r' }, s \in S_{\prog{P}'} \}
\)
\end{enumerate}
\end{lemma}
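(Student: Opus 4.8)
The plan is to prove each of the two identities by establishing the two opposite inequalities, using Lemma~\ref{apspecial} as the one-step engine and ``walking up'' the lattice of relations one tuple at a time. Throughout I write $R_1(\bar x,r)$ and $R_2(\bar x,r)$ for the right-hand sides of items~1 and~2. The whole argument rests on one combinatorial fact, forced by the rule ${\tt test}_{\mathsf{B}_i}(\bar{\tt X},{\tt R})\lrule{\tt test}_{\mathsf{B}_i}(\bar{\tt X},{\tt add}_{\rho_{\tt R}}({\tt R},\bar{\tt Z}))$, namely that ${\tt test}_{\mathsf{B}_i}$ is \emph{antitone} in its relation argument: a larger relation yields a smaller truth value. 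Since whenever $r\leq r'$ with $r\neq r'$ one can interpolate some $r''$ with $r\leq r''\leq r'$ and $|r''|=|r|+1$ (add a single tuple of $r'\setminus r$), any chain from $r$ to $r'$ is traversed in $|r'|-|r|$ single-tuple steps; this licenses inductions on $|r'|-|r|$. Note also that $I^*\leq I$ and $J^*\leq J$ are least fixpoints, and that they retain the correct meanings of ${\tt empty}_{\rho_{\tt R}}$ and ${\tt add}_{\rho_{\tt R}}$ (which are fixed independently of the rest of the program), so Lemma~\ref{apspecial} applies to the relevant pairs.

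For the inclusion $I^*({\tt test}_{\mathsf{B}_i})\geq R_1$ I would use that $I^*$ is a fixpoint of $A_{\prog{P}'}(\cdot,J)_1$: expanding $I^*({\tt test}_{\mathsf{B}_i})(\bar x,r)$ through Lemma~\ref{apspecial} at the pair $(I^*,J)$ yields two facts. First, $I^*({\tt test}_{\mathsf{B}_i})(\bar x,r)\geq b_1(\bar x,r)=\bigvee_{\leq_\bool}\{[\mwrc{\mathsf{B}_i}{I^*,J}{s[\bar{\tt X}/\bar x,{\tt R}/r]}]_1\mid s\in S_{\prog{P}'}\}$; second, $I^*({\tt test}_{\mathsf{B}_i})(\bar x,r)\geq I^*({\tt test}_{\mathsf{B}_i})(\bar x,r'')$ whenever $r\leq r''$ and $|r''|=|r|+1$. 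Chaining the second fact along a single-tuple path up to an arbitrary $r'\geq r$ and composing with the first at the top gives $I^*({\tt test}_{\mathsf{B}_i})(\bar x,r)\geq[\mwrc{\mathsf{B}_i}{I^*,J}{s[\bar{\tt X}/\bar x,{\tt R}/r']}]_1$ for every such $r'$ and every $s$; joining over all $r'$ and $s$ produces exactly $R_1$.

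For the reverse inclusion $I^*({\tt test}_{\mathsf{B}_i})\leq R_1$ I would argue by leastness. Let $\hat I$ coincide with $I^*$ on every predicate except ${\tt test}_{\mathsf{B}_i}$, on which it takes the value $R_1$; by the inclusion just proved, $\hat I\leq I^*$. I then check that $\hat I$ is a pre-fixpoint of $A_{\prog{P}'}(\cdot,J)_1$. On predicates other than ${\tt test}_{\mathsf{B}_i}$ this is immediate from monotonicity in the first argument together with $\hat I\leq I^*$ and the fixpoint property of $I^*$. On ${\tt test}_{\mathsf{B}_i}$ I expand $A_{\prog{P}'}(\hat I,J)_1({\tt test}_{\mathsf{B}_i})(\bar x,r)$ by Lemma~\ref{apspecial}: each recursive summand $\hat I({\tt test}_{\mathsf{B}_i})(\bar x,r'')=R_1(\bar x,r'')$ is bounded by $R_1(\bar x,r)$ because $r''\geq r$ restricts the join, and the body summand is bounded by the $r'=r$ term of $R_1$ using monotonicity of the first projection of the body semantics in its lower argument ($\hat I\leq I^*$). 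Hence $A_{\prog{P}'}(\hat I,J)_1\leq\hat I$, and leastness of $I^*$ gives $I^*\leq\hat I$, i.e.\ $I^*({\tt test}_{\mathsf{B}_i})\leq R_1$. Combining the two inclusions proves item~1.

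Item~2 follows the same two-step scheme for $J^*=\lfp A_{\prog{P}'}(I,\cdot)_2$, now using monotonicity in the second argument and a pre-fixpoint test carried out over $[I,\top]$. The one genuinely new feature --- and the step I expect to be the main obstacle --- is the extra disjunct $I({\tt test}_{\mathsf{B}_i})(\bar x,r')$ in $R_2$. This term is the signature of the fact that the upper (second-component) least fixpoint is computed starting from $I$ rather than from $\bot$: the tautological rule ${\tt test}_{\mathsf{B}_i}(\bar{\tt X},{\tt R})\lrule{\tt test}_{\mathsf{B}_i}(\bar{\tt X},{\tt R})$ --- included in the modified definition precisely for this purpose --- together with the ${\tt add}_{\rho_{\tt R}}$-recursion propagates the starting values $I({\tt test}_{\mathsf{B}_i})(\bar x,\cdot)$ upward along the relation lattice, so they survive into $J^*$ independently of the body. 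Making this rigorous requires threading the starting interpretation $I$ through the walk-up induction while keeping the evaluation of $\mathsf{B}_i$ honest: since $\mathsf{B}_i$ may itself mention ${\tt p}$, which in $\prog{P}'$ is defined through ${\tt test}_{\mathsf{B}_i}(\bar{\tt X},{\tt empty}_{\rho_{\tt R}})$, the computation of ${\tt test}_{\mathsf{B}_i}$ and of the body are genuinely simultaneous. The cleanest way I see to tame this is to fold every such interdependence through the fixpoints $I^*,J^*$ that already appear on the right-hand sides, exactly as in the pre-fixpoint argument above, so that no independent closed form for the auxiliary predicates is ever needed.
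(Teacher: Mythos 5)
Your proposal is correct and follows essentially the same route as the paper's own proof: the $\geq$ directions are obtained by unfolding the fixpoint equation supplied by Lemma~\ref{apspecial} along single-tuple chains from $r$ up to an arbitrary $r'\geq r$, and the $\leq$ directions by taking an interpretation that equals the right-hand side on ${\tt test}_{\mathsf{B}_i}$ and the given fixpoint elsewhere, checking it is a prefixpoint, and invoking leastness; your uniform use of monotonicity of $A_{\prog{P}'}(\cdot,J)_1$ (resp.\ $A_{\prog{P}'}(I,\cdot)_2$) for all predicates other than ${\tt test}_{\mathsf{B}_i}$ merely streamlines the paper's separate case analysis for ${\tt p}$. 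One small correction to your narrative for item~2: the extra disjunct $I({\tt test}_{\mathsf{B}_i})(\bar{x},r')$ is obtained in the paper purely from $J^*\geq I$ (the second least fixpoint lives in the interval $[I,\top]$, a fact you also invoke) combined with the walk-up, not from the tautological rule ${\tt test}_{\mathsf{B}_i}(\bar{\tt X},{\tt R})\lrule{\tt test}_{\mathsf{B}_i}(\bar{\tt X},{\tt R})$, whose role is only to make the recursive join in Lemma~\ref{apspecial} include the $r'=r$ summand.
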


\begin{proof}
We will prove the first part of the lemma in two steps. First, we will show that
$I^*({\tt test}_{\mathsf{B}_i})({\bar{x}},r) \geq
  \bigvee\nolimits_{\leq_\bool}\{ [\mwrc{\mathsf{B}_i}{I^*,J}{s[\bar{\tt X}/\bar{x}, {\tt R}/r']} ]_1 \mid {r \leq r' }, s \in S_{\prog{P}'} \}$
and then $I^*({\tt test}_{\mathsf{B}_i})({\bar{x}},r) \leq
\bigvee\nolimits_{\leq_\bool} \{ [\mwrc{\mathsf{B}_i}{I^*,J}{s[\bar{\tt X}/\bar{x}, {\tt R}/r']} ]_1 \mid {r \leq r' }, s \in S_{\prog{P}'} \}$
which establishes the equality.

For the first step, let $\bar{x} \in \lsem \rho_{\bar{\tt X}} \rsem$ and
$r \in \lsem \rho_{\tt R} \rsem$ such that
\(
  \bigvee\nolimits_{\leq_\bool}\{ [\mwrc{\mathsf{B}_i}{I^*,J}{s[\bar{\tt X}/\bar{x}, {\tt R}/r']}]_1 \mid {r \leq r' }, s \in S_{\prog{P}'} \} = \mtrue
\).
Therefore, there must be $s \in S_{\prog{P}'}$ and $t \in \lsem \rho_{\tt R} \rsem$
such that $t \geq r$ and $[\mwrc{\mathsf{B}_i}{I^*,J}{s[\bar{\tt X}/\bar{x}, {\tt R}/t]}]_1 = \mtrue$.
Notice that auxiliary predicates  ${\tt empty}_{\rho_{\tt R}}$ and ${\tt add}_{\rho_{\tt R}}$ in $I^*,J^*$ will also
have the aforementioned meaning therefore by the definition of $A_{\prog{P}'}$ with Lemma~\ref{apspecial} 
and the fact that $I^*=A_{\prog{P}'}(I^*,J)_1$ it follows that
\begin{align*}
I^*({\tt test}_{\mathsf{B}_i})(\bar{x},r)
 & = \bigvee\nolimits_{\leq_\bool}\{ I^*({\tt test}_{\mathsf{B}_i})(\bar{x},r') \mid r\leq r', {|r'|\leq |r|+1} \}
 \vee b_1 \\
& \geq
\bigvee\nolimits_{\leq_\bool}\{ I^*({\tt test}_{\mathsf{B}_i})(\bar{x},r') \mid r\leq r', {|r'|\leq |r|+1} \}
\end{align*}
where $b_1 = \bigvee\nolimits_{\leq_\bool} \{ [\mwrc{\mathsf{B}_i}{I^*,J}{s[\bar{\tt X}/\bar{x}, {\tt R}/r]}]_1 \mid s \in S_{\prog{P}'} \}$.
By unfolding the inequality $|t|-|r|$ times
\begin{align*}
I^*({\tt test}_{\mathsf{B}_i})(\bar{x},r)
  &\geq \bigvee\nolimits_{\leq_\bool}\{ I^*({\tt test}_{\mathsf{B}_i})(\bar{x},r') \mid r\leq r', {|r'|\leq |t|} \} \\
  &\geq I^*({\tt test}_{\mathsf{B}_i})(\bar{x},t)\\
  &\geq [\mwrc{\mathsf{B}_i}{I^*,J}{s[\bar{\tt X}/\bar{x}, {\tt R}/t]}]_1 =\mtrue
\end{align*}
Therefore,
$I^*({\tt test}_{\mathsf{B}_i})(\bar{x},r) \geq
\bigvee \{ [\mwrc{\mathsf{B}_i}{I^*,J}{s[\bar{\tt X}/\bar{x}, {\tt R}/r']} ]_1 \mid {r \leq r' }, s \in S_{\prog{P}'}\}$.

Now we show the second step of the first statement.
Let $I'$ such that
$I'({\tt test}_{\mathsf{B}_i})(\bar{x},r) =
  \bigvee\nolimits_{\leq_\bool}\{ [\mwrc{\mathsf{B}_i}{I^*,J}{s[\bar{\tt X}/\bar{x}, {\tt R}/r']}]_1
            \mid {r \leq r' }, s \in S_{\prog{P}'} \}$
and $I'({\tt q})=I^*({\tt q})$ for every other predicate ${\tt q}$ occurring in $\prog{P}'$.
We will show that $I'$ is a prefixpoint of $A_{\prog{P}'}(\cdot,J)_1$; 
since $I^*$ is the least prefixpoint of $A_{\prog{P}'}(\cdot,J)_1$ it will 
follow that $I^*({\tt test}_{\mathsf{B}_i})(\bar{x},r)) \leq I'({\tt test}_{\mathsf{B}_i})(\bar{x},r)$
that establishes the desired claim.
For any predicate ${\tt q}$ in $\Prog$ different from ${\tt p}$ and ${\tt test}_{\mathsf{B}_i}$,
we have that
\(
A_{\prog{P}'}(I',J)_1({\tt q})= A_{\prog{P}'}(I^*,J)_1({\tt q})=I^*({\tt q})=I'({\tt q})
\).
For ${\tt test}_{\mathsf{B}_i}$ we have that
\[
A_{\prog{P}'}(I',J)_1({\tt test}_{\mathsf{B}_i})(\bar{x},r) =
 \bigvee\nolimits_{\leq_\bool}\{ I'({\tt test}_{\mathsf{B}_i})(\bar{x},r') \mid {r\leq r'}, {|r'|\leq |r|+1} \}
\vee b_1
\]
where $b_1 = \bigvee\nolimits_{\leq_\bool} \{ [\mwrc{\mathsf{B}_i}{I',J}{s[\bar{\tt X}/\bar{x}, {\tt R}/r]}]_1 \mid s \in S_{\prog{P}'} \}
=\bigvee\nolimits_{\leq_\bool} \{ [\mwrc{\mathsf{B}_i}{I^*,J}{s[\bar{\tt X}/\bar{x}, {\tt R}/r]}]_1 \mid s \in S_{\prog{P}'} \}$
(since in the body of $\mathsf{B}_i$ predicate ${\tt test}_{\mathsf{B}_i}$ does not appear).
But we can rewrite the above by replacing $I'({\tt test}_{\mathsf{B}_i})(\bar{x},r')$ as:
\begin{align*}
A_{\prog{P}'}(I',J)_1({\tt test}_{\mathsf{B}_i})(\bar{x},r)
  & = \bigvee\nolimits_{\leq_\bool} \{
      [\mwrc{\mathsf{B}_i}{I^*,J}{s[\bar{\tt X}/\bar{x}, {\tt R}/r']}]_1 \mid r \leq r', s \in S_{\prog{P}'} \}
      \vee b_1
\end{align*}
Since the term $b_1$ is also absorbed this is equivalent to the following:
\begin{align*}
A_{\prog{P}'}(I',J)_1({\tt test}_{\mathsf{B}_i})(\bar{x},r)
  & = \bigvee\nolimits_{\leq_\bool}\{ [\mwrc{\mathsf{B}_i}{I^*,J}{s[\bar{\tt X}/\bar{x}, {\tt R}/r']}]_1 \mid {r \leq r' }, s \in S_{\prog{P}'} \}
   = I'({\tt test}_{\mathsf{B}_i})(\bar{x},r)
\end{align*}
Finally, for ${\tt p}$ let \lstinline`p($\bar{\tt X}$):-$\mathsf{B}_j$` be the rules in $\Prog$ and let $i$-th be the rule
that is replaced in $\prog{P}'$. Then, it follows:
\begin{align*}
  A_{\prog{P}'}(I',J)_1({\tt p})(\bar{x})
        &=\bigvee\nolimits_{\leq_\bool} \{\mwrc{\mathsf{B}_j}{I',J}{s[\bar{\tt X}/\bar{x}]}]_1 \mid j \neq i, s \in S_{\prog{P}'} \}
        \vee
        I'({\tt test}_{\mathsf{B}_i})(\bar{x},\bot_{\tt R}) \\
        &\leq \bigvee\nolimits_{\leq_\bool} \{ [\mwrc{\mathsf{B}_j}{I^*,J}{s[\bar{\tt X}/\bar{x}]}]_1 \mid j \neq i, s \in S_{\prog{P}'} \}
        \vee
        I^*({\tt test}_{\mathsf{B}_i})(\bar{x},\bot_{\tt R})\\
        &\leq A_{\prog{P}'}(I^*,J)_1({\tt p})(\bar{x})
\end{align*}
Since $I^*$ is a fixpoint of $A_{\prog{P}'}(\cdot,J)_1$ it follows that
$A_{\prog{P}'}(I^*,J)_1({\tt p})(\bar{x}) = I^*({\tt p})(\bar{x})$. Moreover,
by definition of $I'$, $I'({\tt p})(\bar{x}) = I^*({\tt p})(\bar{x})$ and therefore
$A_{\prog{P}'}(I',J)_1({\tt p})(\bar{x}) \leq I'({\tt p})(\bar{x})$.


Similarly to the proof for the first statement we prove the second one 
in two steps.
Let $\bar{x},r$ such that
$ \bigvee\nolimits_{\leq_\bool} \{
  I({\tt test}_{\mathsf{B}_i})(\bar{x},r')
  \vee
  [\mwrc{\mathsf{B}_i}{I,J^*}{s[\bar{\tt X}/\bar{x}, {\tt R}/r']}]_2
  \mid {r \leq r' , s \in S_{\prog{P}'}}\}=\mtrue$.
Then, there exists $s \in S_{\prog{P}'}$ and $t\geq r$ such that
$I({\tt test}_{\mathsf{B}_i})(\bar{x},t) \vee
[\mwrc{\mathsf{B}_i}{I,J^*}{s[\bar{\tt X}/\bar{x}, {\tt R}/t]}]_2 = \mtrue$.
By the definition of the operator $A_{\prog{P}'}$ using Lemma~\ref{apspecial} and the fact that $J^* = A_{\prog{P}'}(I,J^*)_2$, it follows that:
\begin{align*}
J^*({\tt test}_{\mathsf{B}_i})(\bar{x},r) & =
\bigvee\nolimits_{\leq_\bool} \{ J^*({\tt test}_{\mathsf{B}_i})(\bar{x},r') \mid {r\leq r'}, {|r'|\leq |r|+1}, s \in S_ {\prog{P}'} \}
\vee b_2 \\
& \geq
\bigvee\nolimits_{\leq_\bool} \{ J^*({\tt test}_{\mathsf{B}_i})(\bar{x},r') \mid {r\leq r'}, {|r'|\leq |r|+1}, s \in S_ {\prog{P}'} \}
\end{align*}
where $b_2 = \bigvee\nolimits_{\leq_\bool} \{ [\mwrc{\mathsf{B}_i}{I,J^*}{s[\bar{\tt X}/\bar{x}, {\tt R}/r]}]_2 \mid s \in S_{\prog{P}'} \}$.
By unfolding the inequality $|t|-|r|$ times
\begin{align*}
J^*({\tt test}_{\mathsf{B}_i})(\bar{x},r)
    & \geq \bigvee\nolimits_{\leq_\bool} \{ J^*({\tt test}_{\mathsf{B}_i})(\bar{x},r') \mid {r\leq r'}, {|r'|\leq |t|}, s \in S_ {\prog{P}'} \}    \\
    &\geq J^*({\tt test}_{\mathsf{B}_i})(\bar{x},t)
    \geq [\mwrc{\mathsf{B}_i}{I,J^*}{s[\bar{\tt X}/ \bar{x}, {\tt R}/ t]}]_2
\end{align*}
Furthermore, since $J^* \geq I$ we have that $J^*({\tt test}_{\mathsf{B}_i})(\bar{x},r) 
\geq J^*({\tt test}_{\mathsf{B}_i})(\bar{x},t)\geq I({\tt test}_{\mathsf{B}_i})(\bar{x},t)$ and so
\begin{align*}
J^*({\tt test}_{\mathsf{B}_i})(\bar{x},r)
    \geq I({\tt test}_{\mathsf{B}_i})(\bar{x},t)
    \vee
    [\mwrc{\mathsf{B}_i}{I,J^*}{s[\bar{\tt X}/ \bar{x}, {\tt R}/ t]}]_2 =\mtrue
\end{align*}


Let $J'$ such that $J'({\tt q})=J({\tt q})$
for any ${\tt q}$ in $\Prog$ and
$J'({\tt test}_{\mathsf{B}_i})(\bar{x}, r) =
      \bigvee \{ [\mwrc{\mathsf{B}_i}{I,J^*}{s[\bar{\tt X}/\bar{x}, {\tt R}/r']} ]_2
                  \vee I({\tt test}_{\mathsf{B}})(\bar{x},r')
                  \mid {r \leq r' }, s \in S_{\prog{P}'}
              \}$.
We will show that $J'$ is a prefixpoint of $A_{\prog{P}'}(I, \cdot)$ and therefore
$J^* \leq J'$ since $J^*$ is the least prefixpoint of $A_{\prog{P}'}(I, \cdot)$.
Clearly, $J' \geq I$ by its definition.
For any predicate ${\tt q}$ in $\Prog$ different from ${\tt p}$ and ${\tt test_{\mathsf{B}_i}}$,
we have $A_{P'}(I,J')_2({\tt q})= A_{P'}(I,J^*)_2({\tt q})=J^*({\tt q})=J'({\tt q})$.
For the predicate ${\tt test}_{\mathsf{B}_i}$ similarly as the case of $I$ we have that:
\begin{align*}
A_{\prog{P}'}(I,J')_2({\tt test}_{\mathsf{B}_i})(\bar{x},r)
  & = \bigvee\nolimits_{\leq_\bool} \{ J'({\tt test}_{\mathsf{B}_i})(\bar{x},r') \mid {r\leq r'}, {|r'|\leq |r|+1}, s \in S_ {\prog{P}'} \} \vee b_2\\
  & = \bigvee\nolimits_{\leq_\bool} \{ I({\tt test}_{\mathsf{B}_i})(\bar{x},r') \vee \mwrc{\mathsf{B}_i}{I,J^*}{s[\bar{\tt X}/\bar{x}, \bar{R}/r']}\mid {r\leq r'}, s \in S_ {\prog{P}'} \} \vee b_2\\
  & = \bigvee\nolimits_{\leq_\bool} \{ I({\tt test}_{\mathsf{B}_i})(\bar{x},r') \vee \mwrc{\mathsf{B}_i}{I,J^*}{s[\bar{\tt X}/\bar{x}, \bar{R}/r']}\mid {r\leq r'}, s \in S_ {\prog{P}'} \} \\
  & = J'({\tt test}_{\mathsf{B}_i})(\bar{x},r)
\end{align*}
Finally, for ${\tt p}$ let \lstinline`p($\bar{\tt X}$):-$\mathsf{B}_j$` be the rules in $\Prog$ and let $i$-th be the rule
that is replaced in $\prog{P}'$. Then, it follows:
\begin{align*}
A_{\prog{P}'}(I,J')_2({\tt p})(\bar{x})
      &=\bigvee\nolimits_{\leq_\bool} \{\mwrc{\mathsf{B}_j}{I,J'}{s[\bar{\tt X}/\bar{x}]}]_2 \mid j \neq i, s \in S_{\prog{P}'} \}
                  \vee
                  J'({\tt test}_{\mathsf{B}_i})(\bar{x},\bot_{\tt R}) \\
      &\leq \bigvee\nolimits_{\leq_\bool} \{ [\mwrc{\mathsf{B}_j}{I,J^*}{s[\bar{\tt X}/\bar{x}]}]_2 \mid j \neq i, s \in S_{\prog{P}'} \}
      \vee
      J^*({\tt test}_{\mathsf{B}_i})(\bar{x},\bot_{\tt R})\\
      &\leq A_{\prog{P}'}(I,J^*)_2({\tt p})(\bar{x})
\end{align*}
Since $J^*$ is a fixpoint of $A_{\prog{P}'}(I, \cdot)_2$ it follows that
$A_{\prog{P}'}(I,J^*)_2({\tt p})(\bar{x}) = J^*({\tt p})(\bar{x})$. Moreover,
by definition of $J'$, $J'({\tt p})(\bar{x}) = J^*({\tt p})(\bar{x})$ and therefore
$A_{\prog{P}'}(I,J')_2({\tt p})(\bar{x}) \leq J'({\tt p})(\bar{x})$.
\end{proof}

\begin{corollary}\label{stable-meaning-of-testb}
Let $(I,J) \in H^c_\mathsf{P'}$
such that $(I,J)$ is a stable fixpoint of $A_{\prog{P}'}$. Then,
for every $\bar{x} \in \mo{\rho_{\bar{\tt X}}}$ and $r \in \mo{\rho_{\tt R}}$,
\begin{itemize}
\item  $I({\tt test}_{\mathsf{B}_i})(\bar{x},r) =
\bigvee\nolimits_{\leq_\bool} \{ [\mwrc{\mathsf{B}_i}{I,J}{s[\bar{\tt X}/\bar{x}, {\tt R}/r']}]_1
            \mid {r \leq r' }, s \in S_{\prog{P}'} \}$.
\item  $J({\tt test}_{\mathsf{B}_i})(\bar{x},r) =
\bigvee\nolimits_{\leq_\bool} \{ [\mwrc{\mathsf{B}_i}{I,J}{s[\bar{\tt X}/\bar{x}, {\tt R}/r']}]_2
          \mid {r \leq r' }, s \in S_{\prog{P}'} \}$.
\end{itemize}
\end{corollary}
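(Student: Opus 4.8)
The plan is to read the Corollary off directly from Lemma~\ref{IJSEQ}, exploiting the defining property of a stable fixpoint. By Definition~\ref{def:AFTsemantics}, saying that $(I,J)$ is a stable fixpoint of $A_{\prog{P}'}$ means precisely that $I = \lfp A_{\prog{P}'}(\cdot,J)_1$ and $J = \lfp A_{\prog{P}'}(I,\cdot)_2$; in the notation of Lemma~\ref{IJSEQ} this is exactly $I = I^*$ and $J = J^*$. A stable fixpoint is in particular a fixpoint of $A_{\prog{P}'}$, and $\tau^{-1}(I,J)$ is then a three-valued stable model of $\prog{P}'$, so the Proposition characterising the auxiliary predicates guarantees that $I,J$ assign the intended meanings to ${\tt empty}_{\rho_{\tt R}}$ and ${\tt add}_{\rho_{\tt R}}$. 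Hence both hypotheses of Lemma~\ref{IJSEQ} are satisfied, and I would invoke it with $I^*=I$, $J^*=J$.

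For the first statement I would simply substitute $I^* = I$ into part~(1) of Lemma~\ref{IJSEQ}; this yields verbatim the claimed identity for $I({\tt test}_{\mathsf{B}_i})$, with no further work.

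The second statement needs a short absorption argument. Substituting $J^* = J$ into part~(2) of Lemma~\ref{IJSEQ} gives
\[
J({\tt test}_{\mathsf{B}_i})(\bar{x},r) =
\bigvee\nolimits_{\leq_\bool} \{ [\mwrc{\mathsf{B}_i}{I,J}{s[\bar{\tt X}/\bar{x},{\tt R}/r']}]_2 \vee I({\tt test}_{\mathsf{B}_i})(\bar{x},r') \mid r \leq r',\, s \in S_{\prog{P}'} \},
\]
which differs from the target identity only by the additional disjuncts $I({\tt test}_{\mathsf{B}_i})(\bar{x},r')$. It remains to show these are dominated by the second-component part of the join. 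I would expand $I({\tt test}_{\mathsf{B}_i})(\bar{x},r')$ using the first statement (now established) as $\bigvee_{\leq_\bool}\{[\mwrc{\mathsf{B}_i}{I,J}{s[\bar{\tt X}/\bar{x},{\tt R}/r'']}]_1 \mid r' \leq r'',\, s \in S_{\prog{P}'}\}$. Since this term does not depend on $s$, the nested joins over $r'$ and $r''$ collapse to a single join over all $r''$ with $r \leq r''$ (for each such $r''$ take $r'=r''$, and conversely $r\leq r'\leq r''$ forces $r\leq r''$ by transitivity). Because $(I,J)\in H^c_\mathsf{P'}$ is a consistent pair and the pair semantics preserves consistency, we have $[\mwrc{\mathsf{B}_i}{I,J}{s'}]_1 \leq_\bool [\mwrc{\mathsf{B}_i}{I,J}{s'}]_2$ for every state $s'$; thus each first-component term $[\mwrc{\mathsf{B}_i}{I,J}{s[\ldots,{\tt R}/r'']}]_1$ with $r\leq r''$ is bounded above by the corresponding second-component term $[\mwrc{\mathsf{B}_i}{I,J}{s[\ldots,{\tt R}/r'']}]_2$, which already appears in the target join as its $r''$-instance. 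Hence the extra disjuncts add nothing and the join collapses to the desired identity.

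The only delicate point — and the step I would verify most carefully — is this absorption: one must confirm that the consistency inequality $[\cdot]_1 \leq_\bool [\cdot]_2$ genuinely holds for the body expression $\mathsf{B}_i$ under the pair semantics (which follows from $A_{\prog{P}'}$ mapping $H^c_{\mathsf{P}'}$ into itself together with the component-ordering preservation of $\mwrc{\mathsf{E}}{I,J}{s}$), and that the reindexing of the nested $r'$/$r''$ joins against the single $r''$ join is an exact equality rather than a mere inequality. Everything else is a direct instantiation of Lemma~\ref{IJSEQ}.
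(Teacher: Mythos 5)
Your proof is correct and is essentially the paper's intended derivation: the paper states this result as an immediate corollary of Lemma~\ref{IJSEQ}, obtained exactly as you do by noting that a stable fixpoint satisfies $I = \lfp A_{\prog{P}'}(\cdot,J)_1$ and $J = \lfp A_{\prog{P}'}(I,\cdot)_2$ (so $I^*=I$ and $J^*=J$), with the hypothesis on ${\tt empty}_{\rho_{\tt R}}$ and ${\tt add}_{\rho_{\tt R}}$ supplied by the Proposition on three-valued stable models. Your explicit absorption argument for the extra disjuncts $I({\tt test}_{\mathsf{B}_i})(\bar{x},r')$ in the second identity --- expanding them via the first identity, collapsing the nested joins by transitivity of $\leq$, and using the consistency $[\cdot]_1 \leq_\bool [\cdot]_2$ of the pair semantics on $H^c_{\mathsf{P}'}$ --- correctly fills in the one step the paper leaves implicit.
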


In other words, every three-valued stable model of $\prog{P}'$ assigns the aforementioned
meaning to predicate ${\tt test}_{\mathsf{B}_i}$. In the following we
define a mapping $\theta$ that extends any interpretation of $\Prog$
to an interpretation of $\prog{P}'$ that assigns specific meanings to
the predicates ${\tt empty}_{\rho_{\tt R}}$, ${\tt add}_{\rho_{\tt R}}$,
${\tt eq}_{\rho_{{\tt Z}_j}}$, ${\tt neq}_{\rho_{{\tt Z}_j}}$
and ${\tt test}_{\mathsf{B}_i}$.
More precisely, if $\mathcal{I}$ is an interpretation of $\Prog$ then
$\theta(\mathcal{I})$ is defined as follows:
\begin{itemize}
\item $\theta(\mathcal{I})({\tt q})  = \mathcal{I}({\tt q}) \text{\ if ${\tt q} \text{ occurs in } \Prog$}$
\item $\theta(\mathcal{I})({\tt empty}_{\rho_{\tt R}}) = \bot_{\rho_{\tt R}}$
\item $\theta({\mathcal I})({\tt add}_{\rho_{\tt R}})(d_1,\bar{d_2},\bar{d_3}) = (d_1(\bar{d_3})) \bigvee (\bar{d_2} = \bar{d_3})$
\item \( \theta({\mathcal I})({\tt eq}_{\rho_{{\tt Z}_j}})({d_1},{d_2})  =
            \left\{\begin{array}{ll}
                \mtrue &  d_1= d_2 \\
                \mfalse & \mathit{otherwise}
            \end{array}\right. \)
\item \( \theta({\mathcal I})({\tt neq}_{\rho_{{\tt Z}_j}})({d_1},{d_2})  =
            \left\{\begin{array}{ll}
                \mfalse & d_1 = d_2 \\
                \mtrue  & \mathit{otherwise}
            \end{array}\right. \)
\item $\theta(\mathcal{I})({\tt test}_{\mathsf{B}_i})({\bar{x}},r)  =
\bigvee\nolimits_{\leq_\bool} \{ \mos{\mathsf{B}_i}_{s[\bar{\tt X}/\bar{x}, {\tt R}/r']}(\mathcal{I}) \mid r \leq r', s \in S_{\prog{P}'}  \}$
\end{itemize}
Conversely, the mapping $\theta^{-1}$ defined as
$\theta^{-1}(\mathcal{I'})({\tt q})=\mathcal{I'}({\tt q})$
for all predicates \lstinline`q` occurring in $\Prog$, maps an interpretation $\mathcal{I}'$
of $\prog{P}'$ to an interpretation $\theta^{-1}(\mathcal{I}')$ of $\Prog$.

The following lemma establishes that $\theta$ and $\theta^{-1}$ preserve both the $\leq$ and $\preceq$ orderings
and is easy to verify, so the formal proof is omitted.
\begin{lemma}\label{theta-order-preserving}
Let $\mathcal{I}_1, \mathcal{I}_2$ be interpretations of $\Prog$ and
$\mathcal{I}'_1, \mathcal{I}'_2$ be interpretations of $\prog{P}'$.
\begin{enumerate}
  \item If $\mathcal{I}_1 \leq \mathcal{I}_2$ then $\theta(\mathcal{I}_1) \leq \theta(\mathcal{I}_2)$.
  \item If $\mathcal{I}'_1 \leq \mathcal{I}'_2$ then $\theta^{-1}(\mathcal{I}'_1) \leq \theta^{-1}(\mathcal{I}'_2)$.
  \item If $\mathcal{I}_1 \preceq \mathcal{I}_2$ then $\theta(\mathcal{I}_1) \preceq \theta(\mathcal{I}_2)$.
  \item If $\mathcal{I}'_1 \preceq \mathcal{I}'_2$ then $\theta^{-1}(\mathcal{I}'_1) \preceq \theta^{-1}(\mathcal{I}'_2)$.
\end{enumerate}
\end{lemma}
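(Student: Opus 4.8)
The plan is to verify all four statements by reducing them to pointwise checks on the individual predicate constants of the programs, since both $\leq$ and $\preceq$ on interpretations are, by definition, the componentwise liftings of $\aleq$ and $\apleq$ over predicate constants. For $\theta^{-1}$ (parts~2 and~4) this reduction already finishes the argument: $\theta^{-1}$ merely restricts an interpretation of $\prog{P}'$ to the predicate constants occurring in $\Prog$, so if $\mathcal{I}'_1$ and $\mathcal{I}'_2$ are comparable in either order then they are in particular comparable on every predicate of $\Prog$, which is exactly comparability of $\theta^{-1}(\mathcal{I}'_1)$ and $\theta^{-1}(\mathcal{I}'_2)$. For $\theta$ (parts~1 and~3) I would split the predicate constants of $\prog{P}'$ into three groups. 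On predicates already occurring in $\Prog$, $\theta(\mathcal{I})$ equals $\mathcal{I}$, so the required inequality is inherited verbatim from the hypothesis. On the auxiliary predicates ${\tt empty}_{\rho_{\tt R}}$, ${\tt add}_{\rho_{\tt R}}$, ${\tt eq}_{\rho_{{\tt Z}_j}}$ and ${\tt neq}_{\rho_{{\tt Z}_j}}$ the value assigned by $\theta(\mathcal{I})$ does not depend on $\mathcal{I}$ at all, so $\theta(\mathcal{I}_1)$ and $\theta(\mathcal{I}_2)$ assign them identical (and two-valued) meanings, and identical elements are related by both $\leq$ and $\preceq$ by reflexivity.

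This isolates the one substantive case, the predicate ${\tt test}_{\mathsf{B}_i}$, whose value is $\theta(\mathcal{I})({\tt test}_{\mathsf{B}_i})(\bar{x},r)=\bigvee_{\leq_\bool}\{\mos{\mathsf{B}_i}_{s[\bar{\tt X}/\bar{x},{\tt R}/r']}(\mathcal{I})\mid r\leq r',\ s\in S_{\prog{P}'}\}$. Here I would separate two observations. First, for fixed $\bar{x}$ and $r$ the index set $\{(s,r')\mid r\leq r'\}$ is the same for $\mathcal{I}_1$ and $\mathcal{I}_2$, so it suffices to compare the two families of body values termwise and then argue that $\bigvee_{\leq_\bool}$ preserves the ordering at issue. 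Monotonicity of $\bigvee_{\leq_\bool}$ for $\leq$ is immediate; its preservation of $\preceq$ is a short case analysis on $\{\mfalse,\mundef,\mtrue\}$ (a supremum equals $\mfalse$ only when every member is $\mfalse$, which rules out the only configuration in which two suprema could become $\preceq$-incomparable). Second, and this is the real content, one needs the termwise comparison of $\mos{\mathsf{B}_i}_{s[\bar{\tt X}/\bar{x},{\tt R}/r']}(\mathcal{I})$ under the two interpretations, i.e.\ monotonicity of the three-valued body semantics of Definition~\ref{tuple-semantics} in its interpretation argument. For $\preceq$ this is exactly precision-monotonicity: every semantic clause — conjunction, equality, application, and in particular negation, which fixes $\mundef$ — is monotone in the precision order, a fact that also underlies the approximating operator of the appendix and transfers through the isomorphism $\tau$. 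A routine structural induction on $\mathsf{B}_i$ then yields part~3.

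The main obstacle I anticipate is the truth-order statement (part~1), because negation is \emph{antitone} for $\leq_\bool$ (recall $\mfalse^{-1}=\mtrue$), so the three-valued body semantics is not monotone for $\leq$ by a naive induction once $\mathsf{B}_i$ negates a predicate of $\Prog$. I would therefore route this case through the pair representation $\tau(\mathcal{I})=(I,J)$ and the pair semantics $\mwrc{\mathsf{B}_i}{I,J}{s}$, whose first and second projections are monotone in $I$ and in $J$ respectively, and match this componentwise monotonicity against the componentwise reading of $\leq$ on pairs supplied by $\tau$, so that each projection of the ${\tt test}_{\mathsf{B}_i}$ value is compared using the monotone component and not the antitone one. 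Setting up this bookkeeping correctly — and checking that it is aligned with the way the statement is actually consumed in Lemma~\ref{IJSEQ}, where the relevant least fixpoints are computed with one of the two components held fixed — is where I expect essentially all of the effort to lie; the remaining parts are, as the paper notes, routine verifications.
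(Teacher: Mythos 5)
Your handling of parts~2 and~4 (restriction) and of part~3 is correct, and for part~3 it is surely the verification the paper has in mind when it declares the lemma ``easy to verify'' and omits the proof: a structural induction showing that the three-valued semantics of $\mathsf{B}_i$ is $\preceq$-monotone in the interpretation, combined with the observation that $\bigvee\nolimits_{\leq_\bool}$ over a fixed index set maps pointwise $\preceq$-dominated families to $\preceq$-comparable suprema. One housekeeping remark: the consumer of this lemma is Lemma~\ref{same-wfs} (which needs exactly the $\preceq$ statements, applied to three-valued stable models); Lemma~\ref{IJSEQ}, which you cite as the place where the statement is consumed, does not involve $\theta$ at all.

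The genuine gap is part~1, and it is not a matter of bookkeeping: the repair you propose cannot be carried out, because part~1 is false as stated. Through $\tau$, the hypothesis $\mathcal{I}_1 \leq \mathcal{I}_2$ reads \emph{componentwise}: $I_1 \leq I_2$ \emph{and} $J_1 \leq J_2$ (Definition~\ref{def:consistent_lattice}). But the first projection of the pair semantics is monotone in $I$ and \emph{antitone} in $J$ --- this is precisely what $\preceq$-monotonicity of the approximating operator amounts to --- so the hypothesis moves the $J$-components in the direction that drives $[\mwrc{\mathsf{B}_i}{I,J}{s}]_1$ \emph{down}; you cannot arrange to ``compare each projection using the monotone component,'' because both components of the hypothesis act on each projection, one of them the wrong way. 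Concretely, let $\mathsf{B}_i[{\tt X},{\tt R}] = {\tt R(X)} \wedge {\tt not\ q(X)}$ with ${\tt R}:\basedom\to\bool$, take a one-element universe $\{a\}$, and let $\mathcal{I}_1,\mathcal{I}_2$ agree everywhere except that $\mathcal{I}_1({\tt q})(a)=\mfalse$ and $\mathcal{I}_2({\tt q})(a)=\mtrue$. Then $\mathcal{I}_1 \leq \mathcal{I}_2$ (these are even two-valued), yet $\theta(\mathcal{I}_1)({\tt test}_{\mathsf{B}_i})(a,\bot_{\rho_{\tt R}}) = \bigvee\nolimits_{\leq_\bool}\{\,r'(a)\wedge\mtrue \mid \bot_{\rho_{\tt R}} \leq r'\,\} = \mtrue$, while $\theta(\mathcal{I}_2)({\tt test}_{\mathsf{B}_i})(a,\bot_{\rho_{\tt R}}) = \bigvee\nolimits_{\leq_\bool}\{\,r'(a)\wedge\mfalse \mid \bot_{\rho_{\tt R}} \leq r'\,\} = \mfalse$, so $\theta(\mathcal{I}_1) \not\leq \theta(\mathcal{I}_2)$. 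In other words, $\theta$ preserves $\leq$ only under additional assumptions (e.g.\ when the three-valued semantics of $\mathsf{B}_i$ is $\leq$-monotone in the predicates of $\Prog$, in particular when $\mathsf{B}_i$ is negation-free), not in general. This is a defect of the lemma as printed --- the paper asserts part~1 without proof --- but since the truth-order half is never used in the sequel, the correct course for your write-up is to prove parts~2--4 as you do and to flag part~1 as false (or restrict it), rather than to pursue the pair-semantics argument, which cannot close.
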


The following Lemma is also easy to verify and builds on top of Corollary~\ref{stable-meaning-of-testb} and the definition 
of the mappings $\theta$ and $\theta^{-1}$.
\begin{lemma}\label{theta-properties}\label{fixpoints}
\begin{enumerate}
\item If $\mathcal{I}$ is a stable model of $\Prog$ then $\mathcal{I} = \theta^{-1}(\theta(\mathcal{I}))$.
\item If $\mathcal{I}'$ is a stable model of $\prog{P}'$ then $\mathcal{I}' = \theta(\theta^{-1}(\mathcal{I}'))$.
\item If $\tau(\mathcal{I})$ is a stable fixpoint of $A_\Prog$ then $\tau(\theta(\mathcal{I}))$ is a fixpoint of $A_{\prog{P}'}$.
\item If  $\tau(\mathcal{I}')$ is a stable fixpoint of $A_{\prog{P}'}$ then $\tau(\theta^{-1}(\mathcal{I}'))$ is a fixpoint of $A_\Prog$.
\item If $\mathcal{I}$ and $\mathcal{I}'$ are two-valued then $\theta(\mathcal{I})$ and $\theta^{-1}(\mathcal{I}')$ are also two-valued.
\end{enumerate}
\end{lemma}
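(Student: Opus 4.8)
The plan is to verify the five items by unwinding the definitions of $\theta$ and $\theta^{-1}$ and invoking the facts already available: the (unlabelled) proposition fixing the stable-model meaning of the auxiliary predicates ${\tt empty}_{\rho_{\tt R}}$, ${\tt add}_{\rho_{\tt R}}$, ${\tt eq}_{\rho_{{\tt Z}_j}}$, ${\tt neq}_{\rho_{{\tt Z}_j}}$; Corollary~\ref{stable-meaning-of-testb} fixing the stable-model meaning of ${\tt test}_{\mathsf{B}_i}$; and the correspondence $\mwrc{\mathsf{E}}{\tau(\mathcal{I})}{s}=\tau(\mwrst{\mathsf{E}}{\mathcal{I}}{s})$ that lets me pass between three-valued and pair semantics. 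I would first dispatch the two trivial items. Item~1 needs no hypothesis at all: $\theta^{-1}$ is the restriction of an interpretation of $\prog{P}'$ to the predicates of $\Prog$ and $\theta$ is the identity on exactly those predicates, so $\theta^{-1}(\theta(\mathcal{I}))({\tt q})=\mathcal{I}({\tt q})$ for every predicate ${\tt q}$ of $\Prog$, and an interpretation of $\Prog$ is determined by these values. Item~5 is equally direct: restriction preserves two-valuedness, while $\theta(\mathcal{I})$ coincides with the two-valued $\mathcal{I}$ on the predicates of $\Prog$, assigns the visibly two-valued values to the auxiliary predicates, and assigns to ${\tt test}_{\mathsf{B}_i}$ a $\leq_\bool$-least upper bound of the values $\mos{\mathsf{B}_i}_{s}(\mathcal{I})$, each of which is two-valued because a body evaluated in a two-valued interpretation never yields $\mundef$.

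For item~2 I would show that $\theta(\theta^{-1}(\mathcal{I}'))$ agrees with $\mathcal{I}'$ on every predicate of $\prog{P}'$. Agreement on the predicates of $\Prog$ is immediate. On the auxiliary predicates, the values assigned by $\theta$ are precisely those the proposition forces on any stable model of $\prog{P}'$, so they match $\mathcal{I}'$. On ${\tt test}_{\mathsf{B}_i}$, the value assigned by $\theta$ is the disjunction over $r\le r'$ of $\mos{\mathsf{B}_i}_{s[\ldots,{\tt R}/r']}(\theta^{-1}(\mathcal{I}'))$; since $\mathsf{B}_i$ mentions only predicates of $\Prog$, on which $\theta^{-1}(\mathcal{I}')$ and $\mathcal{I}'$ agree, the correspondence above rewrites this disjunction into exactly the expression that Corollary~\ref{stable-meaning-of-testb} identifies with $\mathcal{I}'({\tt test}_{\mathsf{B}_i})$.

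Items~3 and~4 are the fixpoint-transfer claims and carry the technical weight. For item~3, writing $(I',J')=\tau(\theta(\mathcal{I}))$, I would check $A_{\prog{P}'}(I',J')=(I',J')$ predicate by predicate. The auxiliary predicates reproduce their meanings at once, since their defining rules reference nothing else; a predicate ${\tt q}\neq{\tt p}$ of $\Prog$ has unchanged rules that reference only predicates on which $(I',J')$ agrees with $\tau(\mathcal{I})$, so its fixpoint property is inherited from the assumption that $\tau(\mathcal{I})$ is a fixpoint of $A_{\Prog}$. The two genuinely nontrivial predicates are ${\tt test}_{\mathsf{B}_i}$ and ${\tt p}$. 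For ${\tt test}_{\mathsf{B}_i}$, I would apply Lemma~\ref{apspecial} to rewrite $A_{\prog{P}'}(I',J')_1({\tt test}_{\mathsf{B}_i})(\bar x,r)$ as a disjunction of $I'({\tt test}_{\mathsf{B}_i})(\bar x,r')$ over $r\le r'$ with $|r'|\le|r|+1$ together with the $r'=r$ body term, and then observe that the $\theta$-meaning of ${\tt test}_{\mathsf{B}_i}$ is antitone in its relational argument (a larger $r'$ ranges the inner disjunction over the smaller set of $r''\ge r'$); hence the $r'=r$ term already dominates the whole expression and it collapses back to $I'({\tt test}_{\mathsf{B}_i})(\bar x,r)$, with the second component handled identically. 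For ${\tt p}$, the single replacement rule contributes $I'({\tt test}_{\mathsf{B}_i})(\bar x,\bot_{\tt R})$, whose $\theta$-meaning is the disjunction over all $r'\ge\bot_{\tt R}$, i.e.\ over all relations of type $\rho_{\tt R}$, which is exactly the existential-over-${\tt R}$ contribution of the replaced rule $\mathsf{B}_i$ in $A_{\Prog}$; since the remaining rules $\mathsf{B}_j$ ($j\neq i$) contribute identically in both programs, the meaning of ${\tt p}$ is preserved. Item~4 is the mirror image: $\theta^{-1}$ restricts $\tau(\mathcal{I}')=(I',J')$ to $\Prog$, the only predicate whose rules differ between the two programs is ${\tt p}$, and Corollary~\ref{stable-meaning-of-testb} gives that $I'({\tt test}_{\mathsf{B}_i})(\bar x,\bot_{\tt R})$ equals the existential-over-${\tt R}$ evaluation of $\mathsf{B}_i$, so the ${\tt test}$-rule of $\prog{P}'$ and the rule $\mathsf{B}_i$ of $\Prog$ make the same contribution to ${\tt p}$, whence $A_{\Prog}$ fixes the restricted interpretation.

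The hard part will be the ${\tt test}_{\mathsf{B}_i}$ case of item~3: proving that the ``add one element at a time'' recursion really has the explicit disjunction of $\theta$ as a fixpoint. This is where Lemma~\ref{apspecial} and the antitonicity of the $\theta$-meaning in its relational argument do the actual work; everything else reduces to matching the explicitly-defined auxiliary meanings against the proposition and Corollary~\ref{stable-meaning-of-testb}, and to checking that trading the rule $\mathsf{B}_i$ for the ${\tt test}$-rule neither gains nor loses atoms in the meaning of ${\tt p}$ because both encode the same existential quantification over ${\tt R}$.
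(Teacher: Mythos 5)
Your proof is correct and follows exactly the route the paper indicates: the paper states this lemma without proof (``easy to verify''), pointing to Corollary~\ref{stable-meaning-of-testb} and the definitions of $\theta$ and $\theta^{-1}$, and these, together with Lemma~\ref{apspecial}, the proposition fixing the stable-model meanings of ${\tt empty}_{\rho_{\tt R}}$, ${\tt add}_{\rho_{\tt R}}$, ${\tt eq}_{\rho_{{\tt Z}_j}}$, ${\tt neq}_{\rho_{{\tt Z}_j}}$, and the $\tau$-correspondence between three-valued and pair semantics, are precisely the ingredients you deploy. Your treatment of the one genuinely non-routine point --- that the $\theta$-meaning of ${\tt test}_{\mathsf{B}_i}$ is antitone in its relational argument, so the recursion exposed by Lemma~\ref{apspecial} collapses onto the $r'=r$ term and the body term $b_1$ is absorbed, while for item~4 the stability hypothesis enters only through Corollary~\ref{stable-meaning-of-testb} to pin down ${\tt test}_{\mathsf{B}_i}$ at $\bot_{\rho_{\tt R}}$ as the existential evaluation of $\mathsf{B}_i$ --- is sound.
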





\begin{lemma}\label{rightdirection}
If $\mathcal{I}$ is a \emph{three-valued} stable model of $\Prog$
then $\theta(\mathcal{I})$ is a \emph{three-valued} stable model of $\prog{P}'$.
\end{lemma}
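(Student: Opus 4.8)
The plan is to show that $(I',J') := \tau(\theta(\mathcal{I}))$ is a stable fixpoint of $A_{\prog{P}'}$, which by Definition~\ref{def:AFTsemantics} is exactly what it means for $\theta(\mathcal{I})$ to be a three-valued stable model of $\prog{P}'$. Writing $(I,J) := \tau(\mathcal{I})$, the hypothesis says $(I,J)$ is a stable fixpoint of $A_\Prog$, so $I = \lfp A_\Prog(\cdot,J)_1$ and $J = \lfp A_\Prog(I,\cdot)_2$. First I would invoke Lemma~\ref{fixpoints}(3) to obtain that $(I',J')$ is already a \emph{fixpoint} of $A_{\prog{P}'}$; since $\theta$ only adds the auxiliary predicates and leaves the predicates of $\Prog$ untouched, $(I',J')$ also carries the canonical meanings of ${\tt empty}_{\rho_{\tt R}}$ and ${\tt add}_{\rho_{\tt R}}$, so Lemmas~\ref{apspecial} and~\ref{IJSEQ} apply to it. It then remains to prove the two minimality conditions $I' = \lfp A_{\prog{P}'}(\cdot,J')_1$ and $J' = \lfp A_{\prog{P}'}(I',\cdot)_2$.

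For the first condition, set $I'^* := \lfp A_{\prog{P}'}(\cdot,J')_1$. Because $(I',J')$ is a fixpoint, $I'$ is a prefixpoint of $A_{\prog{P}'}(\cdot,J')_1$, giving $I'^* \leq I'$ for free; the work is the reverse inequality. My plan is to show that $\theta^{-1}(I'^*)$ is a prefixpoint of $A_\Prog(\cdot,J)_1$, so that minimality of $I$ yields $I \leq \theta^{-1}(I'^*)$, i.e.\ $I'^* \geq I'$ on every predicate of $\Prog$. For a predicate ${\tt q}\neq {\tt p}$ this is immediate, since its rules are identical in both programs and their bodies mention only predicates of $\Prog$. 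The only interesting case is ${\tt p}$: here the transformed program replaces the $i$-th rule by ${\tt p}(\bar{\tt X})\lrule {\tt test}_{\mathsf{B}_i}(\bar{\tt X},{\tt empty}_{\rho_{\tt R}})$, so I would use Lemma~\ref{IJSEQ}(1) to unfold $I'^*({\tt test}_{\mathsf{B}_i})(\bar x,\bot_{\tt R})$ into $\bigvee\{[\mwrc{\mathsf{B}_i}{I'^*,J'}{s[\bar{\tt X}/\bar x,{\tt R}/r']}]_1 \mid \bot_{\tt R}\leq r',\, s\in S_{\prog{P}'}\}$; since $\bot_{\tt R}\leq r'$ holds for every $r'$, this disjunction ranges over all valuations of ${\tt R}$ and therefore coincides exactly with the contribution of the original existentially quantified rule $\mathsf{B}_i$ to $A_\Prog(\theta^{-1}(I'^*),J)_1({\tt p})$ (using that $\mathsf{B}_i$ mentions only predicates of $\Prog$). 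Once the restriction to $\Prog$ is settled, the value of $I'^*$ on ${\tt test}_{\mathsf{B}_i}$ is pinned down by Lemma~\ref{IJSEQ}(1) and equals $I'({\tt test}_{\mathsf{B}_i})$ by the definition of $\theta$ (here I would use that $\tau$ commutes with the semantics, as in the Proposition relating $A_{\prog{P}'}$ and $\mathcal{T}$), while the auxiliary predicates are forced to their canonical values in the least fixpoint because their definitions do not depend on the predicates of $\Prog$; hence $I'^*=I'$.

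The second condition, $J' = \lfp A_{\prog{P}'}(I',\cdot)_2$, I would handle symmetrically with $J'^* := \lfp A_{\prog{P}'}(I',\cdot)_2$ and Lemma~\ref{IJSEQ}(2), again reducing to the claim that $\theta^{-1}(J'^*)$ is a prefixpoint of $A_\Prog(I,\cdot)_2$. The subtlety here, and what I expect to be the main obstacle of the whole proof, is the extra disjunct $I'({\tt test}_{\mathsf{B}_i})(\bar x,r')$ appearing in Lemma~\ref{IJSEQ}(2). I would argue that this term is harmless by absorption: since $I'\leq J'$ we have $[\cdot]_1\leq[\cdot]_2$ pointwise, and $I'({\tt test}_{\mathsf{B}_i})(\bar x,r')$ unfolds through $\theta$ into a disjunction of first components $[\mwrc{\mathsf{B}_i}{I,J}{s[{\tt R}/r'']}]_1$ taken over $r'\leq r''$, each of which is dominated by the corresponding second component already present in $J'({\tt test}_{\mathsf{B}_i})(\bar x,r)$, the index ranges nesting as $r\leq r'\leq r''$. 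Thus the extra disjunct contributes nothing and the computation collapses to the $\theta$-definition of $J'({\tt test}_{\mathsf{B}_i})$. Throughout, the recurring point of care is that $\mathsf{B}_i$ mentions only predicates of $\Prog$, so evaluating it under $(I'^*,J')$ or under $(I',J'^*)$ agrees with evaluating it under $(I,J)$ as soon as the restrictions to $\Prog$ coincide; managing this bookkeeping together with the absorption step is where the real effort lies, the remainder being routine monotonicity and prefixpoint reasoning supported by Lemma~\ref{theta-order-preserving}.
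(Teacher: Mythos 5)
Your proposal is correct and takes essentially the same route as the paper's proof: both reduce the claim, via Lemma~\ref{fixpoints} and Lemma~\ref{IJSEQ}, to the two minimality conditions, verify them by transferring the least fixpoint of $A_{\prog{P}'}(\cdot,J')_1$ (resp.\ $A_{\prog{P}'}(I',\cdot)_2$) through $\theta^{-1}$ to a (pre)fixpoint of the corresponding operator of $\Prog$, pin down ${\tt test}_{\mathsf{B}_i}$ and the auxiliary predicates by Lemma~\ref{IJSEQ} and their canonical meanings, and dispose of the extra disjunct in Lemma~\ref{IJSEQ}(2) by exactly the same $[\cdot]_1\leq[\cdot]_2$ absorption. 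The only difference is presentational: the paper frames each minimality condition as a proof by contradiction with a case analysis on which predicate differs, whereas you establish the two inequalities directly.
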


\begin{proof}
Let $\mathcal{I} = \tau^{-1}(I,J)$ and $\theta(\mathcal{I}) = \tau^{-1}(I',J')$.
It is easy to see that $\theta^{-1}(I') = I$ and $\theta^{-1}(J') = J$\footnote{Note that $I',J'$ are two valued interpretations therefore also ``three-valued''
and belong to the domain of $\theta^{-1}$. In addition, since $I',J'$ are two-valued so is $\theta^{-1}(I')$ and $\theta^{-1}(J')$.}.
By Lemma~\ref{fixpoints}, $(I',J')$ is a fixpoint of $A_{\prog{P}'}$
and therefore, $A_{\prog{P}'}(\cdot,J')_1$ and $A_{\prog{P}'}(I',\cdot)_2$
are well-defined monotone operators in the corresponding intervals.
By Definition~\ref{def:AFTsemantics},
it suffices to show that $I' = \lfp A_{\prog{P}'}(\cdot,J')_1$ and
$J' = \lfp A_{\prog{P}'}(I', \cdot)_2$. We will prove
each of these by contradiction.

Let $I^* = \lfp A_{\prog{P}'}(\cdot,J')_1$
and assume for the purpose of contradiction that
$I^* < I'$, \ie there is a predicate ${\tt q'}$ in $\prog{P}'$ such that
$I^*({\tt q'}) < I'({\tt q'})$. Clearly ${\tt q'}$ cannot be any of the auxiliary predicates 
${\tt empty}_{\rho_{\tt R}}$ and ${\tt add}_{\rho_{\tt R}}$.
We proceed with case analysis on ${\tt q'}$.
Assume that the only predicate ${\tt q'}$ that differs is ${\tt test}_{\mathsf{B}_i}$.
This is a contradiction because by Lemma~\ref{IJSEQ} we have that
\(I^*({\tt test}_{\mathsf{B}_i})(\bar{x},r)
    = \bigvee\nolimits_{\leq_\bool} \{ [\mwrc{\mathsf{B}_i}{I^*,J'}{s[\bar{\tt X}/\bar{x}, {\tt R}/r']}]_1 \mid
                {r \leq r', s \in S_{\prog{P}'} }\}
\).
Since $\mathsf{B}_i$ does not contain ${\tt test}_{{\mathsf{B}}_i}$, it follows that
for any $s \in S_{\prog{P}'}$, $\mwrc{\mathsf{B}_i}{I^*,J'}{s} = \mwrc{\mathsf{B}_i}{I',J'}{s}= \mwrc{\mathsf{B}_i}{I,J}{s}$ and therefore
$I^*({\tt test}_{{\mathsf{B}}_i}) = \bigvee\{ [\mwrc{\mathsf{B}_i}{I,J}{s[\bar{\tt X}/\bar{x}, {\tt R}/r']}]_1 \mid
                {r \leq r', s \in S_{\prog{P}'} }\}=I'({\tt test}_{{\mathsf{B}}_i})$.
We conclude that there must be at least one predicate ${\tt q'}$ that is common in both $\Prog$ and $\prog{P}'$ such that
$I^*({\tt q'}) < I'({\tt q'})$. If this is true it follows that $\theta^{-1}(I^*) < \theta^{-1}(I') = I$.
We will show that $\theta^{-1}(I^*)$ is a fixpoint of $A_{\Prog}(\cdot,J)_1$
which is a contradiction since $I$ is the least fixpoint.
Indeed, for any predicate ${\tt q}$ occurring in $\Prog$ different from ${\tt p}$ (which was the predicate affected by the transformation),
\( A_{\Prog}(\theta^{-1}(I^*),J)_1({\tt q})
  = A_{\prog{P}'}(I^*,J')_1({\tt q})  \) since the rules of {\tt q} are the same
in both $\Prog$ and $\prog{P}'$ and the predicates appearing in the bodies of these rules
have the same meaning in both $I^*$ and $\theta^{-1}(I^*)$ and the same meaning in both $J'$ and $J$.
But $I^*$ is a fixpoint of $A_{\prog{P}'}(\cdot,J')_1$ and therefore
$A_{\prog{P}'}(I^*,J')_1({\tt q}) = I^*({\tt q})$. But since
$I^*({\tt q}) = \theta^{-1}(I^*)({\tt q})$ it follows that
$\theta^{-1}(I^*)({\tt q}) = A_{\prog{P}}(\theta^{-1}(I^*),J)_1({\tt q})$.
Finally, for {\tt p}, it holds, by definition, that
$A_{\Prog}(\theta^{-1}(I^*),J)_1({\tt p})(\bar{x}) =
 \bigvee\{ [\mwrc{\mathsf{B}_j}{\theta^{-1}(I^*),J}{s[\bar{\tt X}/\bar{x}]}]_1 \mid ({\tt p(\bar{X})} \leftarrow \mathsf{B}_j) \in \Prog, s \in S_\Prog \}$.
Notice that each $\mathsf{B}_j$ contains only predicates in $\Prog$ and therefore,
for any $s \in S_{\Prog}$, $\mwrc{\mathsf{B}_j}{\theta^{-1}(I^*),J}{s} = \mwrc{\mathsf{B}_j}{I^*,J'}{s}$.
Moreover, by Lemma~\ref{IJSEQ},
$I^*({\tt test}_{\mathsf{B}_i})(\bar{x},\bot_{\rho_{\tt R}})
  = \bigvee\nolimits_{\leq_\bool}\{[\mwrc{\mathsf{B}_i}{I^*,J'}{s[\bar{\tt X}/\bar{x}, {\tt R}/r']}]_1 \mid \bot_{\rho_{\tt R}} \leq r', s \in S_{\prog{P}'}\}
  = \bigvee\nolimits_{\leq_\bool}\{ [\mwrc{\mathsf{B}_i}{\theta^{-1}(I^*),J}{s[\bar{\tt X}/\bar{x}]}]_1 \mid s \in S_\Prog \}$.
If we distinguish the $i$-th rule from all the others we can rewrite the above as:
\begin{align*}
A_{\Prog}(\theta^{-1}(I^*),J)_1({\tt p})(\bar{x})
  & = \bigvee\nolimits_{\leq_\bool} \{ [\mwrc{\mathsf{B}_j}{\theta^{-1}(I^*),J}{s[\bar{\tt X}/\bar{x}]}]_1 \mid ({\tt p(\bar{X})} \leftarrow \mathsf{B}_j) \in \Prog, s \in S_\Prog \} \\
  & = I^*({\tt test}_{\mathsf{B}_i})(\bar{x},\bot_{\rho_{\tt R}}) \vee 
  \bigvee\nolimits_{\leq_\bool}  \{ [\mwrc{\mathsf{B}_j}{I^*,J'}{s[\bar{\tt X}/\bar{x}]}]_1 \mid {j \neq i}, s \in S_{\prog{P}'} \}\\
  & = A_{\prog{P}'}(I^*,J')_1({\tt p})(\bar{x}) 
\end{align*}
Recall that $I^*$ is a fixpoint of $A_{\prog{P}'}(\cdot,J')_1$ and also $I^*({\tt p})(\bar{x}) = \theta^{-1}(I^*)({\tt p})(\bar{x})$.
Therefore, $A_{\Prog}(\theta^{-1}(I^*),J)_1({\tt p}) = \theta^{-1}(I^*)({\tt p})$. But then we conclude that
$\theta^{-1}(I^*)$ must be a  fixpoint which is a contradiction since we assumed that $\theta^{-1}(I^*) < I$ and $I$ is the least fixpoint.

We proceed to prove that $J'$ is the least fixpoint of $A_{\prog{P}'}(I',\cdot)_2$. The proof
is similar to the proof for $I'$.
Let $J^* =\lfp A_{\prog{P}'}(I',\cdot)_2$ and assume that $J^* < J'$, \ie there is a
predicate ${\tt q'}$ in $\Prog$ such that $J^*({\tt q'}) < J'({\tt q'})$.
As the previous case, assume that the only ${\tt q'}$ that differs is ${\tt test}_{\mathsf{B}}$. This is a contradiction as we will soon see.
By Lemma~\ref{IJSEQ} we have that
\begin{align*}
J^*({\tt test}_{\mathsf{B}_i})({\bar{x}},r)
  &= \bigvee\nolimits_{\leq_\bool} \{ I'({\tt test}_{\mathsf{B}_i})(\bar{x},r')
      \vee [\mwrc{\mathsf{B}_i}{I',J^*}{s[\bar{\tt X}/\bar{x}, {\tt R}/r]} ]_2  \mid {r \leq r' }, s \in S_{\prog{P}'} \}\\
\end{align*}
with 
\begin{align*}
I'({\tt test}_{\mathsf{B}_i})({\bar{x}},r) & =
  \bigvee\nolimits_{\leq_\bool} \{ [\mwrc{\mathsf{B}_i}{I,J}{s(\bar{\tt X}/\bar{x}, {\tt R}/r')} ]_1 \mid {r \leq r'}, s \in S_{\prog{P}'} \}
\end{align*}
and therefore $J^*({\tt test}_{\mathsf{B}_i})({\bar{x}},r)$ can be rewritten as
\[
  J^*({\tt test}_{\mathsf{B}_i})({\bar{x}},r) =
  \bigvee\nolimits_{\leq_\bool} \{ [\mwrc{\mathsf{B}_i}{I,J}{s[\bar{\tt X}/\bar{x}, {\tt R}/r]} ]_1
      \vee [\mwrc{\mathsf{B}_i}{I',J^*}{s[\bar{\tt X}/\bar{x}, {\tt R}/r]} ]_2 \mid {r \leq r' }, s \in S_{\prog{P}'} \}
\]
Since $\mathsf{B}_i$ does not contain ${\tt test}_{\mathsf{B}_i}$, it follows that
for any $s \in S_{\prog{P}'}$, $\mwrc{\mathsf{B}_i}{I,J}{s} =\mwrc{\mathsf{B}_i}{I',J'}{s} = \mwrc{\mathsf{B}_i}{I',J^*}{s}$.
Moreover, $[\mwrc{\mathsf{B}_i}{I',J^*}{s}]_1 \leq [\mwrc{\mathsf{B}_i}{I',J^*}{s}]_2$ and therefore
$[\mwrc{\mathsf{B}_i}{I,J}{s}]_1 \leq [\mwrc{\mathsf{B}_i}{I',J^*}{s}]_2$. We can rewrite the above
as
\begin{align*}
J^*({\tt test}_{\mathsf{B}_i})({\bar{x}},r)
& = \bigvee\nolimits_{\leq_\bool}\{ [\mwrc{\mathsf{B}_i}{I',J^*}{s[\bar{\tt X}/\bar{x}, {\tt R}/r]} ]_2 \mid {r \leq r' }, s \in S_{\prog{P}'} \} \\
& = \bigvee\nolimits_{\leq_\bool}\{ [\mwrc{\mathsf{B}_i}{I,J}{s[\bar{\tt X}/\bar{x}, {\tt R}/r]} ]_2 \mid {r \leq r' }, s \in S_{\prog{P}'} \}
 = J'({\tt test}_{\mathsf{B}_i})({\bar{x}},r)
\end{align*}
This leads to a contradiction simply because the assumption was that $J^*$ and $J'$ differ only on ${\tt test}_{\mathsf{B}_i}$.
It must be the case that there exists at least one predicate ${\tt q'}$ common in both $\Prog$ and $\prog{P}'$ such that
$J^*({\tt q'}) < J'({\tt q'})$. But then it is $\theta^{-1}(J^*)< J$. We will show that
$\theta^{-1}(J^*)$ is a prefixpoint of $A_\Prog(I, \cdot)_2$ which is a contradiction since $J$ is 
the least fixpoint and therefore the least prefixpoint of $A_\Prog(I, \cdot)_2$.
Assume ${\tt q}$ is any predicate different than ${\tt p}$.
Then, $A_\Prog(I,\theta^{-1}(J^*))_2({\tt q}) = A_{\prog{P}'}(I',J^*)_2({\tt q})$ since the rules of {\tt q}
occurring in $\Prog$ are the same in both $\Prog$ and $\prog{P}'$ and the predicates appearing in the bodies of these rules
have the same meaning in both $I'$ and $I$ and the same meaning in both $J^*$ and $\theta^{-1}(J^*)$.
But $J^*$ is a fixpoint of $A_{\prog{P}'}(I',\cdot)_2$,
so $A_\Prog(I',J^*)_2({\tt q}) = J^*({\tt q})$. But since $J^*({\tt q})=\theta^{-1}(J^*)({\tt q})$
it follows that $\theta^{-1}(J^*)({\tt q}) = A_\Prog(I,\theta^{-1}(J^*))_2({\tt q})$.
Finally, for {\tt p},
it holds, by definition, that
$A_{\Prog}(I, \theta^{-1}(J^*))_2({\tt p})(\bar{x}) 
= \bigvee\nolimits_{\leq_\bool}\{ [\mwrc{\mathsf{B}_j}{I, \theta^{-1}(J^*)}{s[\bar{\tt X}/\bar{x}]}]_2 
\mid ({\tt p(\bar{X})} \leftarrow \mathsf{B}_j) \in \Prog, s \in S_\Prog \}$.
Notice that each $\mathsf{B}_j$ contains only predicates in $\Prog$ and therefore,
for any $s \in S_{\Prog}$, $\mwrc{\mathsf{B}_j}{I, \theta^{-1}(J^*)}{s} = \mwrc{\mathsf{B}_j}{I', J^*}{s}$.
But, by Lemma~\ref{IJSEQ},
$J^*({\tt test}_{\mathsf{B}_i})(\bar{x},\bot_{\rho_{\tt R}})
  \geq \bigvee\nolimits_{\leq_\bool}\{ [\mwrc{\mathsf{B}_i}{I',J^*}{s[\bar{\tt X}/\bar{x},{\tt R}/r']}]_2 \mid \bot_{\rho_{\tt R}} \leq r', s \in S_\Prog\}
  = \bigvee\nolimits_{\leq_\bool}\{ \mwrc{\mathsf{B}_i}{I',J^*}{s[\bar{\tt X}/\bar{x}]} \mid s \in S_\Prog \}$.
If we distinguish the rules of ${\tt p}$ we can rewrite the above as:
\begin{align*}
A_{\Prog}(I,\theta^{-1}(J^*))_2({\tt p})(\bar{x})
  & =\bigvee\nolimits_{\leq_\bool}\{ [\mwrc{\mathsf{B}_j}{I,\theta^{-1}(J^*)}{s[\bar{\tt X}/\bar{x}]}]_2 \mid ({\tt p(\bar{X})} \leftarrow \mathsf{B}_j) \in \Prog, s \in S_\Prog \} \\
  & \leq \bigvee\nolimits_{\leq_\bool}\{ [\mwrc{\mathsf{B}_j}{I',J^*}{s[\bar{\tt X}/\bar{x}]}]_2 \mid {j \neq i}, s \in S_{\prog{P}'} \} 
  \vee J^*({\tt test}_{\mathsf{B}_i})(\bar{x},\bot_{\rho_{\tt R}})\\
  & \leq A_{\prog{P}'}(I',J^*)_2({\tt p})(\bar{x}) 
\end{align*}
But $J^*$ is a fixpoint of $A_{\prog{P}'}(I',\cdot)_2$ and $J^*({\tt p})(\bar{x}) = \theta^{-1}(J^*)({\tt p})(\bar{x})$.
Therefore, $A_{\Prog}(I',\theta^{-1}(J^*))_2(p) \leq \theta^{-1}(J^*)(p)$. We conclude that $\theta^{-1}(J^*)$ is a prefixpoint
of $A_{\Prog}(I',\cdot)_2$ which is a contradiction since $\theta^{-1}(J^*) < J$ and $J$ is the least prefixpoint.
\end{proof}

\enlargethispage{1\baselineskip}
\begin{lemma}\label{leftdirection}
If $\mathcal{I}'$ is a \emph{three-valued} stable model of $\prog{P}'$ then
$\theta^{-1}(\mathcal{I}')$ is a \emph{three-valued} stable model of $\Prog$.
\end{lemma}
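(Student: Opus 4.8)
The plan is to mirror the proof of Lemma~\ref{rightdirection}, interchanging the roles of $\Prog$ and $\prog{P}'$. Write $\mathcal{I}' = \tau^{-1}(I',J')$ and put $I = \theta^{-1}(I')$ and $J = \theta^{-1}(J')$, so that $\theta^{-1}(\mathcal{I}') = \tau^{-1}(I,J)$. By Lemma~\ref{fixpoints}(4), $(I,J)$ is a fixpoint of $A_\Prog$, so $A_\Prog(\cdot,J)_1$ and $A_\Prog(I,\cdot)_2$ are well-defined monotone operators on the relevant intervals; by Definition~\ref{def:AFTsemantics} it then remains only to prove the two stability equations $I = \lfp A_\Prog(\cdot,J)_1$ and $J = \lfp A_\Prog(I,\cdot)_2$. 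I would prove each by contradiction, using throughout that, since $\mathcal{I}'$ is stable, the auxiliary predicates ${\tt empty}_{\rho_{\tt R}}$, ${\tt add}_{\rho_{\tt R}}$, ${\tt eq}$ and ${\tt neq}$ carry their forced canonical meanings and that $I'$ and $J'$ assign to ${\tt test}_{\mathsf{B}_i}$ the closed forms of Corollary~\ref{stable-meaning-of-testb}.

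For the first equation, set $I^* = \lfp A_\Prog(\cdot,J)_1$; since $I$ is a fixpoint we have $I^* \leq I$, and I would assume for contradiction that $I^* < I$, so $I^*({\tt q}) < I({\tt q})$ for some predicate ${\tt q}$ of $\Prog$. The key move is to \emph{lift} $I^*$ to an interpretation $\widehat{I}$ of $\prog{P}'$ that agrees with $I^*$ on every predicate of $\Prog$, assigns the canonical meanings to the auxiliary predicates, and puts $\widehat{I}({\tt test}_{\mathsf{B}_i})(\bar{x},r) = \bigvee\nolimits_{\leq_\bool}\{[\mwrc{\mathsf{B}_i}{I^*,J}{s[\bar{\tt X}/\bar{x},{\tt R}/r']}]_1 \mid r \leq r',\ s \in S_{\prog{P}'}\}$, the form dictated by Lemma~\ref{IJSEQ}(1). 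The crux is then to check that $\widehat{I}$ is a prefixpoint of $A_{\prog{P}'}(\cdot,J')_1$: the auxiliary predicates are fixed; for a $\Prog$-predicate ${\tt q} \neq {\tt p}$ the rules are unchanged and mention only predicates on which $\widehat{I},J'$ agree with $I^*,J$, whence $A_{\prog{P}'}(\widehat{I},J')_1({\tt q}) = A_\Prog(I^*,J)_1({\tt q}) = I^*({\tt q})$; for ${\tt test}_{\mathsf{B}_i}$ I would apply the first identity of Lemma~\ref{apspecial} and observe that both the one-step join and the term $b_1$ fall below $\widehat{I}({\tt test}_{\mathsf{B}_i})$, because $\mathsf{B}_i$ contains no introduced predicate (so $\mwrc{\mathsf{B}_i}{\widehat{I},J'}{s} = \mwrc{\mathsf{B}_i}{I^*,J}{s}$) and $\widehat{I}({\tt test}_{\mathsf{B}_i})(\bar{x},r') \leq \widehat{I}({\tt test}_{\mathsf{B}_i})(\bar{x},r)$ whenever $r \leq r'$; and for ${\tt p}$ I would use the replacement rule ${\tt p}(\bar{\tt X}) \leftarrow {\tt test}_{\mathsf{B}_i}(\bar{\tt X},{\tt empty}_{\rho_{\tt R}})$ together with $\widehat{I}({\tt test}_{\mathsf{B}_i})(\bar{x},\bot_{\rho_{\tt R}}) = \bigvee\nolimits_{\leq_\bool}\{[\mwrc{\mathsf{B}_i}{I^*,J}{s[\bar{\tt X}/\bar{x}]}]_1 \mid s\}$, which is precisely the contribution of the existential rule ${\tt p}(\bar{\tt X}) \leftarrow \mathsf{B}_i[{\tt R}]$ to $A_\Prog$. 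After also checking $\widehat{I} \leq J'$, so that $(\widehat{I},J') \in H^c_{\prog{P}'}$, monotonicity gives $I' = \lfp A_{\prog{P}'}(\cdot,J')_1 \leq \widehat{I}$; but $\widehat{I}({\tt q}) = I^*({\tt q}) < I({\tt q}) = I'({\tt q})$, a contradiction.

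The second equation is entirely symmetric, with $J^* = \lfp A_\Prog(I,\cdot)_2 \leq J$ assumed strictly below $J$ and the lifted $\widehat{J}$ built from the second identity of Lemma~\ref{apspecial} and Lemma~\ref{IJSEQ}(2). The one new feature is that $\widehat{J}({\tt test}_{\mathsf{B}_i})$ must also carry the disjunct $I'({\tt test}_{\mathsf{B}_i})(\bar{x},r')$ produced by the redundant rule ${\tt test}_{\mathsf{B}_i}(\bar{\tt X},{\tt R}) \leftarrow {\tt test}_{\mathsf{B}_i}(\bar{\tt X},{\tt R})$; this disjunct is exactly what guarantees $\widehat{J} \geq I'$ and hence $(I',\widehat{J}) \in H^c_{\prog{P}'}$. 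The same prefixpoint check, now for $A_{\prog{P}'}(I',\cdot)_2$, then yields $J' = \lfp A_{\prog{P}'}(I',\cdot)_2 \leq \widehat{J}$, which contradicts $\widehat{J}({\tt q}) = J^*({\tt q}) < J({\tt q}) = J'({\tt q})$.

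The step I expect to be the main obstacle is the ${\tt test}_{\mathsf{B}_i}$ bookkeeping: one must show that the single-step recursion exposed by Lemma~\ref{apspecial} collapses exactly onto the closed form ranging over all $r' \geq r$, so that the lifted interpretation is genuinely a prefixpoint rather than something incomparable, and --- in the $J$-direction --- that the $I'({\tt test}_{\mathsf{B}_i})$ disjunct is simultaneously necessary for consistency ($\widehat{J} \geq I'$) and harmless in the fixpoint computation (using $[\mwrc{\mathsf{B}_i}{I',J^*}{s}]_1 \leq [\mwrc{\mathsf{B}_i}{I',J^*}{s}]_2$). Everything else reduces to the order-preservation facts of Lemma~\ref{theta-order-preserving} and to the fact that $\mathsf{B}_i$ mentions none of the newly introduced predicates.
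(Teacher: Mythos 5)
Your proposal is correct and follows essentially the same route as the paper's own proof: the paper likewise assumes $I^* = \lfp A_{\Prog}(\cdot,J)_1 < I$, lifts $I^*$ to an interpretation of $\prog{P}'$ (its $I^*_\theta$, the first component of $\tau(\theta(\tau^{-1}(I^*,J)))$, which is exactly your $\widehat{I}$ with the closed-form meaning for ${\tt test}_{\mathsf{B}_i}$), performs the same case analysis over the common predicates, ${\tt p}$, and ${\tt test}_{\mathsf{B}_i}$ via Lemma~\ref{apspecial}, and contradicts the minimality of $I'$. The only cosmetic differences are that the paper verifies $I^*_\theta$ is a full fixpoint of $A_{\prog{P}'}(\cdot,J')_1$ rather than just a prefixpoint, and it dismisses the $J$-direction as ``analogous and omitted'' whereas you actually sketch it, including the consistency role of the $I'({\tt test}_{\mathsf{B}_i})$ disjunct.
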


\begin{proof}
Let $\tau(\mathcal{I}') = (I', J')$ and $\tau(\theta^{-1}(\mathcal{I}')) = (I,J)$.
Since $(I',J')$ is a stable fixpoint of $A_{\prog{P}'}$, it follows by Corrollary~\ref{stable-meaning-of-testb} that
\begin{itemize}
\item  $I'({\tt test}_{\mathsf{B}_i})(\bar{x},r) =
\bigvee\nolimits_{\leq_\bool} \{ [\mwrc{\mathsf{B}_i}{I',J'}{s[\bar{\tt X}/\bar{x}, {\tt R}/r']}]_1
            \mid {r \leq r' }, s \in S_{\prog{P}'} \}$
\item  $J'({\tt test}_{\mathsf{B}_i})(\bar{x},r) =
\bigvee\nolimits_{\leq_\bool} \{ [\mwrc{\mathsf{B}_i}{I',J'}{s[\bar{\tt X}/\bar{x}, {\tt R}/r']}]_2
          \mid {r \leq r' }, s \in S_{\prog{P}'} \}$
\end{itemize}
and for any predicate ${\tt q}$ in $\Prog$ it is $I({\tt q})=I'({\tt q})$ and $J({\tt q})=J'({\tt q})$.
In addition, since $(I',J')$ is a stable fixpoint of $A_{\prog{P}'}$, it follows by Lemma~\ref{fixpoints}
that $(I,J)$ is a fixpoint of $A_\Prog$ and therefore $A_{\Prog}(\cdot,J)_1$
and $A_{\Prog}(I,\cdot)_2$
are well-defined monotone operators in the corresponding intervals.
It suffices to show that $I$ and $J$ are the least fixpoints,
\ie $I = \lfp \ATP(\cdot,J)_1$ and $J = \lfp \ATP(I,\cdot)_2$.


Let $I^* = \lfp A_{\Prog}(\cdot,J)_1$ and for the sake of contradiction assume it is $I^* < I$. 
Let ($I^*_\theta,J_\theta) = \tau(\theta(\tau^{-1}(I^*,J)))$. Specifically, for $I^*_\theta$ this means that
\begin{enumerate}
  \item for any common predicate ${\tt q}$ between $\Prog$ and $\prog{P}'$ it is 
$
{I^*_\theta}({\tt q}) = I^*({\tt q})
$ 
  \item for ${\tt test}_{\mathsf{B}_i}$ it is
${I^*_\theta}({\tt test}_{\mathsf{B}_i})({\bar{x}},r) =
  \bigvee\nolimits_{\leq_\bool} \{ [\mwrc{\mathsf{B}_i}{I^*,J}{s(\bar{\tt X}/\bar{x}, {\tt R}/r')} ]_1 \mid {r \leq r'}, s \in S_{\prog{P}'} \}
$.
\end{enumerate}
First, we show that $I^*_\theta < I'$. Indeed, 
for any predicate ${\tt q}$ other than ${\tt test_{\mathsf{B}_i}}$ we have $I^*_\theta({\tt q})=I^*({\tt q}) \leq I({\tt q})=I'({\tt q})$ and by 
our assumption
for at least one predicate ${\tt q'}$ the inequality is strict and  $I^*_\theta({\tt q'})=I^*({\tt q'}) < I({\tt q'})=I'({\tt q'})$. Furthermore,
since $I^* \leq I$ and because in the body $\mathsf{B}_i$ contains only predicates that appear in $P$ we have 
\begin{align*}
{I^*_\theta}({\tt test}_{\mathsf{B}_i})({\bar{x}},r) 
& = \bigvee\nolimits_{\leq_\bool} \{ [\mwrc{\mathsf{B}_i}{I^*,J}{s(\bar{\tt X}/\bar{x}, {\tt R}/r')} ]_1 \mid {r \leq r'}, s \in S_{\prog{P}'} \} \\
& \leq \bigvee\nolimits_{\leq_\bool} \{ [\mwrc{\mathsf{B}_i}{I,J}{s(\bar{\tt X}/\bar{x}, {\tt R}/r')} ]_1 \mid {r \leq r'}, s \in S_{\prog{P}'} \}\\
& \leq \bigvee\nolimits_{\leq_\bool} \{ [\mwrc{\mathsf{B}_i}{I',J'}{s(\bar{\tt X}/\bar{x}, {\tt R}/r')} ]_1 \mid {r \leq r'}, s \in S_{\prog{P}'} \} \\
& \leq I'({\tt test}_{\mathsf{B}_i})({\bar{x}},r) 
\end{align*}

We will show that $I^*_\theta$
is a fixpoint of $A_{\prog{P}'}(\cdot,J')_1$ which is a contradiction
since $I'$ is the least fixpoint. We proceed with case analysis on
the predicate ${\tt q}$. If ${\tt q}$ is not ${\tt p}$ or ${\tt test}_{\mathsf{B}_i}$ then
$A_{\prog{P}'}(I^*_\theta, J')_1({\tt q}) = A_{\prog{P}}(I^*,J)_1({\tt q})$
since the rules of ${\tt q}$ are the same in both $\Prog$ and $\prog{P}'$
and the predicates appearing in the bodies of these rules have 
the same meaning in $I^*$ and $I^*_\theta$ and the same meaning in $J'$ and $J$.
Since $I^*$ is a fixpoint of $A_{\prog{P}}$, it follows that
$A_{\prog{P}}(I^*, J)_1({\tt q}) = I^*({\tt q})=I^*_\theta({\tt q})$. Therefore,
$A_{\prog{P}'}(I^*_\theta, J')_1({\tt q}) = I^*_\theta({\tt q})$.
If ${\tt q}$ is ${\tt p}$ then,
\[
A_{\prog{P}'}(I^*_\theta, J')_1({\tt p})(\bar{x}) =
\bigvee\nolimits_{\leq_\bool}\{ [\mwrc{\mathsf{B}_j}{I^*_\theta,J'}{s[\bar{\tt X}/\bar{x}]}]_1 \mid j \neq i, s \in S_{\prog{P}'} \}
\vee I^*_\theta({\tt test}_{\mathsf{B}_i})(\bar{x}, \bot_{\rho_{\tt R}})
\]
By the definition of $I^*_\theta$ and since in each $\mathsf{B}_j$ 
the predicates appearing have the same meaning in $I^*$ and $I^*_\theta$ and the same meaning in $J'$ and $J$,
$I^*_\theta({\tt test}_{\mathsf{B}_i})(\bar{x}, \bot_{\rho_{\tt R}}) = \bigvee\nolimits_{\leq_\bool}\{ [\mwrc{\mathsf{B}_i}{I^*,J}{s[\bar{\tt X}/\bar{x}]}]_1
\mid s \in S_{\prog{P}'} \}= \bigvee\nolimits_{\leq_\bool}\{ [\mwrc{\mathsf{B}_i}{I^*_\theta,J'}{s[\bar{\tt X}/\bar{x}]}]_1
\mid s \in S_{\prog{P}'} \}$.
Therefore, we can rewrite the above as:
\[
A_{\prog{P}'}(I^*_\theta, J')_1({\tt p})(\bar{x}) =
\bigvee\nolimits_{\leq_\bool}\{ [\mwrc{\mathsf{B}_j}{I^*_\theta,J'}{s[\bar{\tt X}/\bar{x}]}]_1 \mid s \in S_{\prog{P}'} \}
\]
Since $\mathsf{B}_j$ does not contain ${\tt test}_{\mathsf{B}_i}$ and $S_\Prog = S_{\prog{P}'}$
it follows that
$[\mwrc{\mathsf{B}_j}{I^*_\theta,J'}{s[\bar{\tt X}/\bar{x}]}]_1 = [\mwrc{\mathsf{B}_j}{I^*,J}{s[\bar{\tt X}/\bar{x}]}]_1$ and so
$A_{\prog{P}'}(I^*_\theta, J')_1({\tt p})(\bar{x}) =
A_{\prog{P}}(I^*, J)_1({\tt p})(\bar{x})$ and since $I^*$ is a fixpoint
of $A_{\prog{P}}$, $A_{\prog{P}'}(I^*_\theta, J')_1({\tt p})(\bar{x})=
 A_{\prog{P}}(I^*, J)_1({\tt p})(\bar{x}) = I^*({\tt p})(\bar{x}) = I^*_\theta({\tt p})(\bar{x})$.
Finally, for ${\tt q}$ being ${\tt test}_{\mathsf{B}_i}$ we have
\begin{align*}
A_{\prog{P}'}(I^*_\theta,J')_1({\tt test}_{\mathsf{B}_i})(\bar{x},r)
& = \bigvee\nolimits_{\leq_\bool} \{ I^*_\theta({\tt test}_{\mathsf{B}_i})(\bar{x},r') \mid r\leq r', {|r'|\leq |r|+1} \} \\ 
& \ \ \ \ \vee 
  \bigvee\nolimits_{\leq_\bool}\{[\mwrc{\mathsf{B}_i}{I^*_\theta,J'}{s[\bar{\tt X}/\bar{x}, {\tt R}/r]}]_1 \mid  s \in S_{\prog{P}'} \}\\
& = \bigvee\nolimits_{\leq_\bool} \{ [\mwrc{\mathsf{B}_i}{I^*,J}{s(\bar{\tt X}/\bar{x}, {\tt R}/r')} ]_1 \mid {r \leq r'}, s \in S_{\prog{P}'} \}\\
& = I^*_\theta({\tt test}_{\mathsf{B}_i})(\bar{x},r')
\end{align*}
where we used the definition of $A_{\prog{P}'}$ for ${\tt test}_{\mathsf{B}_i}$ and replaced $I^*_\theta({\tt test}_{\mathsf{B}_i})$ 
in the second derivation. 
This establishes that $I^*_\theta$ is a fixpoint of $A_{\prog{P}'}(\cdot,J')_1$, which is a contradiction since $I^*_\theta < I'$
and $I'$ is its least fixpoint.

The case for $J$ is analogous to the case for $I$ and omitted.
\end{proof}

\begin{lemma}\label{same-three-valued-stable}
Let $S_3$ be the set of \emph{three-valued} stable models of $\Prog$ and let $S_3'$ be the set of \emph{three-valued} stable models of $\prog{P}'$. 
Then $\theta$ defines a bijection between $S_3$ and $S_3'$ with inverse $\theta^{-1}$. Let 
$S_2$ be the set of stable models of $\Prog$ and let $S_2'$ be the set of stable models of $\prog{P}'$.
Then $\theta$ also defines a bijection between $S_2$ and $S_2'$ with inverse $\theta^{-1}$.
\end{lemma}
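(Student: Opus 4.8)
The plan is to assemble both bijections from the two directional results already in hand, Lemma~\ref{rightdirection} and Lemma~\ref{leftdirection}, by checking that $\theta$ and $\theta^{-1}$ are mutually inverse; I would dispatch the three-valued case first, after which the two-valued case follows almost for free. For the three-valued claim, the two directional lemmas immediately give $\theta(S_3)\subseteq S_3'$ and $\theta^{-1}(S_3')\subseteq S_3$, so it only remains to verify the two round-trip identities. The identity $\theta^{-1}(\theta(\mathcal{I}))=\mathcal{I}$ for $\mathcal{I}\in S_3$ is immediate, since $\theta$ leaves the interpretation of every predicate of $\Prog$ unchanged while $\theta^{-1}$ is exactly the projection onto those predicates; the composite is therefore the identity on every interpretation of $\Prog$, not only the stable ones.

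The converse identity $\theta(\theta^{-1}(\mathcal{I}'))=\mathcal{I}'$ for $\mathcal{I}'\in S_3'$ is the only point that needs real work. The two interpretations agree on all predicates of $\Prog$ by construction, so it suffices to check agreement on the auxiliary predicates introduced by the transformation. For ${\tt empty}_{\rho_{\tt R}}$, ${\tt add}_{\rho_{\tt R}}$, ${\tt eq}_{\rho_{{\tt Z}_j}}$ and ${\tt neq}_{\rho_{{\tt Z}_j}}$ this is exactly the earlier proposition asserting that every three-valued stable model assigns these predicates their canonical meanings---precisely the meanings $\theta$ reconstructs---and for ${\tt test}_{\mathsf{B}_i}$ it is Corollary~\ref{stable-meaning-of-testb}, whose formulas for $I({\tt test}_{\mathsf{B}_i})$ and $J({\tt test}_{\mathsf{B}_i})$ coincide with the value $\theta$ assigns (expressed through $\theta^{-1}(\mathcal{I}')$, which agrees with $\mathcal{I}'$ on the $\Prog$-predicates occurring in $\mathsf{B}_i$). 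Together the two identities show that $\theta\colon S_3\to S_3'$ is a bijection with inverse $\theta^{-1}$.

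For the two-valued case I would invoke Definition~\ref{def:AFTsemantics}: a stable model is precisely a three-valued stable model $\tau^{-1}(I,J)$ with $I=J$, equivalently a three-valued stable model that happens to be two-valued, so $S_2=\{\mathcal{I}\in S_3 : \mathcal{I}\text{ is two-valued}\}$ and likewise for $S_2'$. By Lemma~\ref{theta-properties}(5) both $\theta$ and $\theta^{-1}$ preserve two-valuedness, hence $\theta$ restricts to a map $S_2\to S_2'$ and $\theta^{-1}$ to a map $S_2'\to S_2$; since $S_2\subseteq S_3$ and $S_2'\subseteq S_3'$, the round-trip identities already proved apply verbatim, yielding the bijection between $S_2$ and $S_2'$.

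The main obstacle is the single identity $\theta(\theta^{-1}(\mathcal{I}'))=\mathcal{I}'$, that is, verifying that every three-valued stable model of $\prog{P}'$ already pins down the canonical meaning of each auxiliary predicate, most delicately of ${\tt test}_{\mathsf{B}_i}$. All the genuine difficulty there has been front-loaded into the earlier proposition and into Corollary~\ref{stable-meaning-of-testb} (which itself rests on Lemma~\ref{IJSEQ}), so at this stage the argument is essentially an assembly of those results together with routine bookkeeping.
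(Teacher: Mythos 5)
Your proposal is correct and follows essentially the same route as the paper: the paper's proof is a one-liner citing Lemma~\ref{rightdirection}, Lemma~\ref{leftdirection} and Lemma~\ref{fixpoints}, and your argument simply unpacks that assembly—the two directional lemmas for the inclusions, the round-trip identities (trivial in one direction, and in the other exactly the content of Lemma~\ref{theta-properties}(1)--(2), justified as you do via the proposition on the auxiliary predicates and Corollary~\ref{stable-meaning-of-testb}), and item (5) for restricting to the two-valued case. No gaps; you have just made explicit the bookkeeping the paper leaves implicit.
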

\begin{proof}
Consequence of Lemma~\ref{rightdirection}, Lemma~\ref{leftdirection} and Lemma~\ref{fixpoints}.
\end{proof}

\begin{lemma}\label{same-wfs}
Let $\mathcal{M}_w$ be the well-founded model of $\Prog$ and let $\mathcal{M}_w'$ be the well-founded model of $\prog{P}'$. Then 
$\mathcal{M}_w = \theta^{-1}(\mathcal{M}_w')$ and $\mathcal{M}'_w = \theta(\mathcal{M}_w)$.
\end{lemma}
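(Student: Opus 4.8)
The plan is to reduce the statement to the fact, established in
Lemma~\ref{same-three-valued-stable}, that $\theta$ is a bijection between the set
$S_3$ of three-valued stable models of $\Prog$ and the set $S_3'$ of three-valued stable
models of $\prog{P}'$, with inverse $\theta^{-1}$. Recall that by
Definition~\ref{def:AFTsemantics} the well-founded model of a program is precisely the
$\preceq$-least three-valued stable model; thus $\mathcal{M}_w$ is the $\preceq$-least
element of $S_3$ and $\mathcal{M}'_w$ is the $\preceq$-least element of $S_3'$. My first
step would be to combine the bijection of Lemma~\ref{same-three-valued-stable} with
parts~(3) and~(4) of Lemma~\ref{theta-order-preserving}, which guarantee that both
$\theta$ and $\theta^{-1}$ preserve $\preceq$; together these say that $\theta$ restricts
to a $\preceq$-order isomorphism between the posets $(S_3,\preceq)$ and $(S_3',\preceq)$.

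The core of the argument is then the standard observation that an order isomorphism maps
a least element to a least element, which I would spell out by a squeeze. Since
$\theta(\mathcal{M}_w)\in S_3'$ and $\mathcal{M}'_w$ is $\preceq$-least in $S_3'$, we have
$\mathcal{M}'_w \preceq \theta(\mathcal{M}_w)$; applying the $\preceq$-monotone map
$\theta^{-1}$ and using $\theta^{-1}(\theta(\mathcal{M}_w)) = \mathcal{M}_w$ (a consequence
of the bijection) yields $\theta^{-1}(\mathcal{M}'_w) \preceq \mathcal{M}_w$. Conversely,
$\theta^{-1}(\mathcal{M}'_w) \in S_3$ and $\mathcal{M}_w$ is $\preceq$-least in $S_3$, so
$\mathcal{M}_w \preceq \theta^{-1}(\mathcal{M}'_w)$. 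By antisymmetry of $\preceq$ these two
inequalities force $\theta^{-1}(\mathcal{M}'_w) = \mathcal{M}_w$, and applying $\theta$ to
both sides gives $\mathcal{M}'_w = \theta(\mathcal{M}_w)$; the two displayed equalities of
the lemma follow at once.

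I do not expect a genuine obstacle here, as all the substantial work has already been
carried out in Lemmas~\ref{rightdirection}, \ref{leftdirection} and
\ref{same-three-valued-stable}. The only point deserving a line of care is the implicit
appeal to the existence and uniqueness of the $\preceq$-least three-valued stable model
for \emph{both} programs, so that $\mathcal{M}_w$ and $\mathcal{M}'_w$ are genuinely
well-defined targets of the squeeze; this is exactly what makes
Definition~\ref{def:AFTsemantics} legitimate and is supplied by the underlying
Approximation Fixpoint Theory, so it may simply be invoked.
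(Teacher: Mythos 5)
Your proposal is correct and follows essentially the same route as the paper's own proof: both invoke Lemma~\ref{same-three-valued-stable} for the bijection, Lemma~\ref{theta-order-preserving} for $\preceq$-preservation, and the characterization of the well-founded model as the $\preceq$-least three-valued stable model. The only difference is that you spell out the squeeze argument explicitly, whereas the paper states that the claim "follows immediately" from these facts.
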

\begin{proof}
By Lemma~\ref{same-three-valued-stable}, $\theta,\theta^{-1}$ define a bijection between the sets of three-valued stable models and by
Lemma~\ref{theta-order-preserving}, they preserve the $\preceq$-order. Since the well-founded model is the $\preceq$-least 
\emph{three-valued} stable model the claim follows immediately.
\end{proof}


\begin{retheorem}{transformation-correctness-query}
Let $\Prog$ be a Higher-Order Datalog$^\neg$ program that defines a query
$\mathcal{Q}$ under
the well-founded semantics (resp. stable model semantics with cautious reasoning, stable model semantics with brave reasoning). 
Let $\prog{P}'$ be the program that results by applying the aforementioned
transformation to some rule of $\Prog$. Then, $\prog{P}'$ defines the same query $\mathcal{Q}$
under the well-founded semantics (resp. stable model semantics with cautious reasoning, stable model semantics with brave reasoning).
\end{retheorem}

\begin{proof}
Since the described transformation acts only on the query program it is clear that
for any input database $D$ the transformation of $\Prog \cup D$ results in $\prog{P}' \cup D$ and thus the bijection 
described in Lemma~\ref{same-three-valued-stable} holds for the stable models of the enhanced programs with the facts supplied from the database.
Furthermore, $\prog{P'} \cup D$  has a non-empty set of stable models if and only if 
$\Prog \cup D$ has a non-empty set of stable models.

For any predicate ${\tt q}$ of $\Prog$ we show that it is $\mathcal{M}({\tt q}) (\bar{a})=\mtrue$ for every stable model $\mathcal{M}$ of $\Prog \cup D$ 
if and only if  $\mathcal{M}'({\tt q}) (\bar{a})=\mtrue$ for every 
stable model $\mathcal{M}'$  of $\prog{P}' \cup D$, which establishes equivalence under stable model semantics and cautious reasoning.
Assume for the sake of contradiction that there exists a stable model $\mathcal{M}'$ of $\prog{P}' \cup D$ such that ${\mathcal{M}'}({\tt q}) (\bar{a})=\mfalse$.
By Lemma~\ref{same-three-valued-stable} then $\theta^{-1}(\mathcal{M}')$ is a stable model for $\Prog \cup D$ and 
${\theta^{-1}(\mathcal{M}')}({\tt q}) (\bar{a})=\mfalse$ (contradiction).
Likewise, assume that there exists a stable model $\mathcal{M}$ of $\Prog \cup D$ such that ${\mathcal{M}}({\tt q}) (\bar{a})=\mfalse$. 
By Lemma~\ref{same-three-valued-stable} then $\theta(\mathcal{M})$ is a stable model for $\prog{P}' \cup D$ and 
${\theta(\mathcal{M})}({\tt q}) (\bar{a})=\mfalse$ (contradiction).

Similarly, it is easy to show that there exists a stable model $\mathcal{M}$ of $\Prog \cup D$ such that $\mathcal{M}({\tt q}) (\bar{a})=\mtrue$ if and only
if there exists a stable model $\mathcal{M}'$ of $\prog{P}'$ such that ${\mathcal{M}'}({\tt q}) (\bar{a})=\mtrue$, thus establishing equivalence under
stable model semantics and brave reasoning.

Finally, we can easily show that $\mathcal{M}_w({\tt q}) (\bar{a})=\mtrue$ in the well-founded model $\mathcal{M}_w$ of $\Prog \cup D$ if and only if
$\mathcal{M}_w({\tt q}) (\bar{a})=\mtrue$ in the well-founded model $\mathcal{M}_w'$ of $\prog{P}' \cup D$. This claim follows from Lemma~\ref{same-wfs}.
\end{proof}
\fi

\end{document}